\newcolumntype{.}{D{.}{.}{-1}}
\newcolumntype{d}[1]{D{.}{.}{#1}}
\pgfplotsset{compat=1.18}
\newtheorem{theorem}{Theorem}[section]
\newtheorem{proposition}[theorem]{Proposition}
\newtheorem{assumption}[theorem]{Assumption}
\newcommand\independent{\protect\mathpalette{\protect\independenT}{\perp}}
\def\independenT#1#2{\mathrel{\rlap{$#1#2$}\mkern2mu{#1#2}}}
\newcommand*{\addFileDependency}[1]{
  \typeout{(#1)}
  \@addtofilelist{#1}
  \IfFileExists{#1}{}{\typeout{No file #1.}}
}
\def\spacingset#1{\renewcommand{\baselinestretch}%
{#1}\small\normalsize} \spacingset{1}
\newcommand{\blind}{1}
\begin{document}

\def\spacingset#1{\renewcommand{\baselinestretch}%
{#1}\small\normalsize} \spacingset{1}


\if1\blind
{
  \title{\bf Nonparametric Causal Inference for Optogenetics: Sequential Excursion Effects for Dynamic Regimes}
  \author{Gabriel Loewinger\thanks{G.L. and A.W.L. contributed equally to this work.} \thanks{National Institute of Mental Health, NIH. Email: gloewinger@gmail.com},
\and Alexander W. Levis$^*$\thanks{Postdoctoral Researcher, Carnegie Mellon University. Email: alevis@cmu.edu},
\and
Francisco Pereira\thanks{National Institute of Mental Health, NIH.}
}
  \maketitle
} \fi

\if0\blind
{
  \bigskip
  \bigskip
  \bigskip
  \begin{center}
    {\LARGE\bf Nonparametric Causal Inference for Optogenetics: Sequential Excursion Effects for Dynamic Regimes}
\end{center}
  \medskip
} \fi

\bigskip
\begin{abstract}
Optogenetics is a powerful neuroscience technique for studying how neural circuit manipulation affects behavior. Standard analysis conventions discard information and severely limit the scope of the causal questions that can be probed. To address this gap, we 1) draw connections to the causal inference literature on sequentially randomized experiments, 2) propose non-parametric {\color{black} estimators} for analyzing ``open-loop'' (static regime) optogenetics behavioral experiments, 3) derive extensions of history-restricted marginal structural models for dynamic treatment regimes with positivity violations for ``closed-loop'' designs, and 4) propose a taxonomy of identifiable causal effects that encompass a far richer collection of scientific questions compared to standard methods. From another view, our work {\color{black} builds upon} ``excursion effect'' methods---popularized recently in the mobile health literature---to estimate causal contrasts for {\color{black} general} treatment sequences in the presence of positivity violations. We describe sufficient conditions for identifiability of the proposed causal estimands, and provide asymptotic statistical guarantees for a proposed inverse probability-weighted estimator, a multiply-robust estimator (for two intervention timepoints), a hypothesis testing {\color{black} procedure}, and a computationally scalable implementation. Finally, we apply our methodology to data from a recent neuroscience study and show how it provides insight into causal effects of optogenetics on behavior that are obscured by standard analyses. 
\end{abstract}

\noindent%
{\it Keywords:}  marginal structural models, optogenetics, excursion effects, dynamic treatment regimes, 
micro-randomized trials, sequentially randomized experiments, neuroscience
\vfill

\newpage
\spacingset{1.9} 
\section{Introduction} \label{sec:intro}
Optogenetics is a neuroscience technique to ``turn on/off'' neurons \textit{in vivo} in real-time, with millisecond time resolution. It works by shining lasers on neurons that have been genetically modified through viral infection to express a light-sensitive protein. It is one of the most popular neuroscience assays, with roughly 700 articles referencing it in 2023 alone.\footnote{
699 papers on Web of Science mention ``optogenetics'' published in 2023 and between 526-796 per year in 2015-2023.} 
It is commonly applied in behavioral tasks that are composed of a sequence of timepoints or decision points {\color{black} (referred to as ``trials'' in the neuroscience literature),} $t \in \{1,2,...,T\}$, each of which involves presentation of stimuli and an opportunity for a behavioral response. For example, a {\color{black} timepoint} might begin with a cue (e.g., a light), which indicates that a lever press will trigger delivery of a food reward. Investigators might want to know, for instance, whether applying an optogenetic stimulation {\color{black} treatment} on a random subset of timepoints alters the rate at which mice press the lever. On {\color{black} timepoint} \textit{t}, an animal's {\color{black} longitudinal} outcome, $Y_t$, time-varying covariates, $X_t$, and {\color{black} sequentially randomized} (optogenetic) treatment indicator, $A_t$, are observed. Experiment{\color{black}al designs} often include both treatment ($G=1$) and control ($G=0$) {\color{black} arms (often referred to as ``groups'' in the neuroscience literature \citep{spont_da})}, with animals assigned randomly to {\color{black} one arm at baseline}. While laser stimulation is often applied on a random\footnote{Some studies deterministically set the sequence of laser/no-laser decisions but we focus on ``stochastic'' experimental designs.} subset of {\color{black} timepoints} in both arms {\color{black} (i.e., $\mathbb{P}(A_t=1 \mid G = 1)>0$ and $\mathbb{P}(A_t=1 \mid G = 0)>0$)}, 
only treatment arm animals express the protein that enables the laser to trigger the target neural response. 
{\color{black} This is analogous to a clinical trial with sequential randomization of pill delivery/no-delivery $A_t$ (at each $t$), where treatment arm participants ($G=1$) receive the active treatment on $A_t=1$ timepoints, and control arm subjects ($G=0$) receive an inactive ``sugar pill'' on $A_t=1$ timepoints.} 
To answer the question above, investigators often estimate the effect of optogenetic manipulation through  comparisons such as $\psi_t = \mathbb{E} [Y_{t} \mid G = 1] - \mathbb{E} [Y_{t} \mid G = 0]$, i.e., the difference between the average outcomes in the treatment and control arms.
%
It is common to test whether $\psi_t = 0$ at specific timepoints like the end of the study ($t = T$), or to conduct inference on summaries (e.g., $\bar{\psi} =\frac{1}{T} \sum_t \psi_t$). These \textit{between-arm} comparisons assess the intervention impact based on long-term, ``macro''/``global'' longitudinal effects. 
Such macro effects marginalize over the \textit{observed} treatment sequence, $\{A_t\}_{t=1}^T$. When $A_t$ is randomized at each $t$, the sequences $\{A_t\}_{t=1}^T$ may be vary highly across subjects in the timing (i.e., $\{t:A_t=1\}$) or number (i.e.,  $\sum_t A_t$) of stimulations.
In studies with stochastic experimental regimes, \textit{within-arm} comparisons between laser and no-laser {\color{black} timepoints} are also common (e.g., $\tilde{\psi} = \sum_t\left\{\mathbb{E} [ Y_{t} \mid A_t=1,~G = 1] - \mathbb{E} [ Y_{t} \mid A_t=0,~G = 1]\right\}$).
Importantly, such comparisons do not probe
\textit{within-arm} ``micro’’/``local'' longitudinal effects related to specific stimulation sequence patterns. For example, one might ask whether there is a dose-dependent relationship between the outcome and the number of stimulations in the last five {\color{black} timepoints}.
Figures~\ref{fig:estimands}A-C 
show some representative micro effects that are identifiable in many optogenetics studies, yet typically are not explored. Such effects may be present even in studies in which one fails to detect the macro effects commonly tested. 
However, no formal causal inference framework has been applied to these studies, resulting in analysis conventions that limit the scope of questions researchers can ask.

Furthermore, certain experimental designs can complicate the use and interpretation of even the standard analysis approaches. In ``dynamic regime'' designs (referred to as ``closed-loop'' in the neuroscience literature), stimulation is applied depending on the behavior of the animal.
For example, say a study tests if lever pressing for food, $Y_t$, decreases if optogenetic treatment is applied ($A_t = 1$), with positive probability, only when animals approach the lever ($X_t = 1$). Since $A_t$ is randomized conditional on $X_t$, any valid analysis must incorporate $X_t$. However, standard strategies like including $X_t$ as a covariate in a regression can bias causal effect estimates because 
1) $X_t$ influences the probability of {\em both} future outcomes and treatment, {\color{black}$A_t$}, and thus can be cast as a time-varying confounder, and 2) $X_t$ \textit{also} mediates the effect of prior treatments, {\color{black} $\{A_j : j < t\}$} \citep{robins1986, hernan_causal_2023} (see the associated DAG in Figure~\ref{fig:estimands}D).
In other words, closed-loop designs induce ``treatment–confounder feedback'' \citep{hernan_causal_2023}, which can lead to bias
with standard analyses. We include an example in Appendix~\ref{app:conf-ill} to show how, when {\color{black} $A_t$} has opposing effects on $Y_t$ and $X_t$, 
treatment {\color{black} ($G=1$)} and control {\color{black} ($G=0$)}  arms can exhibit \textit{identical} average (observed) outcome levels even if the laser {\color{black} (i.e., $A_t$)} causes a large immediate effect {\color{black} (i.e., on $Y_t$)}. 
Furthermore, standard regression approaches 
can actually induce collider-bias, or block mediators of the treatment effect \citep{hernan_causal_2023}. Finally, if treatment policies deterministically rule out treatment (e.g., $\mathbb{P}(A_t=1 \mid X_t = 0) = 0$), certain effects are not identifiable: the positivity violation inherent to these experimental designs precludes estimation of certain counterfactual distributions. Closed-loop designs therefore require specialized approaches for valid causal inference.

\begin{figure*} 
	\centering
\includegraphics[width=0.99\linewidth]{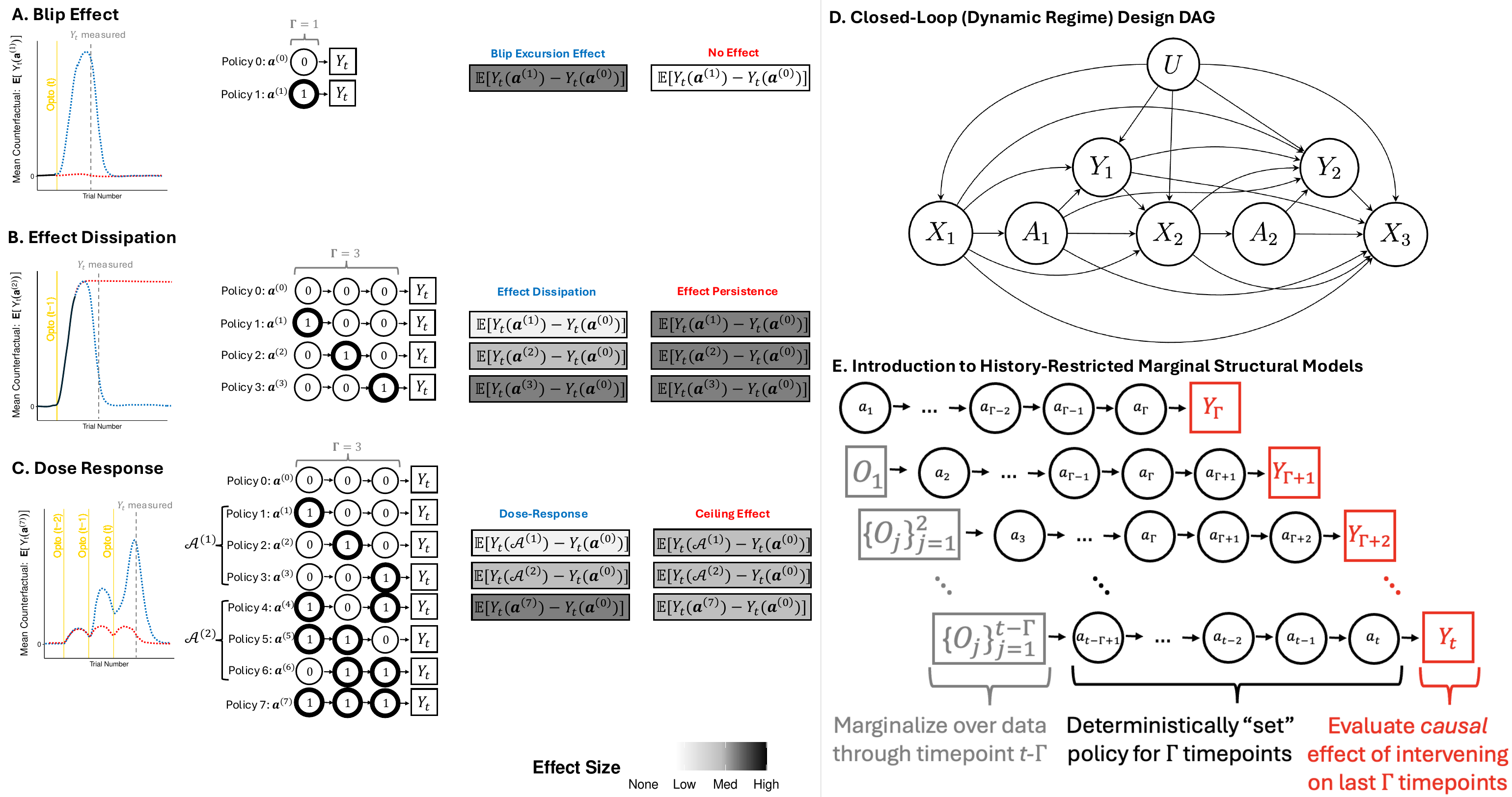}
\caption{\footnotesize\textbf{ \textit{Sequential Excursion Effects}.} [A]-[C] The left panels show one setting where a sequence of laser simulations \color{blue} do \color{black} or \color{black} do not \color{black} have the indicated effect on the outcome. The middle panel shows deterministic static policies that could be used to construct a causal contrast to probe the effect. The right panel shows what the anticipated effect size (darker is larger) of the contrast might be if the effect \color{blue} was \color{black} or \color{black} was not \color{black} present. [A] \textbf{Blip Effect}: the effect of a single stimulation vs. no treatment on a recent {\color{black} timepoint}. [B] \textbf{Effect Dissipation}: Whether the effect of a single stimulation causes an effect that rises and dissipates after a few {\color{black} timepoints}, or persists. [C] \textbf{Dose Response}: Do successive simulations increase the response in a dose-dependent fashion? We denote $\mathcal{A}^{(j)}$ as the set of treatment sequences that have dose $j$ and $\mathbb{E}[Y_t(\mathcal{A}^{(j)})] \coloneqq |\mathcal{A}^{(j)}|^{-1} \sum_{\boldsymbol{a}^{(l)} \in \mathcal{A}^{(j)}} \mathbb{E}[Y_t(\boldsymbol{a}^{(l)})]$.
[D] Closed-loop design DAG for two {\color{black} timepoints}. $U$ represents unmeasured variables. 
[E] HR-MSM illustration inspired by figure in \cite{guo2021}. 
} 
\label{fig:estimands}
\end{figure*}

More broadly, there have been numerous high profile calls for more rigorous 
causal inference in neuroscience \citep{barack2022call, ross2024causation, biswas2022statistical, kording_quasi}. However, existing methods 
\citep{pospisil2023, jiang2023instrumental, lepperod2023} focus on instrumental variable-based approaches to estimate causal effects of optogenetics on neural activity. Unlike our setting, these methods are restricted to datasets with
measurements of the activity of the stimulated neurons, and the neurons they interact with. 
We focus on behavioral outcomes and explicitly deal with sequentially randomized designs, whereas prior work treats each {\color{black} timepoint} as exchangeable.


To fill these gaps, we (1) propose the first, to our knowledge, formal counterfactual-based causal framing of these behavioral {\color{black} experiments}, (2) make study design recommendations by contextualizing optogenetic {\color{black} studies} within the sequentially randomized experiment causal inference literature, (3) propose an analysis {\color{black} approach} based on history-restricted marginal structural models that enables the estimation of ``sequential excursion effects'' to capture the local causal contrasts described above, (4) {\color{black} build upon} excursion effect methodology to accommodate dynamic regimes in cases where experimental designs violate positivity assumptions, and treatment sequences greater than length one, and (5) apply our approach to data from a \textit{Nature} paper, and show it reveals effects obscured by standard methods.

Our paper is organized as follows. We review relevant causal inference literature in Section~\ref{sec:prelim}. We 
present a formal counterfactual framework for causal inference in optogenetic experiments
and introduce our proposed methodology in Section~\ref{sec:prop_estimands}. 
We provide theoretical guarantees for identification and estimation in Section~\ref{sec:ident-est}. Sections~\ref{sec:sims} and~\ref{sec:application} contain simulation results and a real data application. We make recommendations for optogenetics experimental designs in Section~\ref{sec:recs-opto}. Finally, we provide concluding remarks in Section~\ref{sec:discuss}.

\section{Preliminaries} \label{sec:prelim}

\subsection{Notation, Designs and Terminology}\label{sec:notation}
\paragraph{Notation}
Let $\mathcal{O}_{t} = \{X_{t}, A_{t}, Y_{t}\}$ be the vector of \textit{observed} variables for an animal on {\color{black} timepoint} $t$, namely $X_{t}$ denotes time-varying covariates, $A_{t}$ is a binary indicator of (optogenetic) treatment, and $Y_{t}$ denotes the behavioral outcome. We denote $T$ as the number of {\color{black} timepoints} and write 
$[T] = \{1,2,...,T\}$. A sample of subjects $[n]$
is collected but, as subjects are exchangeable, we often suppress indices to reduce notational burden. We express \textit{counterfactual} variables, or \textit{potential outcomes}, with parentheses. For example, $Y_t(\overline{a}_t)$, represents the potential outcome that would be observed at $t$ if a subject received the treatment sequence, $\overline{a}_t = (a_1,\ldots,a_{t})$. Overbars represent all history 
up to and including a specific {\color{black} timepoint}. For instance, $\overline{B}_j = (B_1, \ldots, B_j)$, for any sequence of variables $\{B_t\}_{t = 1}^{T}$, and $j \in [T]$. Finally, we define the history ${H}_t = (\overline{X}_t, \overline{A}_{t-1}, \overline{Y}_{t - 1})$, so $H_t$ includes all information {\em prior} to the treatment ``decision'' at $t$, together with the covariates observed at $t$. 


\paragraph{Optogenetics Experimental Designs}\label{sec:exp_design}

Many optogenetics studies collect data with \textit{stochastic} experimental designs (e.g., treat at {\color{black} timepoint} $t$ with probability $\mathbb{P}(A_t = 1) = 0.5$).
Both \textit{dynamic} and \textit{static} stochastic designs are common. Regimes are considered \textit{dynamic} if they set treatment probabilities as a function of the history $H_t$, and \textit{static} if they assign treatment independent of $H_t$. In the neuroscience literature, dynamic and static regimes are usually referred to as ``closed loop'' and ``open loop,'' respectively.
Dynamic experimental policies are common in neuroscience, as many scientific questions bear expressly on the role of treatment within levels of time-varying covariates. In addition to the behavioral criteria mentioned above, researchers might apply optogenetic manipulation with positive probability only when certain measured neural activity patterns are observed. 

\subsection{Background and Relevant Literature}
In sequentially randomized experiments, \textit{marginal structural models} (MSMs) are often used to model $\mathbb{E}[Y_t(\overline{a}_t)]$ \citep{msm2, msm3}, the expected potential outcome that would be observed at {\color{black} timepoint} $t$ if a subject received the treatment sequence  $\overline{a}_t$.
By placing structure on $\mathbb{E}[Y_t(\overline{a}_t)]$, MSMs can borrow strength across treatment sequences to increase power when there are many timepoints. MSMs are often fit by using generalized estimating equations (GEE) with inverse probability of treatment weighting (IPW). 
Even when treatment probabilities are known, however, IP weights for timepoint $t$ are calculated as the product of $t$ treatment probabilities, and thus can grow unstable computationally \citep{hr_msm2007}. For static treatments sequences, stabilized weights are often used to address this problem \citep{hernan2002}. \citet{orellana2010} and \citet{petersen2014} proposed MSMs for classes of dynamic treatment regimes that may be chosen to be compatible with positivity violations in a given context. Unfortunately, these methods suffer even more from IP weight instability when $T$ is large, as there are fewer options for weight stabilization \citep{petersen2014}.

When the number of timepoints is high, \textit{history-restricted} MSMs \citep{hr_msm2007} (HR-MSMs), can be applied to instead model {\color{black} $\mathbb{E}[Y_t(\overline{A}_{t-\Gamma}, \boldsymbol{a}_{\Gamma, t})]$} for a window of $\Gamma$ timepoints  $\boldsymbol{a}_{\Gamma,t} = (a_{t-\Gamma+1},\ldots,a_t)$, typically with $\Gamma \ll t$. {\color{black} Note that we adopt bold-faced symbols for this window of interventions to distinguish from a \textit{full history} of treatment values (for which we use overbars).}
That is, 
HR-MSMs model the mean counterfactual outcome at time $t$, under an intervention defined on a proximal (often short) treatment sequence. 
As {\color{black}$\mathbb{E}[Y_t(\overline{A}_{t-\Gamma}, \boldsymbol{a}_{\Gamma, t})] = \mathbb{E}[Y_t(\boldsymbol{a}_{\Gamma, t})]$}, by a consistency assumption, these estimands implicitly marginalize over the observed treatment sequence, $\overline{A}_{t-\Gamma}$, prior to the first point of intervention.
We emphasize that Markov-like assumptions are made with our {\color{black} approach} only if they follow directly from the experimental design: in general, HR-MSMs (and thus our proposed methods) allow for $X_t$, $Y_t$ to be causally affected by \textit{all} prior {\color{black} timepoints} (i.e., $\mathcal{O}_j$ for $j \in [t-1]$). 
By placing structure on $\mathbb{E}[Y_t(\boldsymbol{a}_{\Gamma, t})]$,
the HR-MSM can borrow strength across treatment sequences $\boldsymbol{a}_{\Gamma, t}$, which can increase power. Figure~\ref{fig:estimands}E provides a graphical illustration of the causal effects targeted in HR-MSMs. As with traditional MSMs, HR-MSMs are typically fit by using an IP weighted GEE. Conceptually, IPW constructs a pseudo-population in which, under standard causal assumptions, sample averages target mean counterfactuals \citep{hernan_causal_2023}. Conversely, standard regression techniques generally yield biased estimates of causal effects as 1) failure to condition on time-varying confounders, $X_t$, biases estimates, as treatment is randomized conditional on $X_t$ in closed-loop designs, but 2) conditioning on $X_t$ induces confounding, as $X_t$ are colliders on the path between past treatments and subsequent outcomes, through unmeasured confounders, $U$, as shown in the DAG in  Figure~\ref{fig:estimands}E \citep{hernan_causal_2023}. 
HR-MSMs can also incorporate time-varying effect modifiers to test, for example, whether causal effects vary across {\color{black} timepoints}, or covariate levels \citep{hr_msm_review}. While the names \textit{history-restricted} and \textit{history-adjusted} are often used interchangeably in the literature, we adopt the former, following the convention described in \cite{guo2021}. Machine learning methods including causal transformers \citep{melnychuk2022}, counterfactual recurrent networks \citep{bica2020}, and recurrent marginal structural networks \citep{lim2018} are comparable to HR-MSMs that condition on all of $H_{t-\Gamma+1}$. These methods target effects of static treatment sequences and require a positivity assumption, and thus cannot be applied in closed-loop designs. They also do not provide tools for statistical inference.

In stochastic designs, HR-MSMs can be used to estimate the causal effect of specific deterministic treatment sequences $\boldsymbol{a}_{\Gamma,t}$, even if they differ from the observed sequence $\overline{A}_t$ close to {\color{black} timepoint} $t$, as long as they are compatible with the experimental treatment rule (``policy''). Importantly, this enables estimation of interpretable causal parameters, such as the effect of treatment on the most recent {\color{black} timepoint}, $\mathbb{E} \left [ Y_t(a_{t} = 1)  - Y_t(a_{t} = 0) \right ]$. 
These causal contrasts have grown popular recently in the analysis of mobile health studies \citep{Boruvka2018}, where they are referred to as ``excursion effects.'' \cite{guo2021} describe excursion effects as ``1)
a contrast between the distributions of the potential outcomes under two `time-varying treatments [regimes] occurring over an interval of time extending into the future' that deviate from the treatment protocol; and 2) a contrast that is `marginal over prior treatment assignments'.'' However, current methods are restricted to estimating excursion effects for the $\Gamma = 1$ case in experimental designs like ours, and thus preclude estimation of effects defined over longer intervals $\Gamma>1$ (e.g., the micro longitudinal effects in Figures \ref{fig:estimands} and \ref{fig:estimands_appendix}) {\color{black} for \textit{general} treatment sequences. \textit{Reference distributions} have been proposed and employed in the mobile health excursion effect literature \citep{dempsey2020, shi2022, shi2024}, where a sequence of interventions could occur \text{after} the point of excursion from the observational regime. For example, one might consider a contrast of active treatment versus control at a given timepoint, but then fix treatment in both groups to the control condition for a number of timepoints until the outcome is measured. While these treatment sequences may be longer than one timepoint, the resulting estimand is a contrast at only a single timepoint, whereas we consider here models that can capture contrasts for arbitrary \textit{sequences} of interventions (e.g., two values of $\boldsymbol{a}_{\Gamma, t}$) that may differ at multiple timepoints.}
Mobile health studies often include treatment rules with positivity violations. For instance, treatment may be withheld in certain cases  due to ethical or practical constraints (e.g., no delivery of phone notifications while driving). \cite{Boruvka2018} use the notation that treatment is withheld when the time-varying ``availability'' indicator, $I_t$, equals zero. Similarly, in ``closed-loop'' optogenetics experiments, $I_t = 1$ when the conditions are met such that neural manipulation may occur (e.g., when the animal approaches the lever in the example in Section~\ref{sec:intro}). There have been proposals for methods intended to account for such implied positivity violations \citep{Neugebauer_dynamic, Boruvka2018, Qian2021}, such as the availability-conditional estimand \citep{Boruvka2018}: $\mathbb{E} \left [ Y_t(a_{t} = 1)  - Y_t(a_{t} = 0) \mid I_t = 1 \right ]$. However, estimands proposed for these settings are defined only for $\Gamma = 1$. Thus, in the presence of these positivity violations, there are no causal inference methods for longer proximal treatment sequences.  {\color{black} \cite{murphy2001} proposes the broad class of dynamic treatment regimes for MSMs. No work has proposed, however, using specific \textit{sequential} dynamic regimes that account for positivity violations to enable estimation of contrasts of treatment sequences in the HR-MSM setting. Moreover, they have not proposed appropriate estimands, or derived corresponding theory and estimators.} Finally, in some mobile health applications, treatment probabilities may be unknown or recorded with error \citep{shi2023}, and thus robust methodologies with guarantees under treatment model misspecification are needed.

\section{Proposed estimands}\label{sec:prop_estimands}
To fill the gaps identified above, we propose HR-MSMs for proximal sequences of dynamic treatment regimes, designed to be compatible with treatment availability restrictions in optogenetics designs. These estimands can incorporate time-varying effect modifiers, and are defined for any $\Gamma \geq 1$, enabling dissection of intricate effects of treatment \textit{sequences}.

%
\subsection{HR-MSMs for Static Treatment Regimes}\label{sec:hr-msm_static}
For fixed $\Gamma \in \mathbb{N}$, and any $t \geq \Gamma$, let $\boldsymbol{a}_{\Gamma, t} \equiv (a_{t-\Gamma + 1}, \ldots, a_t) \in \{0,1\}^\Gamma$ denote a putative proximal treatment sequence prior to $t$. The counterfactual outcome $Y(\boldsymbol{a}_{\Gamma, t}) \equiv Y(\overline{A}_{t - \Gamma}, \boldsymbol{a}_{\Gamma, t})$ represents the outcome that would occur under an intervention that leaves the natural treatment values $A_1, \ldots, A_{t - \Gamma}$ as they occur in the observed data, then sets treatment according to the deterministic, static treatment sequence $\boldsymbol{a}_{\Gamma, t}$ for the $\Gamma$ timepoints leading up to $t$. To probe the effects of $\boldsymbol{a}_{\Gamma, t}$, one can adopt a working HR-MSM
$m(t, \boldsymbol{a}_{\Gamma, t}, V_{t - \Gamma + 1}; \boldsymbol{\beta}) \approx \mathbb{E}(Y_t(\boldsymbol{a}_{t, \Gamma}) \mid V_{t - \Gamma + 1})$,
which places structure on the means of the counterfactuals of interest, conditional on a set of effect modifiers $V_{t - \Gamma +1} \subseteq H_{t - \Gamma + 1}$ that are measured prior to treatment at $t-\Gamma+1$. We extend these models to account for positivity violations that commonly occur in neuroscience applications. Nonetheless, the above static regime HR-MSMs represent an important special case, and we will see in Section~\ref{sec:ident-est} that the coefficients $\boldsymbol{\beta}$ can typically be estimated in open-loop optogenetic designs. We discuss these models further below, since the proposed dynamic regime HR-MSMs coincide with the static formulation when there are no availability issues (i.e., $I_t = \mathds{1}(\mathbb{P}[A_t = 1 \mid H_t] > 0) \equiv 1$). 

\subsection{HR-MSMs for Dynamic Treatment Regimes}\label{sec:hr-msm}
Adopting the notation from \cite{Boruvka2018}, we define $I_t \coloneqq \mathds{1}(\mathbb{P}[A_t = 1 \mid H_t] > 0)$ as an ``availability indicator'', i.e., $I_t = 0$ if and only if active treatment (e.g., laser stimulation) is prohibited by design.  Define
$\mathcal{D}_t = \{d_t : \mathcal{H}_t \to \{0,1\} \mid d_t(H_t) = 0 \text{ if } I_t = 0\}$, for any $t$,
to be the class of treatment rules at time $t$ compatible with $I_t$. We consider the deterministic rules $\mathcal{D}_t^* = \{d_t^{(0)}, d_t^{(1)}\} \subset \mathcal{D}_t$, where
$d_t^{(0)} \equiv 0, \, d_t^{(1)} \equiv I_t$. In words, $d_t^{(0)}$ fixes 
$A_t = 0$, and $d_t^{(1)}$ sets $A_t$ equal to  $I_t$; {\color{black} similar definitions were (conceptually) introduced in~\citet{Boruvka2018} to interpret single timepoint availability-conditional estimands.} 
The treatment rules $d_t^{(0)}, d_t^{(1)} \in \mathcal{D}_t$ are two close analogues to the static regimes $a_t = 0$, $a_t = 1$. Importantly, the causal effects of $d_t^{(0)}, d_t^{(1)}$ on subsequent outcomes are identifiable under positivity violations inherent to closed-loop designs, unlike the effects of $a_t = 0$, $a_t = 1$. We can combine these time-specific rules to construct multiple time-point analogs of excursion effects compatible with availability restrictions. Specifically, for $\Gamma \in \mathbb{N}$, we let $\overline{\mathcal{D}}_{\Gamma, t}$ be a subset of $\mathcal{D}_{t - \Gamma + 1}^* \times \cdots \times \mathcal{D}_t^*$, taking
$\boldsymbol{d}_{\Gamma, t} = (d_{t - \Gamma + 1}, \ldots, d_{t}) \in \overline{\mathcal{D}}_{\Gamma, t}$ to be a sequence of $\Gamma$ treatment rules for {\color{black} timepoints} $j \in \{t-\Gamma+1,...,t\}$. The counterfactual outcome under this sequence is defined as
\begin{equation}\label{eq:counterfactual}
    Y_t(\boldsymbol{d}_{\Gamma, t}) = Y_t(A_1, \ldots, A_{t - \Gamma}, d_{t - \Gamma + 1}(H_{t - \Gamma + 1}), \ldots, d_t(H_t(\boldsymbol{d}_{\Gamma-1, t-1}))).
\end{equation}
That is, $Y_t(\boldsymbol{d}_{\Gamma, t})$ is potential outcome under an intervention that leaves the observed treatment sequence for the first $t - \Gamma$ {\color{black} timepoints}, then sequentially sets treatment by applying $d_{t - \Gamma + j}$ to $H_{t - \Gamma + j}(\boldsymbol{d}_{j - 1, t - \Gamma + j - 1})$, for $j \in [\Gamma]$, where $\boldsymbol{d}_{j - 1, t - \Gamma + j - 1} = (d_{t - \Gamma + 1}, \ldots, d_{t - \Gamma + j - 1})$.

Letting $V_{t} \subseteq H_t$ be a set of effect modifiers at {\color{black} timepoint} $t$, we seek to estimate $\mathbb{E}[Y_t(\boldsymbol{d}_{\Gamma, t}) \mid V_{t - \Gamma + 1}]$, the counterfactual mean outcome, conditional on effect modifiers that are observed \textit{before} the treatment decision of {\color{black} timepoint} $t-\Gamma + 1$. By construction, these estimands are identifiable under standard causal assumptions (see Section~\ref{sec:ident-est}). The interpretation of these effects is somewhat subtle, and warrants further discussion. When $\Gamma = 1$, we can express a contrast of these estimands in terms of the effect of exposure in a certain subgroup:
\[\mathbb{E}[Y_t(d_t^{(1)}) \mid V_t] - \mathbb{E}[Y_t(d_t^{(0)}) \mid V_t]
            = \mathbb{E}[Y_t(a_t = 1) - Y_t(a_t = 0) \mid V_t, I_t = 1] \mathbb{P}[I_t = 1 \mid V_t].\]
That is, the mean contrast in counterfactual outcomes for $d_t^{(1)}$ versus $d_t^{(0)}$ is the mean effect of $A_t = 1$ versus $A_t = 0$ among those with $I_t = 1$---the availability-conditional estimand proposed by~\citet{Boruvka2018}---diluted by the probability of availability. Even when $\Gamma = 1$, it may not always be clear for whom the availability-conditional estimand generalizes to (i.e., the group $I_t = 1$ may be highly idiosyncratic and not of particular interest). Conversely, our proposed estimands summarize the effects of plausible interventions on the whole population, acknowledging that treatment, $A_t = 1$, is not always possible.

The comparison just described is somewhat akin to the duality in clinical trials of per-protocol (or complier-specific) effects, and intention-to-treat effects. Thus, when $\Gamma = 1$, we recommend assessing both the availability-conditional estimand, as in \citet{Boruvka2018}, as well as our proposed population-level effect. When $\Gamma > 1$, it is not clear whether an analogous availability-conditional estimand exists; our approach is viable for arbitrary $\Gamma$. In general, our estimands have the population-level (possibly conditional on effect modifiers) interpretation of summarizing how outcomes would be affected if the experimental protocol were changed to match $\boldsymbol{d}_{\Gamma, t}$ for the $\Gamma$ timepoints leading up to the outcome. Finally, these estimands are dependent on the treatment protocol~\citep{guo2021}, as it is for all approaches that marginalize over observed treatment values prior to the intervention. 

For ``closed-loop'' designs, the proposed estimands are, by construction, contrasts between sequences of treatment \textit{opportunities}, i.e., $d_{t}^{(1)}(H_t)$  is 1 if and only if $I_t = 1$. This is distinct from static regime sequences where, for instance, $a_t = 1$ always sets treatment to 1 at $t$. While the latter set of contrasts have a cleaner interpretation, we use the former approach since effects of static sequences are not identified under the positivity violations inherent to closed-loop designs. Importantly, the proposed estimands satisfy a sharp null preservation property: when the \textit{treatment} has no effect in any individuals, contrasts under dynamic treatment sequences are also null. We now formalize this property.

\begin{proposition}\label{prop:sharp-null}
    For fixed $\Gamma$, suppose consistency (i.e., Assumption~\ref{ass:consistency}) holds, as well as the sharp null hypothesis that $Y_t(\boldsymbol{a}_{\Gamma, t}) = Y_{t}(\boldsymbol{a}_{\Gamma, t}')$, for any $\boldsymbol{a}_{\Gamma, t}, \boldsymbol{a}_{\Gamma, t}' \in \{0,1\}^{\Gamma}$. Then $\mathbb{E}(Y_{t}(\boldsymbol{d}_{\Gamma, t})) \equiv \mathbb{E}(Y_t)$ for all $\boldsymbol{d}_{\Gamma, t} \in \overline{\mathcal{D}}_{\Gamma, t}$.
\end{proposition}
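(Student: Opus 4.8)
The plan is to prove the stronger almost-sure statement $Y_t(\boldsymbol{d}_{\Gamma,t}) = Y_t$, from which $\mathbb{E}(Y_t(\boldsymbol{d}_{\Gamma,t})) = \mathbb{E}(Y_t)$ is immediate upon taking expectations. To this end, I would first reduce the dynamic regime to a (subject-specific, random) static sequence. Fix $\boldsymbol{d}_{\Gamma,t} = (d_{t-\Gamma+1},\dots,d_t) \in \overline{\mathcal{D}}_{\Gamma,t}$ and define, recursively for $j \in [\Gamma]$, the $\{0,1\}$-valued random variables
\[ A^\star_{t-\Gamma+j} = d_{t-\Gamma+j}\bigl(H_{t-\Gamma+j}(\boldsymbol{d}_{j-1,\,t-\Gamma+j-1})\bigr), \qquad \boldsymbol{A}^\star = (A^\star_{t-\Gamma+1},\dots,A^\star_t) \in \{0,1\}^\Gamma, \]
where the $j=1$ term is $d_{t-\Gamma+1}(H_{t-\Gamma+1})$ evaluated at the \emph{observed} history (note $H_{t-\Gamma+1} = (\overline X_{t-\Gamma+1}, \overline A_{t-\Gamma}, \overline Y_{t-\Gamma})$ depends on no intervened treatment). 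Since $\{0,1\}^\Gamma$ is finite, the events $\{\boldsymbol{A}^\star = \boldsymbol{a}\}$, $\boldsymbol{a} \in \{0,1\}^\Gamma$, partition the probability space.

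The crux is a recursive-substitution identity: for every fixed $\boldsymbol{a} = (a_{t-\Gamma+1},\dots,a_t) \in \{0,1\}^\Gamma$,
\[ \{\boldsymbol{A}^\star = \boldsymbol{a}\} \ \subseteq\ \bigl\{\,Y_t(\boldsymbol{d}_{\Gamma,t}) = Y_t(\overline A_{t-\Gamma}, \boldsymbol{a})\,\bigr\} = \bigl\{\,Y_t(\boldsymbol{d}_{\Gamma,t}) = Y_t(\boldsymbol{a}_{\Gamma,t})\,\bigr\}\big|_{\boldsymbol{a}_{\Gamma,t}=\boldsymbol{a}}. \]
I would prove this by induction on the window length $\Gamma$: on the event $\{\boldsymbol{A}^\star = \boldsymbol{a}\}$, the first rule evaluation in~\eqref{eq:counterfactual} returns $d_{t-\Gamma+1}(H_{t-\Gamma+1}) = a_{t-\Gamma+1}$, so the nested counterfactual histories $H_{t-\Gamma+j}(\boldsymbol{d}_{j-1,\cdot})$ collapse into those obtained by intervening with the fixed value $a_{t-\Gamma+1}$ at $t-\Gamma+1$ and then applying the remaining $\Gamma-1$ rules; the inductive hypothesis then forces agreement with the fully static sequence $\boldsymbol{a}$. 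Granting this, invoke the hypotheses: by consistency (Assumption~\ref{ass:consistency}), taking the window values to equal the observed treatments gives $Y_t(\overline A_{t-\Gamma}, (A_{t-\Gamma+1},\dots,A_t)) = Y_t(\overline A_t) = Y_t$, and the sharp null $Y_t(\boldsymbol{a}_{\Gamma,t}) = Y_t(\boldsymbol{a}_{\Gamma,t}')$ then yields $Y_t(\boldsymbol{a}_{\Gamma,t}) = Y_t$ for \emph{every} $\boldsymbol{a}_{\Gamma,t} \in \{0,1\}^\Gamma$. Hence on each $\{\boldsymbol{A}^\star = \boldsymbol{a}\}$ we get $Y_t(\boldsymbol{d}_{\Gamma,t}) = Y_t(\overline A_{t-\Gamma}, \boldsymbol{a}) = Y_t$; summing over the finite partition gives $Y_t(\boldsymbol{d}_{\Gamma,t}) = Y_t$ almost surely, and taking expectations finishes the proof.

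The only genuine obstacle is making the recursive-substitution identity rigorous, i.e., justifying that substituting a $\{0,1\}$-valued random variable which equals a constant on a given event into the later (nested) counterfactual histories behaves as if that constant had been the intervened value from the start. I would handle this by the explicit induction on $\Gamma$ sketched above, peeling off the earliest intervention time $t-\Gamma+1$ and using that $H_{t-\Gamma+1}$ is ancestral to — hence unaffected by — every intervention in the window. Everything else (the appeal to the sharp null, the use of consistency, and the finite-partition argument) is routine.
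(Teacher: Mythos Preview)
Your proposal is correct and follows essentially the same approach as the paper: show the pointwise equality $Y_t(\boldsymbol{d}_{\Gamma,t}) = Y_t$ by observing that the dynamic regime, once unravelled via definition~\eqref{eq:counterfactual}, is just $Y_t$ evaluated at \emph{some} static sequence in $\{0,1\}^\Gamma$, and the sharp null together with consistency forces every such value to equal $Y_t$. The paper's proof is a two-liner that takes the recursive-substitution step for granted, whereas you scaffold it with the finite partition by $\boldsymbol{A}^\star$ and an induction on $\Gamma$; this extra care is sound but not strictly necessary given that the sharp null already collapses all static-sequence counterfactuals to $Y_t$ regardless of which (random) sequence the rules pick out.
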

The proof of this result and all others in this paper can be found in the Appendices.

This result shows that the proposed estimands provide a means of null hypothesis testing for the causal effect of the \textit{treatment} itself, despite being defined in terms of treatment \textit{opportunities}. To provide wider insight into what the proposed contrasts capture---in the general setting where the sharp null does not hold---we express the estimands in terms of observed variables for a couple of special cases in Appendix Section~\ref{app:sharp-null}. 

When $\Gamma > 1$, 
there may be many potential treatment rule sequence combinations. We thus propose to estimate intervention effects with an HR-MSM on the (conditional) means of the counterfactuals~\eqref{eq:counterfactual}:
$m(t, \boldsymbol{d}_{\Gamma, t}, V_{t - \Gamma + 1}; \boldsymbol{\beta}) \approx \mathbb{E}[Y_t(\boldsymbol{d}_{\Gamma, t}) \mid V_{t - \Gamma + 1}]$,
where $m$ is a fixed known function. We aim to conduct inference on the HR-MSM parameters, $\boldsymbol{\beta}$, but do not assume the model is necessarily well-specified; we treat $\boldsymbol{\beta}$ as projections onto the working model $m$~\citep{neugebauer2007, rosenblum2010}: {\small
\begin{equation}\label{eq:msm-proj}
    \boldsymbol{\beta}_0 = \mathrm{arg\,min}_{\boldsymbol{\beta} \in \mathbb{R}^q} \sum_{t = \Gamma}^T \ \sum_{\boldsymbol{d}_{\Gamma, t} \in \overline{\mathcal{D}}_{\Gamma,t}}\mathbb{E}\left(h(t, \boldsymbol{d}_{\Gamma, t}, V_{t - \Gamma + 1}) \left\{Y_t(\boldsymbol{d}_{\Gamma, t}) - m(t, \boldsymbol{d}_{\Gamma, t}, V_{t - \Gamma + 1}; \boldsymbol{\beta})\right\}^2\right),
\end{equation}}%
for some fixed non-negative weight function $h$. The projection parameter formulation shares close connections with the recent literature on ``assumption-lean inference'' \citep{vansteelandt2022}. By defining a valid summary measure in terms of the model $m$, it strikes a balance between an exclusively parametric approach in which one would assume $m$ is correctly specified, and an entirely nonparametric formulation which would not impose strong structure on the causal quantities of interest. By construction, the parameter $\boldsymbol{\beta}_0$ is the parameter that characterizes the ``best fit'' possible with the working model $m$, in terms of $L_2(\mathbb{P})$-distance to the target causal quantities. This framing also relates to robustness and efficiency. When the working model is correctly specified, the projection-based approach will yield valid inference for the true MSM parameters, but may sacrifice efficiency compared to an approach that assumed the correct model. On the other hand, the latter approach may be biased when the model is incorrectly specified, whereas the projection-based approach remains valid. See~\citet{kennedy2019, kennedy2023} for related commentary. 

\section{Identification and Estimation}\label{sec:ident-est}


We now describe the causal assumptions under which the proposed estimands are identified. We then develop an IPW estimator of the HR-MSM parameters, and derive its asymptotic properties. While we use notation for dynamic regime HR-MSMs here, the results apply equally to static regime HR-MSMs  where there are no availability issues (i.e., $I_t \equiv 1$). In that instance, the treatment rules $\boldsymbol{d}_{\Gamma, t}$ reduce to a corresponding static sequence $\boldsymbol{a}_{\Gamma, t}$.

For each $t$, we define $\pi_t(a; H_t) \coloneqq \mathbb{P}[A_t = a \mid H_t]$, and make the following assumptions.

\begin{assumption}\label{ass:consistency}
    Consistency: $Y_t(\boldsymbol{d}_{\Gamma, t}) = Y_t$, whenever $A_j = d_j(H_j)$, for all $j$
\end{assumption}

\begin{assumption}\label{ass:positivity}
    Positivity: For all $t \in \{\Gamma, \ldots, T\}$, $d_t \in \mathcal{D}_{t}^*$, $\pi_t(d_t(H_t); H_t) \geq \epsilon$, w.p. 1
\end{assumption}

\begin{assumption}\label{ass:NUC}
    Sequential randomization: $A_t \independent (Y_s(\boldsymbol{d}_t), X_{s + 1}(\boldsymbol{d}_t), A_{s+1}(\boldsymbol{d}_t))_{s = t}^T \mid H_t$, for all $t \in \{\Gamma, \ldots, T\}$, and all $\boldsymbol{d}_{t} = (d_1, \ldots, d_t)$
\end{assumption}

Consistency (Assumption~\ref{ass:consistency}) states that for any of the regimes $\boldsymbol{d}_{\Gamma, t}$ under study, the counterfactual outcome $Y_t(\boldsymbol{d}_{\Gamma, t})$ equals the observed outcome $Y_t$ when observed treatment values correspond to assignment under $\boldsymbol{d}_{\Gamma, t}$. Positivity (Assumption~\ref{ass:positivity}) states that treatment probabilities are bounded away from zero---this is required for the asymptotic analysis of the proposed estimator later on. By definition of the availability indicator $I_t$, and the regimes $\mathcal{D}_t^*$ in Section~\ref{sec:hr-msm}, we are allowing $\mathbb{P}[A_t = 1 \mid H_t] = 0$ in some cases (i.e., when $I_t = 0$), but Assumption~\ref{ass:positivity} rules out $\mathbb{P}[A_t = 1 \mid H_t] = 1$. This positivity assumption holds in many open- and closed-loop optogenetic studies. In practice, in such experiments, one can ensure that Assumption~\ref{ass:positivity} holds by design when choosing the treatment assignment probabilities. Finally, Assumption~\ref{ass:NUC} says that treatments are randomly assigned at each time $t$, based on all previously measured data $H_t$. In the sequential optogenetic experiments motivating this work, this assumption would hold by design. In observational studies, one will have to assess the plausibility of Assumption~\ref{ass:NUC} (as well as Assumptions~\ref{ass:consistency} and \ref{ass:positivity}) on a case-by-case basis, ideally based on subject matter knowledge; it may be harder to justify Assumption~\ref{ass:NUC} due to the possible presence of unmeasured confounders. The following result says that these three causal assumptions are sufficient for identification of the counterfactual means $\mathbb{E}[Y_t(\boldsymbol{d}_{\Gamma, t}) \mid V_{t - \Gamma + 1}]$, and of the HR-MSM parameters $\boldsymbol{\beta}_0$.

\begin{proposition}\label{prop:ident}
    Under Assumptions~\ref{ass:consistency}--\ref{ass:NUC}, we have
    \[\mathbb{E}(Y_t(\boldsymbol{d}_{\Gamma, t}) \mid V_{t - \Gamma + 1}) =  \mathbb{E}_{\mathbb{P}}\left(\prod_{j = t - \Gamma + 1}^t \frac{\mathds{1}(A_j = d_j(H_j))}{\pi_t(A_t; H_t)} Y_t \, \middle| \, V_{t - \Gamma + 1}\right).\]
    Moreover, assuming the solution to \eqref{eq:msm-proj} is unique, and the working model $m$ is differentiable in $\boldsymbol{\beta}$, the MSM parameters $\boldsymbol{\beta}_0$ are identified through the estimating equation
    \begin{align*}\boldsymbol{0}
        &= \mathbb{E}_{\mathbb{P}}\bigg(\sum_{t = \Gamma}^T \, \, \, \, \sum_{\boldsymbol{d}_{\Gamma, t} \in \overline{\mathcal{D}}_{\Gamma, t}} h(t, \boldsymbol{d}_{\Gamma, t}, V_{t - \Gamma + 1}) M(t, \boldsymbol{d}_{\Gamma, t}, V_{t - \Gamma + 1}; \boldsymbol{\beta}_0) \\
        & \quad \quad \quad \quad \quad \quad \quad \quad \quad \quad \times \left[\prod_{j = t - \Gamma + 1}^t \frac{\mathds{1}(A_j = d_j(H_j))}{\pi_t(A_t; H_t)} \right]\left\{Y_t - m(t, \boldsymbol{d}_{\Gamma, t}, V_{t - \Gamma + 1}; \boldsymbol{\beta}_0)\right\}\bigg),
        \end{align*}
     where $M(t, \boldsymbol{d}_{\Gamma, t}, V_{t - \Gamma + 1}; \boldsymbol{\beta}) = \nabla_{\boldsymbol{\beta}}\,{m}(t, \boldsymbol{d}_{\Gamma, t}, V_{t - \Gamma + 1}; \boldsymbol{\beta})$.
\end{proposition}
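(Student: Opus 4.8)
The plan is to establish the two displayed claims in sequence. For the identification of $\mathbb{E}(Y_t(\boldsymbol{d}_{\Gamma,t}) \mid V_{t-\Gamma+1})$, I would run an iterated-expectation ``peeling'' argument that telescopes over the $\Gamma$ intervention timepoints $s,\ldots,t$ (with $s := t-\Gamma+1$), starting from the innermost point $t$ and working backwards. The elementary building block is the single-timepoint identity: for any $j$ and any integrable random variable $W$,
\[
\mathbb{E}\!\left[\frac{\mathds{1}(A_j = d_j(H_j))}{\pi_j(A_j; H_j)}\, W \,\middle|\, H_j\right] \;=\; \mathbb{E}\!\left[W \mid H_j,\, A_j = d_j(H_j)\right],
\]
obtained by further conditioning on $A_j$ and using that the weight is $\sigma(H_j, A_j)$-measurable; this is well defined since, by the construction of $\mathcal{D}_j^*$ together with positivity (Assumption~\ref{ass:positivity}), $\pi_j(d_j(H_j); H_j) \geq \epsilon > 0$ almost surely. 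Applying this at $j = t$ with $W = Y_t$, then invoking consistency (Assumption~\ref{ass:consistency}) to replace $Y_t$ by the counterfactual that intervenes only at $t$ on the event $\{A_t = d_t(H_t)\}$, and sequential randomization (Assumption~\ref{ass:NUC}) to drop the conditioning event, turns $\mathbb{E}[\frac{\mathds{1}(A_t = d_t(H_t))}{\pi_t(A_t;H_t)} Y_t \mid H_t]$ into the conditional mean given $H_t$ of a counterfactual intervening only on $\{t\}$. Iterating down to $j = s$ --- at each stage conditioning on $H_j$, pulling the already-processed weights out (they are $H_j$-measurable), applying the single-timepoint identity, consistency and randomization, and finally using the tower property with $V_s \subseteq H_s \subseteq \cdots \subseteq H_t$ --- collapses the product of inverse-probability weights and yields exactly the claimed formula.

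I expect the main obstacle to be bookkeeping the nested counterfactual histories $H_j(\cdot)$ appearing in~\eqref{eq:counterfactual}: at the $j$-th peeling step one must verify that, on the event $\{A_j = d_j(H_j)\}$, the observed history $H_{j+1}$ coincides with its $d_j$-intervened counterpart (so that consistency upgrades the counterfactual following $d_{j+1},\ldots,d_t$ to the one following $d_j,\ldots,d_t$), and that the resulting counterfactual is of the form to which Assumption~\ref{ass:NUC} applies. This is the standard recursive-substitution manipulation for dynamic regimes; I would carry it out by an explicit induction on the number of intervened timepoints, with induction hypothesis phrased at the level of counterfactuals that ``leave the first $j$ treatments at their observed values, then follow $d_{j+1},\ldots,d_t$.''

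For the HR-MSM parameters, I would argue from first-order optimality. Since $m$ is differentiable in $\boldsymbol{\beta}$ and the minimizer $\boldsymbol{\beta}_0$ of~\eqref{eq:msm-proj} is assumed unique and unconstrained, differentiating under the expectation and dividing out the constant factor gives
\[
\boldsymbol{0} = \sum_{t=\Gamma}^T \;\sum_{\boldsymbol{d}_{\Gamma,t} \in \overline{\mathcal{D}}_{\Gamma,t}} \mathbb{E}\!\left(h(t,\boldsymbol{d}_{\Gamma,t},V_{t-\Gamma+1})\, M(t,\boldsymbol{d}_{\Gamma,t},V_{t-\Gamma+1};\boldsymbol{\beta}_0)\,\bigl\{Y_t(\boldsymbol{d}_{\Gamma,t}) - m(t,\boldsymbol{d}_{\Gamma,t},V_{t-\Gamma+1};\boldsymbol{\beta}_0)\bigr\}\right).
\]
It then remains to rewrite each summand in terms of observed data. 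Because $h$, $M(\cdot;\boldsymbol{\beta}_0)$ and $m(\cdot;\boldsymbol{\beta}_0)$ are all $\sigma(V_{t-\Gamma+1})$-measurable, the tower property together with the first claim --- applied both to $Y_t$ and to the constant function $1$ (the latter yielding $\mathbb{E}[\prod_{j=t-\Gamma+1}^t \tfrac{\mathds{1}(A_j = d_j(H_j))}{\pi_j(A_j;H_j)} \mid V_{t-\Gamma+1}] = 1$) --- allows one to replace $Y_t(\boldsymbol{d}_{\Gamma,t})$ by $\prod_{j=t-\Gamma+1}^t \tfrac{\mathds{1}(A_j = d_j(H_j))}{\pi_j(A_j;H_j)}\, Y_t$ and, simultaneously, $m(\cdot;\boldsymbol{\beta}_0)$ by $\prod_{j=t-\Gamma+1}^t \tfrac{\mathds{1}(A_j = d_j(H_j))}{\pi_j(A_j;H_j)}\, m(\cdot;\boldsymbol{\beta}_0)$ inside each expectation; collecting terms reproduces the stated estimating equation. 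This part is routine given the first claim; the only mild care needed is justifying the interchange of $\nabla_{\boldsymbol{\beta}}$ and $\mathbb{E}$, which holds under standard regularity conditions on $m$ and $h$.
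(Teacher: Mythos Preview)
Your proposal is correct and follows essentially the same approach as the paper's proof: both arguments establish the first claim via the standard $g$-formula/IPW iterated-expectation argument for dynamic regimes (invoking positivity, sequential randomization, and consistency at each of the $\Gamma$ timepoints), and obtain the second claim by differentiating the projection objective~\eqref{eq:msm-proj} and applying the first claim. The only cosmetic difference is direction: the paper starts from $\mathbb{E}(Y_t(\boldsymbol{d}_{\Gamma,t})\mid V_{t-\Gamma+1})$, derives the nested-expectation $g$-formula, and then rewrites it in IPW form, whereas you start from the IPW expression and peel weights inward to recover the counterfactual mean; these are the same telescoping argument run forwards versus backwards.
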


The result of Proposition~\ref{prop:ident} is a population inverse probability-weighted estimating equation for the target parameters $\boldsymbol{\beta}_0$. This estimating equation motivates the following IPW estimator: define $\widehat{\boldsymbol{\beta}}$ to be the solution to the empirical IPW estimating equation,
    \begin{align*}
        \boldsymbol{0}
        &= \mathbb{P}_{n}\bigg(\sum_{t = \Gamma}^T \, \, \, \, \sum_{\boldsymbol{d}_{\Gamma, t} \in \overline{\mathcal{D}}_{\Gamma, t}} h(t, \boldsymbol{d}_{\Gamma, t}, V_{t - \Gamma + 1}) M(t, \boldsymbol{d}_{\Gamma, t}, V_{t - \Gamma + 1}; \widehat{\boldsymbol{\beta}}) \\
        & \quad \quad \quad \quad \quad \quad \quad \quad \quad \quad \times \left[\prod_{j = t - \Gamma + 1}^t \frac{\mathds{1}(A_j = d_j(H_j))}{\pi_t(A_t; H_t)} \right]\left\{Y_t - m(t, \boldsymbol{d}_{\Gamma, t}, V_{t - \Gamma + 1}; \widehat{\boldsymbol{\beta}})\right\}\bigg),
    \end{align*}
where $\mathbb{P}_n(f) = \frac{1}{n}\sum_{i =1}^n f(\mathcal{O}_{1,i}, \ldots, \mathcal{O}_{T, i})$ is the empirical mean of $f$. In optogenetics studies, the propensity scores $\pi_t$ are known by design, but we consider incorporation of outcome modeling, and propensity score estimation when they are unknown in Appendix~\ref{sec:mr}. 


In the following result, we prove asymptotic normality of the proposed estimator $\widehat{\boldsymbol{\beta}}$, under mild conditions. We use some additional notation: let $Z_i = \{\mathcal{O}_{t, i}\}_{t = 1}^T$ to be the totality of data observed on subject $i$, and define the estimating function $\phi(Z, \cdot) : \mathbb{R}^q \to \mathbb{R}^q$ via
\begin{align*}\phi(Z, \boldsymbol{\beta}) &= \sum_{t = \Gamma}^T \, \, \, \, \sum_{\boldsymbol{d}_{\Gamma, t} \in \overline{\mathcal{D}}_{\Gamma, t}} h(t, \boldsymbol{d}_{\Gamma, t}, V_{t - \Gamma + 1}) M(t, \boldsymbol{d}_{\Gamma, t}, V_{t - \Gamma + 1}; \boldsymbol{\beta}) \\
        & \quad \quad \quad \quad \quad \quad \quad \quad \quad \quad \times \left[\prod_{j = t - \Gamma + 1}^t \frac{\mathds{1}(A_j = d_j(H_j))}{\pi_t(A_t; H_t)} \right]\left\{Y_t - m(t, \boldsymbol{d}_{\Gamma, t}, V_{t - \Gamma + 1}; \boldsymbol{\beta})\right\}.
        \end{align*}
With this notation, we note that $\widehat{\boldsymbol{\beta}}$ solves $\mathbb{P}_n[\phi(Z, \widehat{\boldsymbol{\beta}})] = \boldsymbol{0}$. Further, we define $\boldsymbol{A}(\boldsymbol{\beta}) = \mathbb{E}[\phi(Z, \boldsymbol{\beta})\phi(Z, \boldsymbol{\beta})^T]$ and $\boldsymbol{B}(\boldsymbol{\beta})= \mathbb{E}[\nabla_{\boldsymbol{\beta}} \, \phi(Z, \boldsymbol{\beta})]$.

\begin{theorem}\label{thm:asymp}
     Suppose Assumptions~\ref{ass:consistency}--\ref{ass:NUC} and the following conditions hold:
    (i) The minimizer $\boldsymbol{\beta}_0$ in \eqref{eq:msm-proj} is unique;
        (ii) $m(t, \boldsymbol{d}_{\Gamma, t}, V_{t - \Gamma + 1}; \boldsymbol{\beta})$ is continuously differentiable at $\boldsymbol{\beta}_0$, uniformly in $V_{t - \Gamma + 1}$;
        (iii) in a neighborhood around $\boldsymbol{\beta}_0$, $\boldsymbol{A}(\boldsymbol{\beta})$ and $\boldsymbol{B}(\boldsymbol{\beta})$ are finite-valued, and $\boldsymbol{B}(\boldsymbol{\beta})$ is non-singular;
        (iv) $\widehat{\boldsymbol{\beta}} \overset{p}{\to} \boldsymbol{\beta}_0$.
    Then
    $\sqrt{n}(\widehat{\boldsymbol{\beta}} - \boldsymbol{\beta}_0) \overset{d}{\to} \mathcal{N}(\boldsymbol{0}, \boldsymbol{V}(\boldsymbol{\beta}_0))$,
    where $\boldsymbol{V}(\boldsymbol{\beta}) = \boldsymbol{B}(\boldsymbol{\beta})^{-1} \boldsymbol{A}(\boldsymbol{\beta})\boldsymbol{B}(\boldsymbol{\beta})^{-1}$.
\end{theorem}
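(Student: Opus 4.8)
The plan is to recognize $\widehat{\boldsymbol{\beta}}$ as a standard $Z$-estimator (M-estimator defined through estimating equations) and apply the classical asymptotic normality machinery for such estimators, e.g.\ the results in van der Vaart's \emph{Asymptotic Statistics} or Newey and McFadden. Since $Z_1, \ldots, Z_n$ are i.i.d.\ (subjects are exchangeable by assumption) and $\widehat{\boldsymbol{\beta}}$ solves $\mathbb{P}_n[\phi(Z, \widehat{\boldsymbol{\beta}})] = \boldsymbol{0}$ while Proposition~\ref{prop:ident} guarantees $\mathbb{E}_{\mathbb{P}}[\phi(Z, \boldsymbol{\beta}_0)] = \boldsymbol{0}$, the $\sqrt{n}$-consistency and asymptotic normality follow once we verify the usual regularity conditions, all of which are either assumed in the statement or follow from the explicit form of $\phi$.

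First I would establish a one-term Taylor (mean-value) expansion of the empirical estimating equation around $\boldsymbol{\beta}_0$: writing $\boldsymbol{0} = \mathbb{P}_n[\phi(Z, \widehat{\boldsymbol{\beta}})] = \mathbb{P}_n[\phi(Z, \boldsymbol{\beta}_0)] + \mathbb{P}_n[\nabla_{\boldsymbol{\beta}}\phi(Z, \tilde{\boldsymbol{\beta}})](\widehat{\boldsymbol{\beta}} - \boldsymbol{\beta}_0)$ for some $\tilde{\boldsymbol{\beta}}$ on the segment between $\widehat{\boldsymbol{\beta}}$ and $\boldsymbol{\beta}_0$ (applied componentwise, or handled via the integral form of the remainder to avoid the well-known subtlety that different components use different mean values). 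This requires that $\phi$ be differentiable in $\boldsymbol{\beta}$ in a neighborhood of $\boldsymbol{\beta}_0$, which holds because $m$ is continuously differentiable at $\boldsymbol{\beta}_0$ by condition (ii), the IPW weights $\prod_j \mathds{1}(A_j = d_j(H_j))/\pi_t(A_t; H_t)$ do not depend on $\boldsymbol{\beta}$, and $M = \nabla_{\boldsymbol{\beta}} m$ enters multiplicatively. Next, by condition (iv) $\widehat{\boldsymbol{\beta}} \overset{p}{\to} \boldsymbol{\beta}_0$, so $\tilde{\boldsymbol{\beta}} \overset{p}{\to} \boldsymbol{\beta}_0$ as well; combined with a uniform (local) law of large numbers for $\nabla_{\boldsymbol{\beta}}\phi(Z, \cdot)$ on a neighborhood of $\boldsymbol{\beta}_0$ and the continuity/finiteness in condition (iii), a standard continuous-mapping argument gives $\mathbb{P}_n[\nabla_{\boldsymbol{\beta}}\phi(Z, \tilde{\boldsymbol{\beta}})] \overset{p}{\to} \boldsymbol{B}(\boldsymbol{\beta}_0)$, which is invertible by (iii). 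Rearranging yields $\sqrt{n}(\widehat{\boldsymbol{\beta}} - \boldsymbol{\beta}_0) = -\{\mathbb{P}_n[\nabla_{\boldsymbol{\beta}}\phi(Z, \tilde{\boldsymbol{\beta}})]\}^{-1} \sqrt{n}\,\mathbb{P}_n[\phi(Z, \boldsymbol{\beta}_0)] + o_p(1)$. Finally, since $\mathbb{E}[\phi(Z, \boldsymbol{\beta}_0)] = \boldsymbol{0}$ and $\boldsymbol{A}(\boldsymbol{\beta}_0) = \mathbb{E}[\phi(Z, \boldsymbol{\beta}_0)\phi(Z, \boldsymbol{\beta}_0)^{\T}]$ is finite by (iii), the multivariate CLT gives $\sqrt{n}\,\mathbb{P}_n[\phi(Z, \boldsymbol{\beta}_0)] \overset{d}{\to} \mathcal{N}(\boldsymbol{0}, \boldsymbol{A}(\boldsymbol{\beta}_0))$, and Slutsky's theorem delivers $\sqrt{n}(\widehat{\boldsymbol{\beta}} - \boldsymbol{\beta}_0) \overset{d}{\to} \mathcal{N}(\boldsymbol{0}, \boldsymbol{B}(\boldsymbol{\beta}_0)^{-1}\boldsymbol{A}(\boldsymbol{\beta}_0)\boldsymbol{B}(\boldsymbol{\beta}_0)^{-1})$, which is the claimed sandwich form $\boldsymbol{V}(\boldsymbol{\beta}_0)$.

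I would also include a brief argument that $\boldsymbol{\beta}_0$ as defined by the projection \eqref{eq:msm-proj} indeed solves the population estimating equation $\mathbb{E}[\phi(Z, \boldsymbol{\beta}_0)] = \boldsymbol{0}$: this is exactly the first-order stationarity condition for the weighted least-squares objective, combined with the identification in Proposition~\ref{prop:ident} that lets one swap the counterfactual $Y_t(\boldsymbol{d}_{\Gamma, t})$ for the IPW-reweighted observed $Y_t$ inside the expectation. The main obstacle, and the only place requiring genuine care rather than bookkeeping, is the uniform convergence of $\mathbb{P}_n[\nabla_{\boldsymbol{\beta}}\phi(Z, \cdot)]$ over a neighborhood of $\boldsymbol{\beta}_0$: one must argue that the class $\{\nabla_{\boldsymbol{\beta}}\phi(\cdot, \boldsymbol{\beta}) : \|\boldsymbol{\beta} - \boldsymbol{\beta}_0\| \le \delta\}$ is Glivenko--Cantelli (or at least that a local ULLN applies), which follows from continuity of $M$ in $\boldsymbol{\beta}$ (condition (ii)), boundedness of the IPW weights via the positivity Assumption~\ref{ass:positivity} and the finite number of timepoints $T$ and regimes in $\overline{\mathcal{D}}_{\Gamma,t}$, plus a local integrable envelope implied by the finiteness in condition (iii); the finite sum over $t$ and $\boldsymbol{d}_{\Gamma,t}$ means there is no genuine infinite-dimensional complexity to control, so a dominated-convergence plus continuity argument (e.g.\ the standard ULLN for continuous, dominated families) suffices. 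Everything else is routine application of the $Z$-estimator theorem.
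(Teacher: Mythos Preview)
Your proposal is correct and matches the paper's approach: the paper's entire proof is a one-line citation to Theorem~5.31 of van der Vaart's \emph{Asymptotic Statistics}, which is precisely the $Z$-estimator asymptotic normality result whose standard Taylor-expansion/CLT/Slutsky argument you have spelled out. Your sketch is more detailed than the paper's citation, but the underlying reasoning is identical.
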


Theorem~\ref{thm:asymp} gives the asymptotic distribution of the estimator $\widehat{\boldsymbol{\beta}}$. Conditions (i) through (iv) are standard conditions for asymptotic normality of M-estimators \citep{huber1964, huber1967}. For condition (ii), we expect the working model $m$ to be differentiable in $\boldsymbol{\beta}$ for most common models. Condition (iii) is satisfied under mild conditions, e.g., if the weight functions $h$, the model $m$ and its derivative $M$, and the outcomes $Y_t$ are uniformly bounded, and no haphazard degeneracy in $\boldsymbol{B}$ exists that could cause singularity. Lastly, condition (iv) is also quite weak, only requiring convergence of $\widehat{\boldsymbol{\beta}}$ at an arbitrarily slow rate, and would hold under some stochastic equicontinuity conditions \citep{newey1991, pollard2012}.

Importantly, Theorem~\ref{thm:asymp} provides a strategy to construct asymptotically valid Wald-based confidence intervals (CIs) for the MSM parameters $\boldsymbol{\beta}_0$. Namely, for any $\boldsymbol{\beta}$ we can take
$\widehat{\boldsymbol{A}}(\boldsymbol{\beta}) = \mathbb{P}_n[\phi(Z, \boldsymbol{\beta})\phi(Z, \boldsymbol{\beta})^T]$, and $\widehat{\boldsymbol{B}}(\boldsymbol{\beta})= \mathbb{P}_n[\nabla_{\boldsymbol{\beta}} \, \phi(Z, \boldsymbol{\beta})]$,
and define $\widehat{\boldsymbol{V}} = \widehat{\boldsymbol{B}}(\widehat{\boldsymbol{\beta}})^{-1} \widehat{\boldsymbol{A}}(\widehat{\boldsymbol{\beta}})\widehat{\boldsymbol{B}}(\widehat{\boldsymbol{\beta}})^{-1}$, which is consistent for $\boldsymbol{V}(\boldsymbol{\beta}_0)$. Then, for $j \in [q]$, an $(1 - \alpha)$-level CI for $\beta_{j,0}$ is given by
$\widehat{\beta}_j \pm z_{1 - \alpha / 2}\sqrt{\frac{\widehat{V}_{jj}}{n}}$,
where $z_{1 - \alpha/2}$ is the $(1 - \alpha/2)$-quantile of the standard normal distribution, and $\widehat{V}_{jj}$ is the $j$-th diagonal element of $\widehat{\boldsymbol{V}}$. More generally, CIs for any linear combination of $\boldsymbol{\beta}$ parameters can be constructed similarly.
{\color{black} \cite{murphy2001} provides identifiability and asymptotic normality results (akin to Proposition~\ref{prop:ident} and Theorem~\ref{thm:asymp}) for ``all timepoint'' dynamic regime MSMs, which we extend to the history-restricted case. Future work might also generalize Theorem~\ref{thm:asymp} to the large $T$, small $n$ settings common in optogenetic applications (e.g., see \citep{shi2024_metalearning}).}
%

We now highlight simplifications that arise in randomized static designs with no availability issues (i.e., $I_t \equiv 1$), $\boldsymbol{a}_{\Gamma, t} \in \{0,1\}^\Gamma$. First, the sum over $\boldsymbol{d}_{\Gamma, t} \in \overline{\mathcal{D}}_{\Gamma, t}$ picks out exactly one sequence $\boldsymbol{a}_{\Gamma, t} = (A_{t - \Gamma + 1}, \ldots, A_t) \eqqcolon \boldsymbol{A}_{\Gamma, t}$:
{\small
\[
        \boldsymbol{0}
        = \mathbb{P}_{n}\left(\sum_{t = \Gamma}^T  \ h(t, \boldsymbol{A}_{\Gamma, t}, V_{t - \Gamma + 1}) M(t, \boldsymbol{A}_{\Gamma, t}, V_{t - \Gamma + 1}; \widehat{\boldsymbol{\beta}}) \frac{Y_t - m(t, \boldsymbol{A}_{\Gamma, t}, V_{t - \Gamma + 1}; \widehat{\boldsymbol{\beta}})}{\prod_{j = t - \Gamma + 1}^t \pi_t(A_t; H_t)}\right).
\]
}%
    This estimator takes the form of a weighted GEE, regressing outcomes on the model $m$. We adopt a working independence correlation structure~\citep{liang1986}, 
    as alternative correlation structures may lead to bias~\citep{tchetgen2012}. 

Second, if, as in a marginally randomized (open-loop) design, the (known) treatment probabilities $\pi_t$ depend only on $A_t$ (and possibly $V_{t - \Gamma + t}$), then the weight function $h$ can be chosen to include terms that match $\prod_{j = t-\Gamma + 1}^t \pi_t$ exactly, leading to the following GEE:
\[    \boldsymbol{0}
        = \mathbb{P}_{n}\left(\sum_{t = \Gamma}^T  \ \widetilde{h}(t, \boldsymbol{A}_{\Gamma, t}, V_{t - \Gamma + 1})M(t, \boldsymbol{A}_{\Gamma, t}, V_{t - \Gamma + 1}; \widehat{\boldsymbol{\beta}}) \left\{Y_t - m(t, \boldsymbol{A}_{\Gamma, t}, V_{t - \Gamma + 1}; \widehat{\boldsymbol{\beta}})\right\}\right),\]
which also employs a working independence assumption across timepoints within a subject, and can be fit with standard software, depending on the choice of $\widetilde{h}$. In sum, when estimating static treatment regime HR-MSMs in marginally randomized designs with no availability issues, a simple unweighted regression of outcomes on the model $m$ yields valid estimates of projection parameters of the form in~\eqref{eq:msm-proj}. Nonetheless, one should use a variance estimator that accounts for the dependence of observations within a subject, such as  $\widehat{\boldsymbol{V}}$ above. We provide an implementation of our framework, described in Appendix~\ref{app:implement}.

We relegate development of a multiply robust estimator to Appendix~\ref{sec:mr}. 
This estimator involves estimation of nuisance models and is based on the nonparametric influence function of $\boldsymbol{\beta}_0$. It can achieve $\sqrt{n}$-convergence even with flexible nuisance model specifications. 

\section{Numerical Experiments}\label{sec:num-exps}
\subsection{Simulation Studies}\label{sec:sims}
\paragraph{Experimental Setup} We sought to assess performance of the proposed HR-MSM $\boldsymbol{\beta}$ estimators, and identify variance estimators that yield nominal coverage in the small $n$ settings common in optogenetics studies. 
To evaluate the accuracy of our framework in estimating mean counterfactuals, we designed the simulations such that the target estimands––contrasts of mean counterfactuals––corresponded to regression coefficients from the true HR-MSM. The data were simulated to mimic closed-loop optogenetics designs with positivity violations: we drew i) $X_0 \sim \mathrm{Bernoulli}(1/2)$; ii) $A_t \mid X_t \sim \mathrm{Bernoulli}(\frac{1}{2}X_t)$, for $t \in \{0, \ldots, T\}$; iii) $X_t \mid A_{t - 1} \sim \mathrm{Bernoulli}(0.4 + 0.4 A_{t - 1})$, for  $t \in [T]$; and iv) $   Y_t \mid X_{t - 1}, A_{t - 1}, X_t, A_t \sim \mathcal{N}(\alpha_1 X_{t - 1} + \alpha_2 A_{t - 1}  + \alpha_3 X_t + \alpha_4 A_t, \sigma_t^2),~\mbox{for}~ t \in [T]$,
%
where $(\alpha_1, \alpha_2, \alpha_3, \alpha_4) = (0.25, 2, 1.75, 0.5)$, and $\sigma_t^2 = 1$ for all $t$. These set availability indicator $I_t \equiv X_t$, for all $t$, and result in marginal probabilities $\mathbb{P}[X_t = 1] = \frac{1}{2}$, $\mathbb{P}[A_t = 1] = \frac{1}{4}$, for all $t$. We obtain a closed form for the parameters of the saturated two time-point dynamic treatment regime HR-MSM: letting $\boldsymbol{d}_{2, t} = (d_{t - 1}, d_t) \in \overline{\mathcal{D}}_{2, t}$ be arbitrary, and defining $J_{t-1} \coloneqq \mathds{1}(d_{t - 1} \equiv d_{t - 1}^{(1)})$, $J_{t} \coloneqq \mathds{1}(d_{t} \equiv d_{t}^{(1)})$, we derive in Appendix~\ref{app:hr_msm_sim} that
$\mathbb{E}(Y_t(\boldsymbol{d}_{2, t})) = \beta_0 + \beta_1 J_{t - 1} + \beta_2 J_{t} + \beta_3 J_{t - 1}J_t$,
where $\beta_0 =  0.5 \alpha_1 + 0.4 \alpha_3$, $\beta_1 = 0.5 \alpha_2 + 0.2 \alpha_3$, $\beta_2 = 0.4 \alpha_4$, and $\beta_3 = 0.2 \alpha_4$. Aggregating $\boldsymbol{\beta} = (\beta_{0}, \beta_1, \beta_2, \beta_3)$, the HR-MSM given by
\begin{equation}\label{eq:msm-sim}
    m(t, \boldsymbol{d}_{2, t}; \boldsymbol{\beta}) = \beta_{0} + \beta_1 J_{t - 1} + \beta_2 J_{t} + \beta_3 J_{t - 1}J_t
\end{equation}
is correctly specified under this data generating process, and we can evaluate the performance of the proposed estimator relative to these true values.

To show we can conduct valid inference on sequential excursion effects, we estimated the three estimands illustrated in Figure~\ref{fig:estimands}: (1) the ``blip'' effect of an additional exposure opportunity at $t$, while keeping treatment at $t-1$ fixed at the control condition ($\beta_2 = \mathbb{E}[Y_t(d_{t-1}^{(0)}, d_{t}^{(1)}) - Y_t(d_{t-1}^{(0)}, d_{t}^{(0)})]$); (2) ``effect dissipation'', comparing the effect of an exposure opportunity at one versus two {\color{black} timepoints} prior to the outcome ($\beta_2 - \beta_1 = \mathbb{E}[Y_t(d_{t-1}^{(0)}, d_{t}^{(1)}) - Y_t(d_{t-1}^{(1)}, d_{t}^{(0)})]$); and (3) the ``dose response'' curve of exposure opportunities, where the two single-opportunity sequences are averaged (the sequence $(\beta_0, ~\beta_0 + \frac{1}{2}\{\beta_1 + \beta_2\},~ \sum_{j=0}^3 \beta_j)$). This setup also illustrates how HR-MSMs are easily specified such that sequential excursion effects can be calculated as linear combinations of the $\boldsymbol{\beta}$ parameters.

Although the HR-MSM \eqref{eq:msm-sim} is correctly specified, we still estimated the $\boldsymbol{\beta}$ coefficients as projection parameters. We proceeded as if we started by defining $\boldsymbol{\beta}$ as the minimizers in \eqref{eq:msm-proj}, with $V_{t - \Gamma + 1} = \emptyset$ (i.e., no effect modifiers), and $h(t, \boldsymbol{d}_{\Gamma, t}) \equiv 1$ (i.e., constant weight function).
We applied the IPW point estimator for the HR-MSM parameters, $\widehat{\boldsymbol{\beta}}$, described in Section~\ref{sec:ident-est}, as well as the MR estimator described in Section~\ref{sec:mr}. To examine the performance of the MR estimator in an ideal scenario, we correctly specified all nuisance functions (derived in Appendix Section~\ref{app:nuisance-deriv}). For both estimators, we assessed the coverage of 95\% CIs constructed with our large sample variance estimator, 
and the sample size-adjusted \texttt{HC}, \texttt{HC2}, and \texttt{HC3} variance estimators 
\citep{sandwich_package}.
We tested performance with sample size $n \in \{6, 10, 30, 100\}$, and {\color{black} timepoints} $T \in \{10, 50, 500\}$ on 1000 simulation replicates. 

\begin{figure*}[t] 
	\centering
\includegraphics[width=0.99\linewidth]{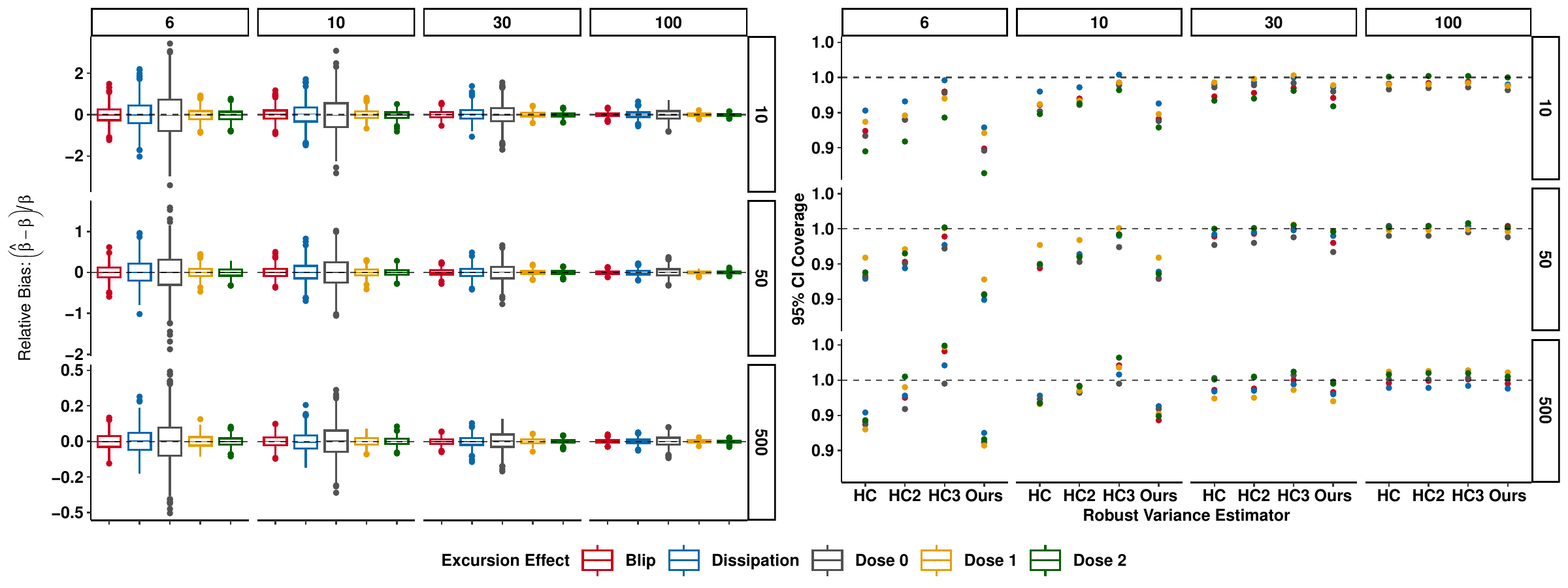}
	\caption{\footnotesize \textbf{Simulation Results} Panel columns indicate sample sizes, $n$, and panel rows indicate number of timepoints, $T$ (cluster sizes). [Left] Relative bias associated with each MSM regression coefficient. These results show that our estimator is unbiased for the target parameters. [Right] 95\% Confidence interval (CI) coverage for the marginal structural model (MSM) parameters corresponding to the sequential excursion effects. The 95\% CIs are constructed using one of three established robust variance estimators and our robust variance estimator. The nominal coverage is reached for large $n$ or large $T$ for all estimators.} 
		\label{fig:sim_results}
\end{figure*}

\paragraph{Results}
 We present the IPW results here and the MR results in Appendix~\ref{app:MR_sims}. We show results in terms of the sequential excursion effects in Figure~\ref{fig:sim_results}, and in terms of the HR-MSM $\widehat{\boldsymbol{\beta}}$ parameters in Appendix~\ref{sec:app_sims}. The Appendix also includes results from a wider range of $n$ and $T$ values. These figures show that our estimators are unbiased for the target sequential excursion effects and HR-MSM $\widehat{\boldsymbol{\beta}}$.
All CIs achieve 95\% coverage when $n$ is large. While sandwich estimators can yield small-sample bias \citep{sandwich_package}, the sample size-adjusted \texttt{HC3}-based CIs achieve close to 95\% coverage even in small $n$ and $T$ settings. Together these results show that we can conduct valid inference in the sample sizes common in neuroscience studies.



\subsection{Application: Optogenetic Study}\label{sec:application} 


\label{opto_background}

In this section, we reanalyze behavioral data from an existing study, \cite{spont_da}. In this experiment, the authors tested whether optogenetically stimulating dopamine (DA) release in the dorsolateral striatum 
while an animal engaged in a specific 
``pose'' (e.g., exploring, rearing, grooming) could ``teach'' mice to exhibit that movement more frequently.
This study was foundational in identifying the role this region plays in learning. To that end, the researchers implanted mice with optogenetics machinery, and filmed them freely-moving in a behavioral chamber. They used 
a pre-trained hidden Markov model to estimate an animal's pose in real-time.
They first measured the animals' target pose frequency on a baseline session without optogenetics. Then, on a subsequent treatment session, they applied the laser on a random subset of the target pose occurrences. They repeated this experiment for six target poses.
The experiment was carried out on animals in both the {\color{black} treatment} (optogenetics) arm, and a control arm where the laser should have no effect. 

To define {\color{black} timepoints (or what they refer to as ``trials'')}, the authors spliced the time-series of estimated pose classifications into intervals of consecutive timepoints with the same pose classification. If mice exhibited the target pose on $t$, they were considered  ``available'' for optogenetic stimulation, $I_t=1$, and were ``unavailable'' otherwise, $I_t=0$. The laser was applied ($A_t = 1$) with the dynamic policy, $\mathbb{P}(A_t = 1\mid I_t ) = 0.75I_t$. Denoting ${Y}^0_t$ and ${Y}_t$ as a binary indicator that an animal engaged in the target pose at $t$ of the \textit{baseline} and \textit{treatment} sessions, respectively, the authors estimated  $\psi = \left( \mathbb{E}[\bar{{Y}}^1 \mid G = 1] - \mathbb{E}[\bar{{Y}}^0 \mid G = 1] \right ) - \left( \mathbb{E}[\bar{{Y}}^1 \mid G = 0] - \mathbb{E}[\bar{{Y}}^0 \mid G = 0] \right )$
where $\bar{Y}^0 = \sum_{t=1}^{T_0} {Y}^0_t$, $\bar{Y}^1 = \sum_{t=1}^T {Y}_t$, and $T, T_0 \in \mathbb{N}$ are the {\color{black} number of timepoints} in treatment and baseline sessions, respectively.\footnote{The authors used a Mann Whitney U Test applied to a summary across poses but, in keeping with the mean counterfactual-based causal estimands, we describe it in terms of means (not medians) and individual poses.} 
There were $n_1=28$ and $n_0=12$ animals in the optogenetics and control arms, respectively. $T$ ranged across animals/sessions from 1207-4876, with a mean of 3612 and IQR = $[3341, 3940]$. The authors reported a (pooled across target poses) positive optogenetics treatment effect estimate akin to $\widehat{\psi}$, suggesting DA stimulation causes an increase in target pose frequency.

Beyond this, there are a number of other questions that researchers could be interested in, but cannot be probed with standard methods. Conceptualizing optogenetics like a ``study drug'', we may ask whether stimulation immediately ``taught'' the animal the target pose, or whether the treatment effect on learning had a lagged onset. Similarly, did the effect of a single stimulation persist or dissipate across {\color{black} timepoints}? Did more treatments lead to more learning monotonically, or is there an antagonistic effect or non-monotonic dose-response curve? We next demonstrate how we can address these questions with our framework.



\subsection{Application Methods}
We applied our methods to assess the questions above. Specifically, we tested the effect of specific sequences of deterministic dynamic policies, $\boldsymbol{d}_{\Gamma, t}$ (occurring on {\color{black} timepoints} $t\in \{t-\Gamma+1,\ldots,t\}$), on the mean counterfactual $\mathbb{E}[Y_t(\boldsymbol{d}_{\Gamma, t})]$. We defined the outcome $Y_t$ as an indicator that the mouse exhibited the target pose on {\color{black} timepoint} $t+2$, the next {\color{black} timepoint} on which mice could exhibit the target pose given availability ($I_t=1$) on {\color{black} timepoint} $t$. 
In Appendix Section~\ref{sec:app-preprocess}, we provide code and analysis details for the HR-MSMs below.



\begin{figure} 
	\centering
\includegraphics[width=0.99\linewidth]{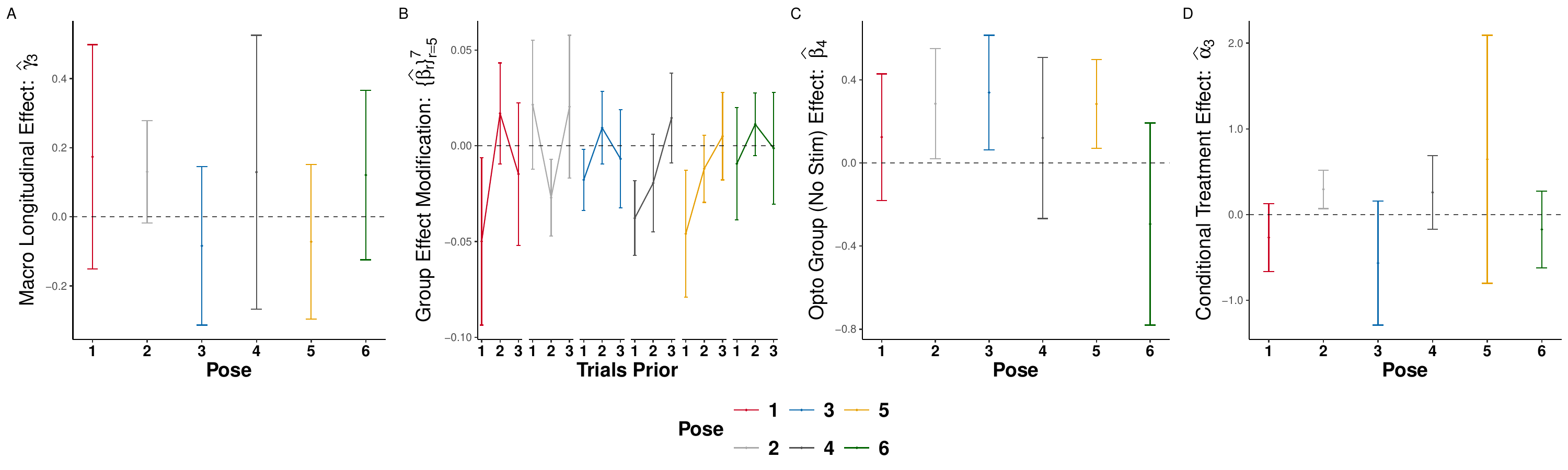}
\caption{\footnotesize
\textbf{Optogenetics Analyses.} 
Plots show coefficient estimates (error bars show 95\% CIs). Columns/colors indicate the target pose. [A] Macro summary analysis
identifies no significant effects. [B] Interaction term between $\texttt{G}$ and sequential excursion effect of a single ``dose'' occuring $r = 1,2,~\mbox{or}~3$ {\color{black} timepoints} prior to the proximal outcome that the ``dose'' occurred on. The excursion effects are significant for poses 1-5 (at, at least, one lag level). [C] Main effect of $\texttt{G}$ under a ``no-recent-treatment opportunity'' policy; this reflects the average causal effect of treatment arm among a population that has received no laser opportunities in the last $\Gamma = 3$ {\color{black} timepoints}. [D] Availability conditional estimate of interaction $\texttt{A} \times \texttt{G}$: laser $\times$ treatment arm interaction.} 		\label{fig:lag_int}
\end{figure}

\subsection{Application Results}
Our first question was whether standard methods reveal significant treatment effects when assessed with the estimands commonly tested in optogenetics studies. We applied a GEE with mean model, $\mbox{log} \left ( \mathbb{E} [\bar{Y}^s \mid G = g, S = s ] \right ) = \gamma_0 + \gamma_1 g + \gamma_2 s + \gamma_3 g \times s$,
where $S = 0,1$ indicates baseline and optogenetics sessions, respectively. 
 The estimate $\widehat{\gamma}_3$, shown in Figure \ref{fig:lag_int}A, provides a treatment effect estimate 
 for the \textit{observed} stochastic dynamic policy in \cite{spont_da}. We adopted a Poisson working model, since \cite{spont_da} analyzed $\bar{Y}^1, \bar{Y}^0 \in \mathbb{N}$, but unlike these authors, we analyzed these macro longitudinal summaries for each pose individually.
The model yielded no significant effects for any pose, in contrast with the results reported in \cite{spont_da} that analyzed data pooled across poses.
We show boxplots in Appendix Figure~\ref{fig:raw_count} of the subject-level summary $\bar{Y}^1 - \bar{Y}^0$ that is compared across arms in this model. Outcome levels are similar across arms for most poses, highlighting how macro summaries can obscure effects.

To assess an analogous local treatment effect using our method, we tested the impact of a single stimulation opportunity, and evaluated whether the effect had a lagged onset and/or dissipated across {\color{black} timepoints}. We included arm, $G$, as an effect modifier, to test whether the effect of treatment opportunity differed across arms. Setting $\Gamma = 3$, and restricting regimes to those with at most one treatment opportunity ``dose'', $\boldsymbol{d}_{3,t} \in \{(d_{t - 2}, d_{t-1}, d_t) : \sum_{j = t-2}^t \sigma_j(d_j) \leq 1\}$, where $\sigma_j(d_j) = \mathds{1}(d_j = d_j^{(1)})$, we fit the HR-MSM
\begin{equation} \label{dose_group_int}
\resizebox{.94\hsize}{!}{
    $\mbox{logit}\left ( \mathbb{E} [Y_t(\boldsymbol{d}_{\Gamma, t}) \mid  G = g] \right ) = \beta_0 + \sum_{r=0}^2 \beta_{r+1}  \sigma_{t - r}(d_{t - r}) + \beta_4 g + \sum_{r=0}^2 \beta_{5+r} g \times \sigma_{t - r}(d_{t - r}).$
    }
    \end{equation}
Thus, $\widehat{\beta}_r$ with $r \in [3]$ is an estimate of the log odds ratio comparing the mean counterfactual of $Y_t$ under a treatment sequence with a single dose (on $r = 1,2$, or $3$ {\color{black} timepoints} prior)
vs. a treatment sequence with zero dose, among control arm animals. Figure \ref{fig:estimands}B illustrates the analogous effect under a static regime.
The interaction terms, $\{\widehat{\beta}_r\}_{r=5}^7$ quantify how these causal effects of a recent treatment opportunity differ between the two arms.



The results from our model show that stimulation opportunities in the treatment arm tend to \textit{reduce} the odds of the outcome, compared to the control arm. As shown in Figure \ref{fig:lag_int}B, these effects are significantly negative for at least one lag level in five out of six target poses. In personal communications, the authors of \cite{spont_da} stated that this result appeared consistent with their finding that animal exploration increased right after stimulation (quantified as higher pose ``entropy''). The results also reveal that the laser's effect tends to dissipate across {\color{black} timepoints} in both optogenetics and control arms (shown in Appendix Figure~\ref{fig:lag-main-effects}). Figure \ref{fig:lag_int}C shows the main effect of arm under a treatment sequence of dose zero. This is essentially an estimate of the ``long-term'' effect of DA stimulation: $\widehat{\beta}_4$ is the log odds ratio of treatment arm under a ``no–recent–stimulation'' policy. We present a sensitivity analysis in Appendix~\ref{sec:gamma-sensitivity} that shows these results are stable $\Gamma$ values. 

Next, we fit the analogous model for the availability-conditional estimand  \citep{Boruvka2018} to determine whether current excursion effect methods (i.e., confined to $\Gamma = 1$) identify the same treatment effects: $\mbox{logit}\left ( \mathbb{E} [Y_t(a_t) \mid I_{t} = 1, G = g] \right ) = \alpha_0 + \alpha_1 a_t+ \alpha_2 g + \alpha_3 g \times a_t$. Figure \ref{fig:lag_int}D shows that effect estimates $\widehat{\alpha}_3$ are significant in only one pose. These results highlight how our approach can uncover a greater number of effects that are obscured when estimated with analogous availability-conditional estimands  confined to $\Gamma=1$ regimes.

\subsection{Dose-Response Excursion Effects}\label{sec:dose-resp}
\paragraph{History-Restricted MSM} We next asked whether optogenetic stimulation monotonically increases learning, or whether there can be ``too much of a good thing'' in learning. 
To test this, we fit an HR-MSM within the treatment arm ($G=1$) to estimate the causal effect of ``dose'', the number of treatment opportunities in the previous $\Gamma = 5$ {\color{black} timepoints}:


\begin{equation} \label{eq:dose_trt}
    \mbox{logit}\left ( \mathbb{E} [Y_t(\boldsymbol{d}_{\Gamma, t}) \mid  G=1] \right ) = \beta_0 + \sum_{r=1}^3 \beta_r \mathds{1} \bigg( \sum_{j = t - \Gamma + 1}^ t \sigma_j(d_j) = r \bigg),
    \end{equation}

where $\sigma_j(d_j) = \mathds{1}(d_j = d_j^{(1)})$. The coefficient $\widehat{\beta}_r$ is an estimate of the log odds ratio comparing the mean counterfactual of $Y_t$ for a treatment sequence of doses $r = 1, 2, 3$ compared to one of dose zero (see Figure \ref{fig:estimands}C for an illustration of the static regime analogue). A dose of three is the maximum feasible dose for $\Gamma = 5$, since a pose cannot occur on two consecutive {\color{black} timepoints}. Appendix Section \ref{lag_appendix} has analysis results for lagged outcomes to show that our framework can incorporate outcome sequences for general {\color{black}$\tilde{Y}^{(\Delta)}_t = f(Y_t, Y_{t+1},...,Y_{t+\Delta})$}.

The dose-response effect estimates, $\{\widehat{\beta}_r \}_{r=1}^3$, from HR-MSM \eqref{eq:dose_trt} are shown in Appendix Figure \ref{fig:dose_optoDA}A. This illustrates the capacity of our approach to identify a clear dose-response effect: within the past $\Gamma = 5$ {\color{black} timepoints}, each additional opportunity for a stimulation \textit{causes} an increase in the odds of engaging in the target pose on the next {\color{black} timepoint}. 
The effects are significant for at least one dose value in all but two target poses. In Appendix~\ref{sec:dose-resp-time}, we analyze whether these dose-response effects evolve across timepoints, $t$.

\paragraph{Conditional Excursion Effect}
We next estimate an availability-conditional estimand \citep{Boruvka2018}, to determine if existing excursion effect methods have the capacity to reveal the effects identified with our method. We estimate this in the MSM
\begin{equation} \label{eq:dose_cond}
    \mbox{logit}\left ( \mathbb{E} [Y_t(a_t) \mid I_{t} = 1, G=1] \right ) = \alpha_0 + \alpha_1 a_t.
    \end{equation}
 Appendix Figure~\ref{fig:dose_optoDA}B shows the availability-conditional treatment effect estimates, $\widehat{\alpha}_1$ estimated in model \eqref{eq:dose_cond}. It identifies no significant effects for any target pose. The conditional estimand, often referred to as a ``blip effect'' (see Figure \ref{fig:estimands}A for an illustration) is only defined for the effect of applying the laser on the most recent {\color{black} timepoint} (i.e., a dose of 1), and thus cannot estimate dose-response profiles. In contrast, our approach can test \textit{sequential} excursion effects (i.e., for policies with $\Gamma > 1$), enabling the estimation of a dose-response profile that reveals treatment effects here. Importantly, the effect estimates $\widehat{\beta}_1$ and $\widehat{\alpha}_1$ have different interpretations because $\widehat{\beta}_1$ reflects a causal effect of a single treatment opportunity for any of the last $\Gamma=5$ {\color{black} timepoints}, and $\widehat{\beta}_1$ is not interpreted as conditional on availability. These results suggest that the optogenetic stimulation altered learning monotonically. We also show the same analysis conducted in the control arm ($G=0$) in Appendix Section~\ref{sec:dose-resp-ctrl}. More generally, the above analyses show we can reliably estimate sequential excursion effects that enable testing a range of scientific questions.



Finally, in Appendix Section~\ref{sec:app-secondary}, we asked whether optogenetic stimulation increased pose learning equally across all animals, or whether improvement depended on how much they were inclined to do that pose initially.
We tested this by adding an interaction term between ``natural'' target pose frequency (the total target pose counts from the baseline session) and ``dose'' to HR-MSM \eqref{eq:dose_trt}. We found that  the interaction was significant for two poses (for at least one dose). For one pose, the laser increased target pose frequency more with high baseline pose counts, while it decreased for the other target pose. This suggests that the effects of DA depend on the movement being learned and on initial subject affinity for the pose. More broadly, this shows how our framework can incorporate effect modifiers.

\subsection{Application Conclusions}
Together these results illustrate how our proposed \textit{sequential} excursion effect framework can reveal effects that are missed by both standard ``macro'' longitudinal effects, and existing conditional one time-point excursion effect methods. 
Our finding that macro summaries 
show almost no differences between arms highlights how ``treatment–confounder'' feedback can obscure strong treatment effects in closed-loop designs, even when inspecting simple averages of observed outcomes.
Intuitively, this is because the observed mean outcomes comprise the total effect of the specific optogenetic stimulation sequences that \textit{happened} to occur in the study, mediated through (and marginalized over) effects on availability $I_t$ at subsequent {\color{black} timepoints} \citep{hernan_causal_2023}.\footnote{We provide an intuitive description of how these can dilute (or exaggerate) effects, potentially as a result of treatment–confounder feedback in Appendix Section~\ref{trt_conf_feedback}.} The latter is often considered a nuisance when interest lies in comparing different treatment sequence patterns, and is only properly accounted for with causal methods (e.g., IPW, $g$-formula \citet{robins1986}).

Sequential excursion effects, on the other hand, provide estimates of mean counterfactuals under specific deterministic dynamic treatment regimes, and thus do not suffer from these drawbacks. Indeed, our analyses reveal immediate negative effects of DA stimulation (detectable at the next {\color{black} timepoint}), and slow positive effects (i.e., in treatment relative to control arms). We also find the control arm exhibits positive, off-target effects of the laser, even though it should not directly cause neural firings. Together the opposing signs of these ``fast''/``slow'' and on/off-target causal effects may dilute the magnitude of macro summaries of the outcome (e.g., total pose counts). By enabling estimation of sequential excursion effects (i.e., $\Gamma > 1$), we can reveal effect profiles (e.g., dose-response) not possible with availability-conditional estimands whose definition is confined to $\Gamma = 1$ regimes. As we observed, the optogenetics arm sometimes exhibits an excursion effect not present in the control arm, providing analysts a tool to disentangle laser effects, caused by increased neural firing, from off-target effects. When off-target effects are not a concern, our framework enables estimation of \textit{within-arm} causal effects without having to collect data in a control arm, thereby reducing the number of animals required. Finally, by including effect-modifiers, one can analyze how causal effects evolve across {\color{black} timepoints}.

\section{Proposals for the Optogenetics Communitiy}\label{sec:recs-opto}


We recommend that investigators use open-loop experimental designs, unless the scientific question necessitates a closed-loop protocol. This is because the effects of static treatment sequences can be estimated under open-loop (static) designs, these effects are often easier to interpret, and, unlike in the closed-loop setting, one can directly estimate the effects of different optogenetic stimulation patterns without considering the availability-respecting dynamic treatment regimes proposed in this paper. Moreover, closed-loop (dynamic) experimental protocols also make between-arm contrasts inherently difficult to interpret, because subjects in each arm have diverging treatment histories that can differentially interact with subsequent treatments. Standard macro summaries, sequential excursion effects, and availability-conditional excursion effects are all affected by this phenomenon. For example, in the optogenetics arm, \textit{both} on-target and off-target effects of stimulation at each {\color{black} timepoint} are potentially interacting with treatment history. Unfortunately, this is \textit{not} accounted for by ``subtracting off'' effects through, for example, 
comparing $(d_{t-1}^{(0)},d_t^{(0)})$ vs. $(d_{t-1}^{(1)},d_t^{(1)})$ (i.e., the always–treat–when–available vs. never-treat) regimes across treatment and control arms. Similarly, the macro summaries accumulate these ``differential history-interactions'' across {\color{black} timepoints} as a byproduct of the design. 
We therefore recommend caution when interpreting 1) between-arm macro summaries, or 2) interactions between arm and excursion effects (sequential or availability-conditional)
in closed-loop designs.

We recommend designs with stochastic experimental policies, unless there is a scientific reason to implement a deterministic policy. 
This ensures that 
positivity violations are avoided, and one can estimate excursion effects. 
In the absence of design constraints, we also recommend measuring behavior on as many {\color{black} timepoints} as possible (i.e., large $T$), with as few conditions as necessary, setting $\mathbb{P}(A_t = 1\mid I_t = 1) = 0.5$ to increase power.



Next, we emphasize that $\mathbb{P}(A_t \mid I_t)$ of one subject should be set independently of data from all other subjects. In some longitudinal designs (e.g., \citet{DA_adapts}), authors have adjusted $\mathbb{P}(A_t \mid I_t)$ on each session in an effort to maintain comparable ``total stimulation doses'' (throughout the experiment) across arms. 
Setting treatment probabilities using data from any other subjects in the experiment can invalidate causal inference, as we anticipate this would induce interference across subjects in the treatment distributions. Excursion effects, on the other hand, can be parameterized to provide a ``dose-controlled'' treatment effect estimate. We believe this is more advisable than manually adjusting treatment probabilities with the hope that the final ``total dose'' is comparable across arms. Given that the effect of adjusting treatment probabilities may elicit both interference and unanticipated impacts on ``macro summaries'' for dynamic stochastic regimes (due to treatment–confounder feedback), we recommend designs that fix the treatment probability across {\color{black} timepoints}. 
We encourage investigators
to document the $\mathbb{P}(A_t \mid I_t)$ at each $t$, as
the proposed causal framework benefits from knowing these probabilities. 

While one should select $\Gamma$ on the basis of subject-matter knowledge and the scientific question, we found that setting $2 \leq \Gamma \leq 5$ was sufficient to estimate many characteristics of the treatment effect profile and yielded reasonable variance estimates in the small sample sizes common in optogenetics. Critically, one should never include time-varying variables occurring after {\color{black} timepoint} $t-\Gamma+1$ as an effect modifier in the HR-MSM. 
When time-varying experimental variables are randomized (e.g., cue-type), one could include these in the HR-MSM (even for variables measured on timepoints $\{t-\Gamma+1,...,t\}$) by defining a combination of this variable and optogenetic stimulation as a compound treatment; the extension of our methods to general discrete treatments is straightforward. We make additional interpretation and modeling recommendations in Appendix Section~\ref{app:model_interp}.

\section{Discussion}\label{sec:discuss}
We have proposed a non-parametric excursion effect {\color{black} approach} for 
optogenetics studies that {\color{black} builds upon} the conditional estimands proposed in \cite{Boruvka2018} to longitudinal policies in the presence of positivity violations. {\color{black} We leverage dynamic regimes from the MSM \citep{murphy2001} and excursion effect \citep{Boruvka2018} literatures, and extend to optogenetics settings to estimate neuroscience-relevant effects.} Our results permit any flexible HR-MSM specification, where the coefficients have a valid interpretation even without assuming correct model specification. We proposed an IPW estimator, and proved its consistency and asymptotic normality under mild assumptions. Moreover, we provide asymptotically valid standard error estimates and confidence intervals.

By leveraging the randomization in typical optogenetics experimental designs, and properly accounting for the time-varying confounding by availability status, the proposed estimands represent \textit{causal} effects and contrasts between exposure opportunity sequences, averaged over the population under study.
For this reason, the effects have causal interpretations instead of, for example, merely capturing correlations between availability status and outcomes over time (e.g., when $I_t$ captures some target behavior, and the outcome, $Y_t$ is some summary of this behavior in the future). We showed in Section~\ref{sec:hr-msm} that, when the sharp null of no causal effect of treatment (e.g., optogenetically increasing dopamine) holds, contrasts of availability-respecting estimands (for any $\Gamma$) would also be null. Thus, although effects for the availability-respecting dynamic policies should be interpreted in terms of treatment \textit{opportunities}, our framework can test for causal effects of the \textit{treatment}, itself.

In analyzing optogenetics studies with HR-MSMs, one should consider that the wider class of excursion effects are sensitive to the experimental design used to collect the data. While this has garnered some criticism within the causal inference literature \citep{guo2021}, this property may actually be desirable in neuroscience studies. Indeed, the causal effects of any specific optogenetic protocol are typically interpreted within the context of the experimental design, since the intervention is usually just a means to study the \textit{natural} (unstimulated) role of a given neural pathway in behavior. Optogenetic stimulation is not akin to a treatment a patient would actually take, and thus estimating its population effects \textit{outside} the study's context may not be of great interest. Finally, if multiple excursion effects are pre-specified as the principal target parameters, we recommend adjusting for multiple comparisons. Otherwise, we envision our framework being used for hypothesis-generating secondary analyses, where multiple comparison adjustment may be less necessary.

For applications in which treatment probabilities are unknown, or measured with error \citep{shi2023}, and sample sizes are larger, we anticipate multiply-robust analogues of our framework may be useful. 
We present an initial exploration of these approaches for the two-timepoint case, but the extension to the more general case should be a focus of future work when dealing with applications that call for more intervention timepoints. 


Although we focus on optogenetics here, the analysis framework and statistical methods proposed are relevant for a wide range of neuroscience and psychology experiments for which the ``local/micro'' longitudinal structure is of scientific interest, and that exhibit similar experimental considerations. Indeed, closed-loop designs are common in behavioral studies in human neuroimaging, psychiatry and cognitive sciences \citep{closed_loop_neurmod2, debettencourt2015closed, jangraw2023highly}. We hope our methods constitute a useful methodological contribution to the causal inference literature, and that they will help applied researchers exploit the rich information contained in their experiments.







\singlespacing


\newpage

\begin{appendices}

\section{Micro Longitudinal Effects}
In Appendix Figure~\ref{fig:estimands_appendix}, we illustrate additional micro longitudinal effects that can be probed with our sequential excursion effect framework. This figure has the same layout as Figure~\ref{fig:estimands}A-C.

\begin{figure*}
	\centering
		\begin{subfigure} 
		\centering
\includegraphics[width=0.7\linewidth]{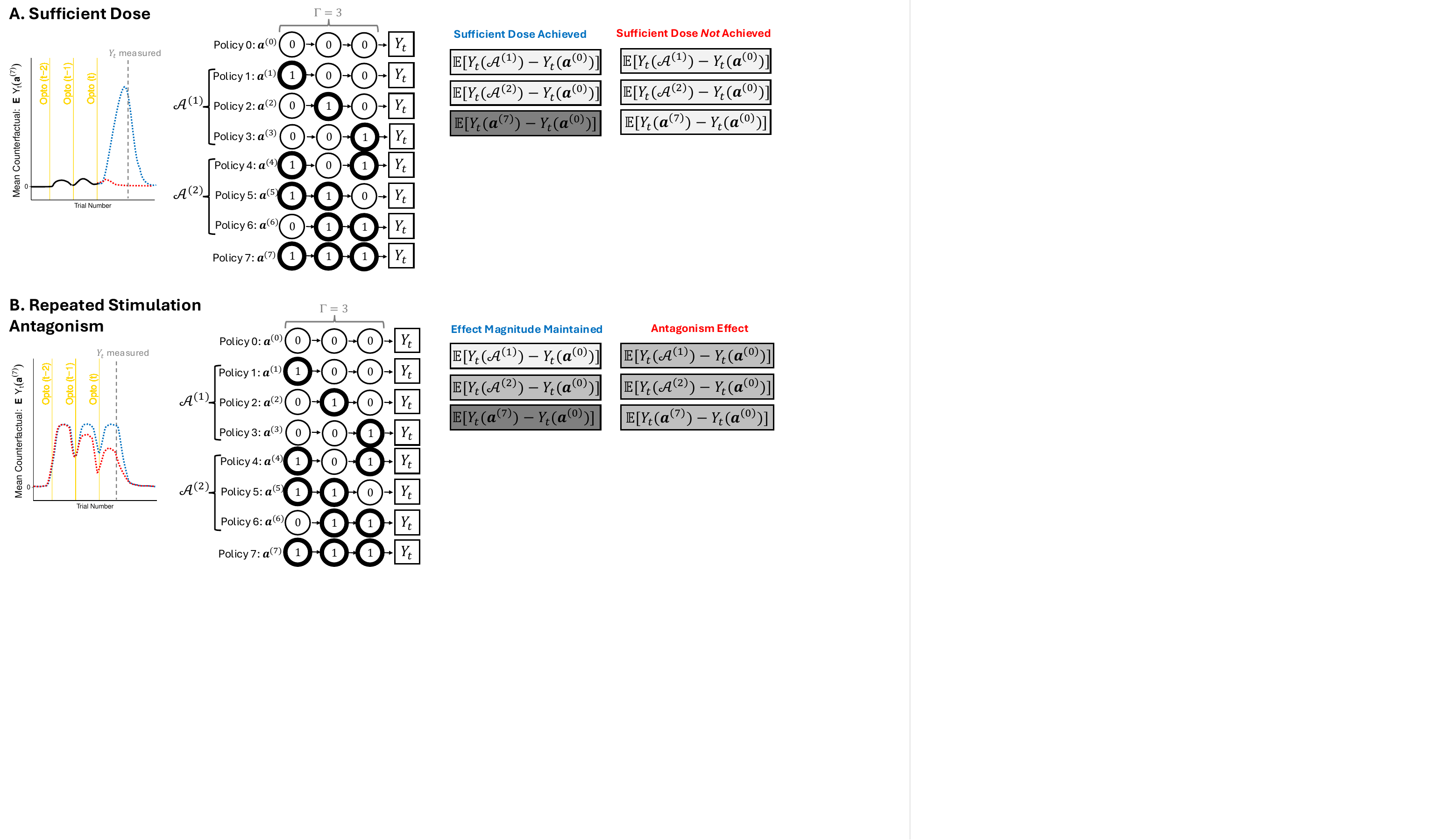}
	\end{subfigure}
	\begin{subfigure}
		\centering
\includegraphics[width=0.7\linewidth]{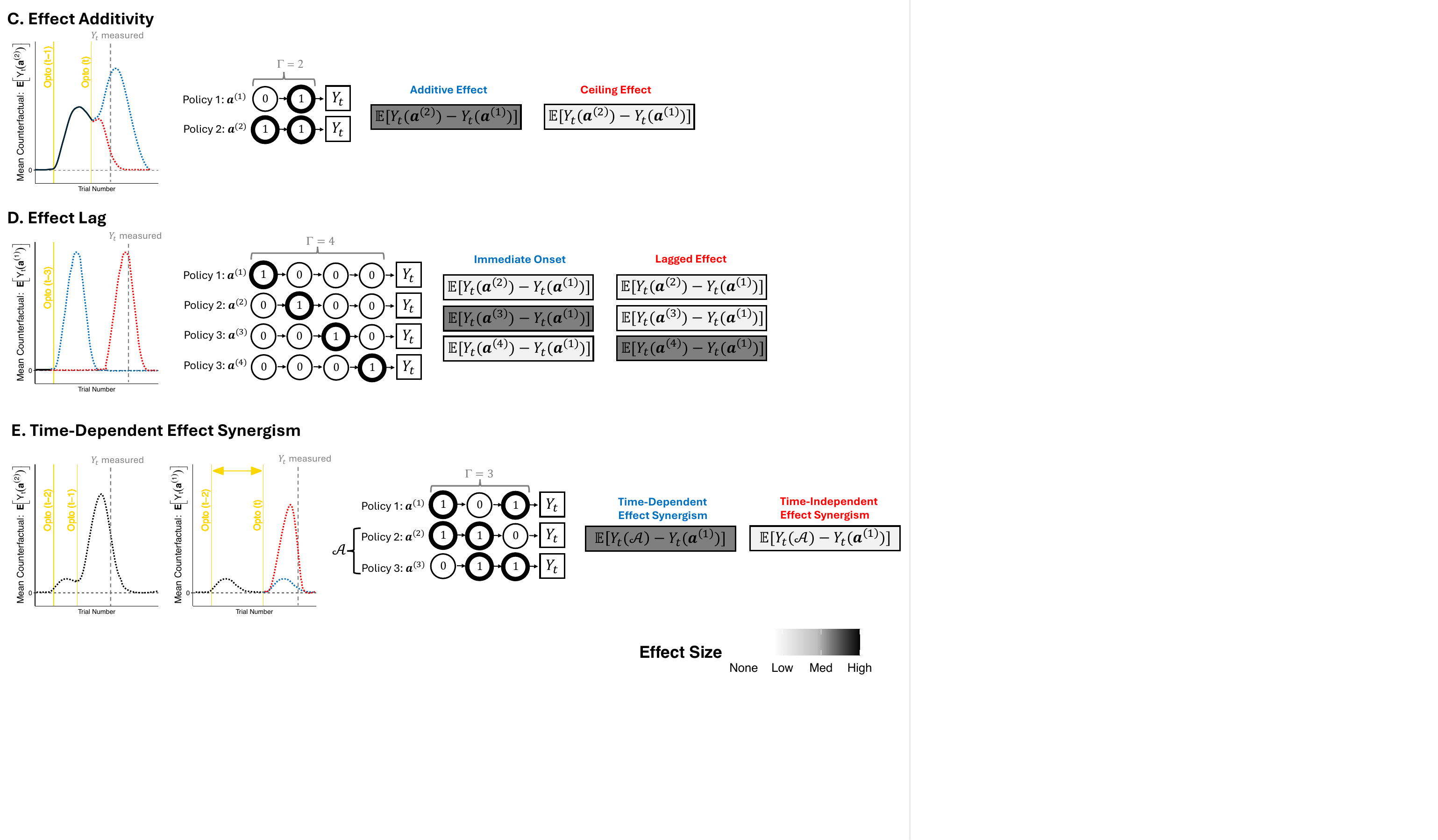}
	\end{subfigure}
\caption{\footnotesize\textbf{Example \textit{Sequential Excursion Effects}.}  The left panels show one setting where a sequence of laser simulations \color{black} do \color{black} or \color{blue} do not \color{black} have the indicated effect on the outcome. The middle panel shows deterministic static policies that could be used to construct a causal contrast to probe the effect. The right panel shows what the anticipated effect size (darker is larger) of those contrasts might be if there \color{black} is \color{black} or \color{blue} is not \color{black} the indicated effect profile. [A] Sufficient dose. The red line shows how three successive stimulations is required to trigger a large effect, whereas the effect profile in blue shows that the sufficient dose has not been reached. [B] Repeated stimulation anatagonism. The red line shows a negative dose-response, and the blue line shows a stable effect size. [C] Effect additivity. The red line shows a second stimulation triggers a larger response, whereas the blue shows that the second stimulation does not increase the response substantially beyond that of the first stimulation. [D] Effect Lag. The red line shows that the causal effect of stimulation is not visible until after a lag period. The blue line shows a setting where the effect is immediate. [E] Time-dependent effect synergism. The red line shows a setting where the effect is additive provided the stimulations occur close enough together (red line), but if stimulations occur far apart, this synergism does not occur (blue line).
} \label{fig:estimands_appendix}
\end{figure*}

\section{Illustration of Treatment-Confounder Feedback} \label{app:conf-ill}
We provide in this section a synthetic example in which two independent groups exhibit identical mean outcome patterns over time, but where the treatment (e.g., turning on laser in the brain) has a substantial effect in one group but not the other. As our construction will demonstrate, this phenomenon manifests due to treatment-confounder feedback leading to effects canceling out. In a similar fashion, one can similarly construct scenarios where effects are exaggerated.

Suppose $G \in \{0,1\}$ represents an experimentally manipulable marker (e.g., animals expressing opsin in the brain), and counterfactual outcomes under $G = g$ are denoted $Y_t^g$. We will suppose that potential outcomes generated in the active setting ($G = 1$) are given by
\[Y_t^1 \sim \mathcal{N}(\gamma_{0t} + \gamma_1 X_{t - 1} + \gamma_2 A_{t - 1} + \gamma_3 X_t + \gamma_4 A_t, \sigma_t^2),\]
and potential outcomes in the control condition ($G = 0$) are given by
\[Y_t^0 \sim \mathcal{N}(\gamma_{0t} + \gamma_1 X_{t - 1} + \gamma_3 X_t, \sigma_t^2),\]
i.e., the treatment (e.g., laser) has an effect when $G = 1$, but not when $G = 0$. 

Suppose further that a behavior $X_t$ is measured at all time points $t$, and determines whether or not treatment will be administered with positive probability. Like the outcomes, this behavior will be affected by the laser only when $G = 1$:
\[X_t^g \sim \mathrm{Bernoulli}(0.7 - 0.5 \, A_{t - 1}\, g), \text{ for } t \in \{1, \ldots, T\},\]
and $X_0^g \sim \mathrm{Bernoulli}(\frac{1}{2})$ at baseline.

Now we consider a study where animals are randomly assigned at baseline to either $G = 1$ or $G = 0$. At each time point $t$, the behavior $X_t$ is measured, and treatment is then drawn according to $A_t \sim \mathrm{Bernoulli}(0.8 X_t)$. By induction, $\mathbb{E}(A_t \mid G = 1) = 0.4$ and $\mathbb{E}(X_t \mid G = g) = 0.7 - 0.2 g$, for all $t$. It follows that
\begin{align*}
    &\mathbb{E}(Y_t \mid G = g) \\
    & = \gamma_{0t} + \gamma_1 \mathbb{E}(X_{t - 1} \mid G = g) + \gamma_2 \mathbb{E}(A_{t - 1} \mid G = 1) g + \gamma_3 \mathbb{E}(X_t \mid G = g) + \gamma_4 \mathbb{E}(A_t \mid G = 1) g \\
    &= \{\gamma_{0t} + 0.7(\gamma_1 + \gamma_3)\} + \{-0.2 (\gamma_1 + \gamma_3) + 0.4 (\gamma_2 + \gamma_4)\}g.
\end{align*}
Thus, the ``macro''/``global'' between-group mean difference trajectory is given by
\[\mathbb{E}(Y_t \mid G = 1) - \mathbb{E}(Y_t \mid G = 0) = -0.2 (\gamma_1 + \gamma_3) + 0.4 (\gamma_2 + \gamma_4),\]
which will be null if $\gamma_2 + \gamma_4 = 0.5(\gamma_1 + \gamma_3)$. Notice that this cancellation is possible even if the immediate effect of treatment on the outcome is quite strong, say if $\gamma_2$ and $\gamma_4$ are large and positive. The cancellation is made possible through the opposing effects of treatment on the intermediate behavior and the outcome: when $G = 1$, $A_{t - 1}$ negatively impacts $X_t$ but positively impacts $Y_t$. More generally, these $X_t$-$A_t$ feedback loops can lead to dilution or exaggeration of the actual effect of treatments when only analyzing observed mean outcomes.

We note that in the data generating scenario described in this appendix, the proposed dynamic treatment regime HR-MSM methodology would pick out non-null effects of treatment within the active group ($G = 1$), and show differing effects between groups, even if the condition above held such that observed mean outcomes were identical. This example thus serves to illustrate both the challenges with closed-loop designs and, despite these challenges, the ability of the proposed methodology to elucidate effects. \newline

\paragraph{Example Analysis on Synthetic Data}
To illustrate the above, we provide an example on a simulated dataset, taking $n = 100$, $T=500$, $\gamma_1 = \gamma_3 = 1$, $\gamma_2 = \gamma_4 = 0.5$.

\begin{figure}[H] 
		\centering
\includegraphics[width=0.99\linewidth]{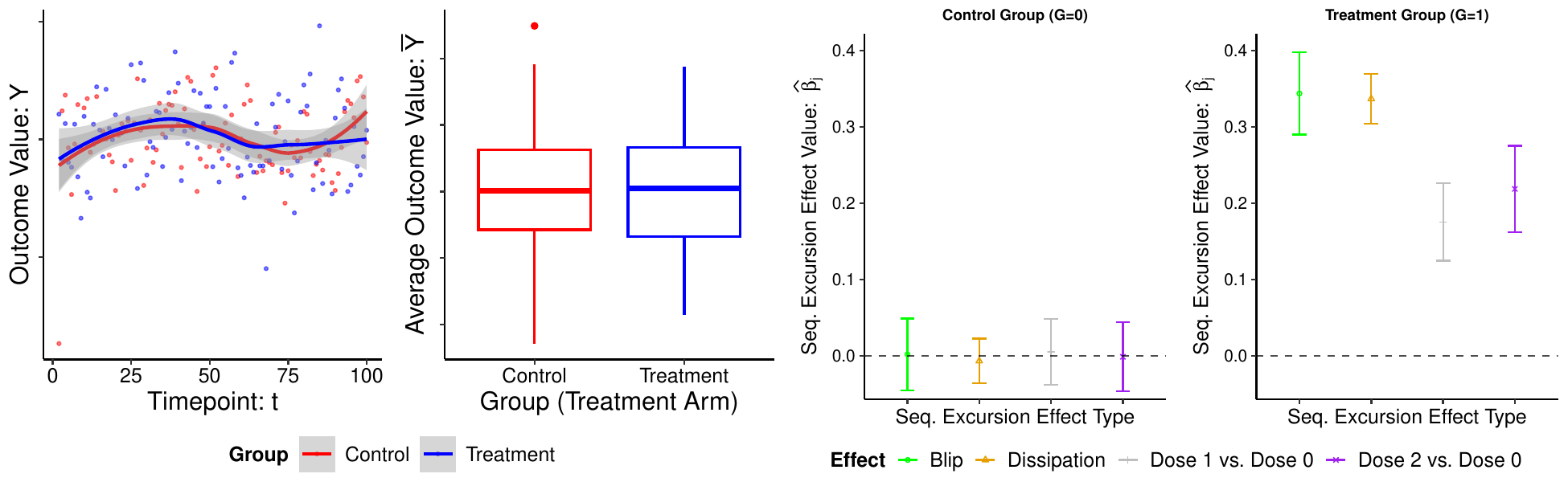}
	\caption{ \textbf{Treatment–Confounder Feedback Example} Sequential excursion effects reveal causal effects obscured in ``macro'' summaries. Analysis results from a simulated dataset following the argument above (in Appendix \ref{app:conf-ill}), taking $n = 100$, $T=500$, $\gamma_1 = \gamma_3 = 1$, $\gamma_2 = \gamma_4 = 0.5$. [A] Each dot is an outcome value, $Y^G_{i,t}$, for subject $i$ at timepoint $t$ from ``control'' ($G = 0$), or from ``treatment'' ($G = 1$) arms (groups). Lines are timepoint-specific means (averaged across subjects), estimated using a linear smoother (\texttt{loess}). (B) Same data as (A), but each point in boxplot is a subject's mean outcome value (averaged across timepoints). In (A)-(B), ``macro'' summaries show no differences due to treatment–confounder feedback: mean outcome values (averaged across subjects or timepoints) are nearly identical in both arms (groups). (C)-(D) Point estimates and 95\% CIs (error bars) of sequential excursion effects reveal ``local'' causal effects (in Treatment arm only), obscured in ``macro'' summaries (shown in (A)-(B)). } 
\label{fig:sim_results_rebuttal}
\end{figure}
\newpage

\section{Multiply Robust Estimation}\label{sec:mr}
In some optogenetics studies, treatment probabilities vary across time but may not be recorded (e.g., \cite{DA_adapts}). Similarly, in mobile health studies, the actual randomization probabilities may be unknown, or possibly not match those that were initially intended \citep{shi2023}. In such cases, one can proceed as before, but will have to construct models $\widehat{\pi}_t$ for each $t$. Note that Theorem~\ref{thm:asymp} assumes that these treatment probabilities are known by design, and thus does not account for uncertainty in estimating $\pi_t$. 

Parametric modeling of the treatment probabilities represents one simple option, though valid inference would rely on correct specification of these models. On the other hand, more flexible methods (e.g., machine learning algorithms) may result in consistent estimates of $\pi_t$, but the resulting estimate $\widehat{\boldsymbol{\beta}}$ would inherit the potentially slow (i.e., typically slower than $\sqrt{n}$) convergence rates of the flexible regression model. In order to alleviate these issues, we present here an estimator built on the ``doubly robust machine learning'' (DRML) paradigm \citep{chernozhukov2018double,kennedy2022semiparametric}. More concretely, we develop an estimator based on the nonparametric influence function of the projection parameter $\boldsymbol{\beta}_0$. Due to the second-order bias of this influence function, the convergence rate of the resulting estimator will depend on the \textit{product} of convergence rates for the underlying nuisance functions, and thus $\sqrt{n}$-convergence will be possible even with flexible nuisance model specifications.

We describe the estimator for $\Gamma = 2$, and leave the general case for future research. For any $t$ and $\boldsymbol{d}_{2, t} \equiv (d_{t -1}, d_t) \in \overline{\mathcal{D}}_{2, t}$, define $b_1^{d_t}(H_t) = \mathbb{E}_{\mathbb{P}}\left(Y_t \mid H_t, A_t = d_t(H_t)\right)$, and
$b_2^{d_{t-1}, d_t}(H_{t - 1}) = \mathbb{E}_{\mathbb{P}}\left(b_1^{d_t}(H_t) \mid H_{t-1}, A_{t-1} = d_{t-1}(H_{t - 1})\right)$. Moreover, let $\boldsymbol{b}_1 \equiv \{b_1^{d_t}: d_t \in \mathcal{D}_t^*, t \in \{2, \ldots, T\}\}$, and $\boldsymbol{b}_2 \equiv \{b_2^{d_{t - 1}, d_t}: (d_{t - 1}, d_t) \in \overline{\mathcal{D}}_{2,t}, t \in \{2, \ldots, T\}\}$ collect all $b_1^{d_t}$ and $b_2^{d_{t - 1}, d_t}$ models, and similarly let $\boldsymbol{\pi} = \{\pi_t : t \in [T]\}$. Then the following estimating function (when plugging in $\boldsymbol{\beta} = \boldsymbol{\beta}_0$) is proportional to the uncentered nonparametric influence function of $\boldsymbol{\beta}_0$:
{\small
\begin{align*}
            \psi(Z; \boldsymbol{\beta}, \boldsymbol{b}_1, \boldsymbol{b}_2, \boldsymbol{\pi})
            &= \sum_{t = 2}^T \, \, \, \, \sum_{\boldsymbol{d}_{2, t} \in \overline{\mathcal{D}}_{2, t}} h(t, \boldsymbol{d}_{2, t}, V_{t - 1})M(t, \boldsymbol{d}_{2, t}, V_{t - 1}; \boldsymbol{\beta})\bigg\{b_2^{d_{t-1}, d_t}(H_{t - 1}) - m(\boldsymbol{d}_{2, t}, t, V_{t - 1}; \boldsymbol{\beta}) \\
            & \quad \quad \quad \quad \quad \quad +\frac{\mathds{1}(A_{t - 1} = d_{t - 1}(H_{t - 1}))}{\pi_{t - 1}(A_{t - 1}; H_{t - 1})}\left(b_1^{d_t}(H_t) - b_2^{d_{t-1}, d_t}(H_{t - 1})\right) \\
            & \quad \quad \quad \quad \quad \quad +\frac{\mathds{1}(A_{t - 1} = d_{t - 1}(H_{t - 1}))\mathds{1}(A_{t} = d_{t}(H_{t}))}{\pi_{t - 1}(A_{t - 1}; H_{t - 1})\pi_{t}(A_{t}; H_{t})}\left(Y_t - b_1^{d_t}(H_t)\right)\bigg\},
\end{align*}
}

It is now straightforward to define an estimator based on $\psi$: fitting $(\widehat{\boldsymbol{b}}_1, \widehat{\boldsymbol{b}}_2, \widehat{\boldsymbol{\pi}})$ on separate independent data, let $\widehat{\boldsymbol{\beta}}_{\mathrm{mr}}$ be the solution to $\boldsymbol{0} = \mathbb{P}_n\left[\psi(Z; \boldsymbol{\beta}, \widehat{\boldsymbol{b}}_1, \widehat{\boldsymbol{b}}_2, \widehat{\boldsymbol{\pi}})\right]$, in $\boldsymbol{\beta}$. In practice when one has access to only one sample, one can perform sample splitting and cross-fitting to achieve the same performance~\citep{bickel1988, robins2008, zheng2010, chernozhukov2018double}. For simplicity, we assume a single data split, but the extension to the cross-fitted case is straightforward. In the following result, we write $\mathbb{P}(f) = \int f(z) \, d\mathbb{P}(z)$ and $\lVert f \rVert = \left\{\int f(z)^2 \, d\mathbb{P}(z)\right\}^{1/2}$ for the mean and $L_2(\mathbb{P})$-norm, respectively, of any function $f$, possibly dependent on the training data, and we write $\boldsymbol{a}^{\otimes 2} = \boldsymbol{a} \boldsymbol{a}^T$ for any vector $\boldsymbol{a} \in \mathbb{R}^q$.

\begin{theorem}\label{thm:asymp-mr}
    Suppose Assumptions~\ref{ass:consistency}--\ref{ass:NUC} hold, and $\mathbb{P}[\widehat{\pi}_t(d_t(H_t); H_t) \geq \epsilon] = 1$, as well as $\mathbb{P}[ h(t, \boldsymbol{d}_{2, t}, V_{t - 1})M(t, \boldsymbol{d}_{2, t}, V_{t -1 })\leq M] = 1$ for all $\boldsymbol{d}_{2,t} = (d_{t - 1}, d_t) \in \overline{\mathcal{D}}_{2, t}$, for all $t \in [T]$. 
    Moreover, assume that
    \begin{enumerate}[(i)]
        \item The class $\{\psi(\, \cdot \, ; \boldsymbol{\beta}, \boldsymbol{b}_1, \boldsymbol{b}_2, \boldsymbol{\pi}): \boldsymbol{\beta} \in \mathbb{R}^q\}$ is Donsker in $\boldsymbol{\beta}$, for each fixed $(\boldsymbol{b}_1, \boldsymbol{b}_2, \boldsymbol{\pi})$.
        \item $\boldsymbol{\beta}_{\mathrm{mr}} \overset{\mathbb{P}}{\to} \boldsymbol{\beta}_0$, $\lVert \widehat{\pi}_{t} - \pi_{t}\rVert = o_{\mathbb{P}}(1)$, $\lVert \widehat{b}_1^{ d_t} - b_1^{d_t}\rVert = o_{\mathbb{P}}(1)$, $\lVert \widehat{b}_2^{d_{t-1}, d_t} - b_2^{d_{t - 1}, d_t}\rVert = o_{\mathbb{P}}(1)$, for all $t$ and $\boldsymbol{d}_{2, t} = (d_{t - 1}, d_t) \in \overline{\mathcal{D}}_{2,t}$.
        \item The function $\boldsymbol{\beta} \mapsto \mathbb{P}\left(\psi(Z ; \boldsymbol{\beta}, \boldsymbol{b}_1, \boldsymbol{b}_2, \boldsymbol{\pi})\right)$ is differentiable at $\boldsymbol{\beta}_0$, uniformly in $(\boldsymbol{b}_1, \boldsymbol{b}_2, \boldsymbol{\pi})$, and $U(\boldsymbol{\beta}_0, \boldsymbol{b}_1, \boldsymbol{b}_2, \boldsymbol{\pi}) = \left. \nabla_{\boldsymbol{\beta}}\, \mathbb{P}\left(\psi(Z ; \boldsymbol{\beta}, \boldsymbol{b}_1, \boldsymbol{b}_2, \boldsymbol{\pi})\right) \right|_{\boldsymbol{\beta} = \beta_0}$ is invertible such that the nuisance estimates satisfy satisfies $U(\boldsymbol{\beta}_0, \widehat{\boldsymbol{b}}_1, \widehat{\boldsymbol{b}}_2, \widehat{\boldsymbol{\pi}}) \overset{\mathbb{P}}{\to} U(\boldsymbol{\beta}_0, \boldsymbol{b}_1, \boldsymbol{b}_2, \boldsymbol{\pi})$.
    \end{enumerate}
    Then $\widehat{\boldsymbol{\beta}}_{\mathrm{mr}} - \boldsymbol{\beta}_0 = O_{\mathbb{P}}\left(\frac{1}{\sqrt{n}} + R_n\right)$,
where
\[R_n = \sum_{t = 2}^T \, \, \, \, \sum_{\boldsymbol{d}_{2, t} \in \overline{\mathcal{D}}_{2, t}} \bigg\{\lVert \widehat{\pi}_{t - 1} - \pi_{t - 1}\rVert\cdot \lVert \widehat{b}_2^{d_{t-1}, d_t} - b_2^{d_{t - 1}, d_t}\rVert 
    + \lVert \widehat{\pi}_{t} - \pi_{t}\rVert\cdot \lVert \widehat{b}_1^{ d_t} - b_1^{d_t}\rVert  \bigg\}.\]
If, in addition, $\lVert \widehat{\pi}_{t - 1} - \pi_{t - 1}\rVert\cdot \lVert \widehat{b}_2^{d_{t-1}, d_t} - b_2^{d_{t - 1}, d_t}\rVert 
    + \lVert \widehat{\pi}_{t} - \pi_{t}\rVert\cdot \lVert \widehat{b}_1^{ d_t} - b_1^{d_t}\rVert = o_{\mathbb{P}}(n^{-1/2})$, for all values of $t$ and $\boldsymbol{d}_{2, t} = (d_{t - 1}, d_t) \in \overline{D}_{2,t}$, then
    $\sqrt{n}(\widehat{\boldsymbol{\beta}}_{\mathrm{mr}} - \boldsymbol{\beta}_0) \overset{d}{\to} \mathcal{N}(\boldsymbol{0}, \boldsymbol{V}^*(\boldsymbol{\beta}_0))$,
    where we define $\boldsymbol{V}^*(\boldsymbol{\beta}) = U(\boldsymbol{\beta}, \boldsymbol{b}_1, \boldsymbol{b}_2, \boldsymbol{\pi})^{-1} \mathbb{P}\left\{\psi^{\otimes 2}(Z ; \boldsymbol{\beta}, \boldsymbol{b}_1, \boldsymbol{b}_2, \boldsymbol{\pi})\right\} U(\boldsymbol{\beta}, \boldsymbol{b}_1, \boldsymbol{b}_2, \boldsymbol{\pi})^{-1}$.
\end{theorem}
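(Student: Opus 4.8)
The plan is to treat $\widehat{\boldsymbol{\beta}}_{\mathrm{mr}}$ as a $Z$-estimator solving $\boldsymbol{0} = \mathbb{P}_n[\psi(Z; \boldsymbol{\beta}, \widehat{\boldsymbol{b}}_1, \widehat{\boldsymbol{b}}_2, \widehat{\boldsymbol{\pi}})]$ with a cross-fitted nuisance estimate, and to exploit the fact that $\psi$ is (proportional to) the nonparametric influence function of the projection parameter $\boldsymbol{\beta}_0$ defined in \eqref{eq:msm-proj} — which makes it Neyman-orthogonal, so that the error from estimating $(\boldsymbol{b}_1, \boldsymbol{b}_2, \boldsymbol{\pi})$ enters only through a second-order remainder. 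First I would record the population identity $\mathbb{P}[\psi(Z; \boldsymbol{\beta}_0, \boldsymbol{b}_1, \boldsymbol{b}_2, \boldsymbol{\pi})] = \boldsymbol{0}$ at the true nuisances: conditioning successively on $H_t$ and $H_{t-1}$ and using the definitions $b_1^{d_t}(H_t) = \mathbb{E}(Y_t \mid H_t, A_t = d_t(H_t))$, $b_2^{d_{t-1},d_t}(H_{t-1}) = \mathbb{E}(b_1^{d_t}(H_t)\mid H_{t-1}, A_{t-1} = d_{t-1}(H_{t-1}))$ together with Assumption~\ref{ass:NUC}, the two IPW correction terms have conditional mean zero, and what remains reduces to the identified estimating equation of Proposition~\ref{prop:ident}, which $\boldsymbol{\beta}_0$ solves by construction.

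The central computation is then to control $\mathbb{P}[\psi(Z; \boldsymbol{\beta}_0, \widehat{\boldsymbol{b}}_1, \widehat{\boldsymbol{b}}_2, \widehat{\boldsymbol{\pi}})] = \mathbb{P}[\psi(Z; \boldsymbol{\beta}_0, \widehat{\boldsymbol{b}}_1, \widehat{\boldsymbol{b}}_2, \widehat{\boldsymbol{\pi}})] - \mathbb{P}[\psi(Z; \boldsymbol{\beta}_0, \boldsymbol{b}_1, \boldsymbol{b}_2, \boldsymbol{\pi})]$, treating $\widehat{\boldsymbol{\eta}} = (\widehat{\boldsymbol{b}}_1, \widehat{\boldsymbol{b}}_2, \widehat{\boldsymbol{\pi}})$ as fixed by the sample-split independence. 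Fixing $t$ and $\boldsymbol{d}_{2,t} \in \overline{\mathcal{D}}_{2,t}$, I would expand the three bracketed terms of $\psi$ via the tower property, conditioning first on $H_t$ (using the true $b_1^{d_t}$) and then on $H_{t-1}$ (using the true $b_2^{d_{t-1},d_t}$). The innermost IPW layer turns $Y_t - \widehat b_1^{d_t}(H_t)$ into $(\pi_t/\widehat\pi_t - 1)(b_1^{d_t}(H_t) - \widehat b_1^{d_t}(H_t))$, and the next layer does the analogous thing with $b_2$; the first-order-in-error pieces cancel against the middle term, leaving only genuine products $\mathbb{E}[(\widehat\pi_{t-1}^{-1} - \pi_{t-1}^{-1})(\widehat b_2^{d_{t-1},d_t} - b_2^{d_{t-1},d_t})]$ and $\mathbb{E}[(\widehat\pi_t^{-1} - \pi_t^{-1})(\widehat b_1^{d_t} - b_1^{d_t})]$, times factors bounded by $M$. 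Cauchy--Schwarz plus the positivity bounds (Assumption~\ref{ass:positivity} and $\widehat\pi_t \geq \epsilon$) bound each of these by $\lVert \widehat\pi_{t-1} - \pi_{t-1}\rVert \cdot \lVert \widehat b_2^{d_{t-1},d_t} - b_2^{d_{t-1},d_t}\rVert + \lVert \widehat\pi_t - \pi_t\rVert \cdot \lVert \widehat b_1^{d_t} - b_1^{d_t}\rVert$; summing over $t$ and $\boldsymbol{d}_{2,t}$ gives $\mathbb{P}[\psi(Z; \boldsymbol{\beta}_0, \widehat{\boldsymbol{\eta}})] = O_{\mathbb{P}}(R_n)$.

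With these two ingredients the rest is the standard $Z$-estimator argument. Starting from $\boldsymbol{0} = \mathbb{P}_n[\psi(Z; \widehat{\boldsymbol{\beta}}_{\mathrm{mr}}, \widehat{\boldsymbol{\eta}})]$, I would decompose into: (a) the leading term $(\mathbb{P}_n - \mathbb{P})[\psi(Z; \boldsymbol{\beta}_0, \boldsymbol{\eta})] = O_{\mathbb{P}}(n^{-1/2})$ by the CLT; (b) the empirical-process drift $(\mathbb{P}_n - \mathbb{P})[\psi(Z; \boldsymbol{\beta}_0, \widehat{\boldsymbol{\eta}}) - \psi(Z; \boldsymbol{\beta}_0, \boldsymbol{\eta})] = o_{\mathbb{P}}(n^{-1/2})$, which follows from cross-fitting (conditionally mean-zero with conditional variance $o_{\mathbb{P}}(1)$ by the $L_2$-consistency in condition (ii)) or, without sample splitting, from the Donsker condition (i) plus consistency and stochastic equicontinuity; (c) the bias $\mathbb{P}[\psi(Z; \boldsymbol{\beta}_0, \widehat{\boldsymbol{\eta}})] = O_{\mathbb{P}}(R_n)$ from the previous paragraph; and (d) $\mathbb{P}[\psi(Z; \widehat{\boldsymbol{\beta}}_{\mathrm{mr}}, \widehat{\boldsymbol{\eta}})] - \mathbb{P}[\psi(Z; \boldsymbol{\beta}_0, \widehat{\boldsymbol{\eta}})] = U(\boldsymbol{\beta}_0, \widehat{\boldsymbol{\eta}})(\widehat{\boldsymbol{\beta}}_{\mathrm{mr}} - \boldsymbol{\beta}_0) + o_{\mathbb{P}}(\lVert \widehat{\boldsymbol{\beta}}_{\mathrm{mr}} - \boldsymbol{\beta}_0 \rVert)$ by the differentiability in condition (iii), with $U(\boldsymbol{\beta}_0, \widehat{\boldsymbol{\eta}}) \overset{\mathbb{P}}{\to} U(\boldsymbol{\beta}_0, \boldsymbol{\eta})$ invertible and $\widehat{\boldsymbol{\beta}}_{\mathrm{mr}} \overset{\mathbb{P}}{\to} \boldsymbol{\beta}_0$ from condition (ii). Solving the linearization yields $\widehat{\boldsymbol{\beta}}_{\mathrm{mr}} - \boldsymbol{\beta}_0 = -U(\boldsymbol{\beta}_0, \boldsymbol{\eta})^{-1}\mathbb{P}_n[\psi(Z; \boldsymbol{\beta}_0, \boldsymbol{\eta})] + o_{\mathbb{P}}(n^{-1/2} + R_n) = O_{\mathbb{P}}(n^{-1/2} + R_n)$, giving the first claim; under the extra product-rate condition $R_n = o_{\mathbb{P}}(n^{-1/2})$, the remainder is negligible, and the CLT for $\sqrt{n}\,\mathbb{P}_n[\psi(Z; \boldsymbol{\beta}_0, \boldsymbol{\eta})]$ with Slutsky gives $\sqrt{n}(\widehat{\boldsymbol{\beta}}_{\mathrm{mr}} - \boldsymbol{\beta}_0) \overset{d}{\to} \mathcal{N}(\boldsymbol{0}, \boldsymbol{V}^*(\boldsymbol{\beta}_0))$ with $\boldsymbol{V}^* = U^{-1}\mathbb{P}(\psi^{\otimes 2})U^{-1}$.

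The main obstacle is the second step, the mixed-bias calculation: one must carefully track the nested conditional-expectation structure of $b_1^{d_t}$ and $b_2^{d_{t-1},d_t}$ across the two intervention times and confirm that every term linear in a single nuisance error cancels, leaving only the products $\lVert \widehat\pi - \pi\rVert \cdot \lVert \widehat b - b\rVert$; the bookkeeping over $t$ and over $\boldsymbol{d}_{2,t} \in \overline{\mathcal{D}}_{2,t}$, and the need to keep $h M$ and the inverse propensities uniformly bounded throughout, make this the delicate part. The empirical-process control (via cross-fitting or the Donsker assumption) and the $Z$-estimator Taylor expansion are routine given conditions (i)--(iii).
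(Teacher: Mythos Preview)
Your proposal is correct and follows essentially the same approach as the paper: the paper invokes a ``master lemma'' (Lemma~3 of \citet{kennedy2023}) that packages your steps (a), (b), and (d) into one $Z$-estimator statement, and then devotes the proof to exactly the mixed-bias computation you describe in your second paragraph---iterating expectations through $H_t$ and $H_{t-1}$ to produce the products $(\widehat\pi_{t-1}^{-1} - \pi_{t-1}^{-1})(\widehat b_2 - b_2)$ and $(\widehat\pi_t^{-1} - \pi_t^{-1})(\widehat b_1 - b_1)$, then applying Cauchy--Schwarz with the positivity and boundedness assumptions. Your explicit unpacking of the $Z$-estimator decomposition is precisely what the cited master lemma encapsulates.
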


Theorem~\ref{thm:asymp-mr} describes the asymptotic properties of $\widehat{\boldsymbol{\beta}}_{\mathrm{mr}}$. Conditions (i)--(iii) in the result are similar to those listed in Theorem~\ref{thm:asymp}, and again are relatively mild; we expect they will hold for many MSM models $m$. We see that, in general, the convergence rate of $\widehat{\boldsymbol{\beta}}_{\mathrm{mr}}$ depends on products of nuisance function convergence rates (through $R_n$). For example, $o_{\mathbb{P}}(n^{-1/4})$ convergence for each component nuisance function is sufficient for $\sqrt{n}$-consistency of $\widehat{\boldsymbol{\beta}}_{\mathrm{mr}}$; such rates are achievable under structural conditions such as smoothness, sparsity, or additivity. Furthermore, in the latter case where $R_n = o_{\mathbb{P}}(n^{-1/2})$, $\widehat{\boldsymbol{\beta}}_{\mathrm{mr}}$ is also asymptotically normal and nonparametrically efficient (i.e., the asymptotic variance $\boldsymbol{V}^*(\boldsymbol{\beta}_0)$ is the lowest possible among all regular and asymptotically linear estimators~\citep{bickel1993efficient}), and asymptotically valid Wald-based CIs are given by $\widehat{\beta}_{\mathrm{mr},j} \pm z_{1 - \alpha / 2}\sqrt{\frac{\widehat{V}_{jj}^*}{n}}$, where $\widehat{V}_{jj}^*$ is the $j$-th diagonal element of 
{\footnotesize
\[\widehat{\boldsymbol{V}}^* = \mathbb{P}_n\left(\left.\nabla_{\boldsymbol{\beta}}\, \psi(Z ; \boldsymbol{\beta}, \widehat{\boldsymbol{b}}_1, \widehat{\boldsymbol{b}}_2, \widehat{\boldsymbol{\pi}})\right|_{\boldsymbol{\beta} = \widehat{\boldsymbol{\beta}}_{\mathrm{mr}}}\right)^{-1} 
\mathbb{P}_n\left(\psi^{\otimes 2}(Z ; \boldsymbol{\beta}, \widehat{\boldsymbol{b}}_1, \widehat{\boldsymbol{b}}_2, \widehat{\boldsymbol{\pi}})\right)
\mathbb{P}_n\left(\left.\nabla_{\boldsymbol{\beta}}\, \psi(Z ; \boldsymbol{\beta}, \widehat{\boldsymbol{b}}_1, \widehat{\boldsymbol{b}}_2, \widehat{\boldsymbol{\pi}})\right|_{\boldsymbol{\beta} = \widehat{\boldsymbol{\beta}}_{\mathrm{mr}}}\right)^{-1}.\]}%
CIs for linear combinations of the parameters are similarly straightforward to obtain.

\section{Proofs of Identification and Estimation Results}

\begin{proof}[Proof of 
Proposition~\ref{prop:sharp-null}]
    This is a straightforward consequence of consistency and the sharp null: for any $\boldsymbol{a}_{\Gamma, t} \in \{0,1\}^{\Gamma}$, $Y_t(\boldsymbol{a}_{\Gamma, t}) = Y_t(\boldsymbol{A}_{\Gamma, t}) \equiv Y_t$. Concretely, for any $\boldsymbol{d}_{\Gamma, t} \in \overline{\mathcal{D}}_{\Gamma, t}$,
    \[Y_t(\boldsymbol{d}_{\Gamma, t}) = Y_t(a_{t - \Gamma + 1} = d_{t - \Gamma + 1}(H_{t - \Gamma + 1}), \ldots, a_t = d_t(H_t(\boldsymbol{d}_{\Gamma - 1, t - 1}))) = Y_t,\]
    which implies the result.
\end{proof}

\begin{proof}[Proof of 
Proposition~\ref{prop:ident}]
    This result follows the usual $g$-formula identification argument \citep{robins1986}: defining $\boldsymbol{A}_{\Gamma, t} = (A_{t - \Gamma + 1}, \ldots, A_t)$, $\boldsymbol{d}_{\Gamma, t}(H_t)) = (d_{t + \Gamma + 1}(H_{t + \Gamma + 1}), \ldots, d_t(H_t))$,
    {\small
    \begin{align*}
        &\mathbb{E}(Y_t(\boldsymbol{d}_{\Gamma, t}) \mid V_{t - \Gamma + 1}) \\
        &= \mathbb{E}(\mathbb{E}(Y_t(\boldsymbol{d}_{\Gamma, t})) \mid H_{t - \Gamma + 1}) \mid V_{t - \Gamma + 1}) \\
        &= \mathbb{E}(\mathbb{E}(Y_t(\boldsymbol{d}_{\Gamma, t})) \mid H_{t - \Gamma + 1}, A_{t - \Gamma + 1} = d_{t - \Gamma + 1}(H_{t - \Gamma + 1})) \mid V_{t - \Gamma + 1}) \\
        &\ \ldots \\
        &= \mathbb{E}(\mathbb{E}( \cdots \mathbb{E}(Y_t(\boldsymbol{d}_{\Gamma, t}) \mid H_t, \boldsymbol{A}_{\Gamma, t} = \boldsymbol{d}_{\Gamma, t}(H_t)) \cdots \mid H_{t - \Gamma + 1}, A_{t - \Gamma + 1} = d_{t - \Gamma + 1}(H_{t - \Gamma + 1})) \mid V_{t - \Gamma + 1}),
    \end{align*}
    }%
    where we repeatedly invoke iterated expectations and Assumption~\ref{ass:NUC} (justified by Assumption~\ref{ass:positivity}), then use Assumption~\ref{ass:consistency} in the last equality. We can then rewrite this formula in an equivalent IPW form:
        {\small
    \begin{align*}
        &\mathbb{E}(\mathbb{E}( \cdots \mathbb{E}(Y_t(\boldsymbol{d}_{\Gamma, t}) \mid H_t, \boldsymbol{A}_{\Gamma, t} = \boldsymbol{d}_{\Gamma, t}(H_t)) \cdots \mid H_{t - \Gamma + 1}, A_{t - \Gamma + 1} = d_{t - \Gamma + 1}(H_{t - \Gamma + 1})) \mid V_{t - \Gamma + 1}) \\
        &= \mathbb{E}\left(\mathbb{E}\left(\frac{\mathds{1}(A_{t - \Gamma + 1} = d_{t - \Gamma + 1}(H_{t - \Gamma + 1}))}{\pi_{t - \Gamma + 1}(A_{t - \Gamma + 1}; H_{t - \Gamma + 1})} \cdots \mathbb{E}\left(\frac{\mathds{1}(A_t = d_t(H_t))}{\pi_t(A_t; H_t)}Y_t(\boldsymbol{d}_{\Gamma, t}) \mid H_t \right)\cdots \mid H_{t - \Gamma + 1}\right) \mid V_{t - \Gamma + 1}\right) \\
        &= \mathbb{E}\left(\prod_{j = t - \Gamma + 1}^t \frac{\mathds{1}(A_j = d_j(H_j))}{\pi_j(A_j; H_j)} Y_t \mid V_{t - \Gamma + 1}\right),
    \end{align*}
    }%
    where the last equality is achieved again by iterated expectations. The second statement in Proposition~\ref{prop:ident} is obtained by differentiating~\eqref{eq:msm-proj} with respect to $\boldsymbol{\beta}$, setting this to zero, then invoking the first statement of Proposition~\ref{prop:ident} (which we have just proved).
\end{proof}

\begin{proof}[Proof of Theorem~\ref{thm:asymp}]

This is an immediate application of Theorem 5.31 in \citep{vandervaart2000}.
\end{proof}

\begin{proof}[Proof of Theorem~\ref{thm:asymp-mr}]

This result follows from a ``master lemma'' for sample-split solutions to estimating equations with nuisance estimates plugged in: see Lemma 3 of \citet{kennedy2023}. From their general formulation, we must only verify that under our assumptions,
\[\mathbb{P}\left(\psi(Z; \boldsymbol{\beta}_0, \widehat{\boldsymbol{b}}_1, \widehat{\boldsymbol{b}}_2, \widehat{\boldsymbol{\pi}}) - \psi(Z; \boldsymbol{\beta}_0, \boldsymbol{b}_1, \boldsymbol{b}_2, \boldsymbol{\pi})\right) = O_{\mathbb{P}}(R_n),\]
where the remainder term $R_n$ is given in the statement of the theorem. To that end, observe that this asymptotic bias term equals
{
\begin{align*}
            &\mathbb{P}\bigg(\sum_{t = 2}^T \, \, \, \, \sum_{\boldsymbol{d}_{2, t} \in \overline{\mathcal{D}}_{2, t}} h(t, \boldsymbol{d}_{2, t})M(t, \boldsymbol{d}_{2, t}; \boldsymbol{\beta}_0)\bigg\{\widehat{b}_2^{d_{t-1}, d_t}(H_{t - 1}) - b_2^{d_{t-1}, d_t}(H_{t - 1}) \\
            & \quad \quad + \frac{\mathds{1}(A_{t - 1} = d_{t - 1}(H_{t - 1}))}{\widehat{\pi}_{t - 1}(A_{t - 1}; H_{t - 1})}\left(\widehat{b}_1^{d_t}(H_t) - \widehat{b}_2^{d_{t-1}, d_t}(H_{t - 1})\right)\\
            & \quad \quad +\frac{\mathds{1}(A_{t - 1} = d_{t - 1}(H_{t - 1}))\mathds{1}(A_{t} = d_{t}(H_{t}))}{\widehat{\pi}_{t - 1}(A_{t - 1}; H_{t - 1})\widehat{\pi}_{t}(A_{t}; H_{t})}\left(Y_t - \widehat{b}_1^{d_t}(H_t)\right)\bigg\}\bigg) \\
&= \mathbb{P}\bigg(\sum_{t = 2}^T \, \, \, \, \sum_{\boldsymbol{d}_{2, t} \in \overline{\mathcal{D}}_{2, t}} h(t, \boldsymbol{d}_{2, t})M(t, \boldsymbol{d}_{2, t}; \boldsymbol{\beta}_0)\bigg\{\widehat{b}_2^{d_{t-1}, d_t}(H_{t - 1}) - b_2^{d_{t-1}, d_t}(H_{t - 1}) \\
    & \quad \quad + \frac{\pi_{t - 1}(d_{t - 1}(H_{t - 1}); H_{t - 1})}{\widehat{\pi}_{t - 1}(d_{t - 1}(H_{t - 1}); H_{t - 1})}\left(\widehat{b}_1^{d_t}(H_t) - b_1^{d_t}(H_{t})\right)\\
    & \quad \quad + \frac{\mathds{1}(A_{t - 1} = d_{t - 1}(H_{t - 1}))}{\widehat{\pi}_{t - 1}(A_{t - 1}; H_{t - 1})}\left(b_2^{d_{t - 1},d_t}(H_{t - 1}) - \widehat{b}_2^{d_{t-1}, d_t}(H_{t - 1})\right)\\
    & \quad \quad +\frac{\mathds{1}(A_{t - 1} = d_{t - 1}(H_{t - 1}))\pi_t(d_t(H_t); H_t)}{\widehat{\pi}_{t - 1}(A_{t - 1}; H_{t - 1})\widehat{\pi}_{t}(d_{t}(H_t); H_{t})}\left(b_1^{d_t}(H_t) - \widehat{b}_1^{d_t}(H_t)\right)\bigg\}\bigg) \\
&= \mathbb{P}\bigg(\sum_{t = 2}^T \, \, \, \, \sum_{\boldsymbol{d}_{2, t} \in \overline{\mathcal{D}}_{2, t}} h(t, \boldsymbol{d}_{2, t})M(t, \boldsymbol{d}_{2, t}; \boldsymbol{\beta}_0)\bigg\{\left[\widehat{b}_2^{d_{t-1}, d_t} - b_2^{d_{t-1}, d_t}\right] \left[1 - \frac{\pi_{t - 1}(d_{t - 1}(A_{t - 1}); H_{t-1})}{\widehat{\pi}_{t - 1}(d_{t-1}(A_{t - 1}); H_{t - 1})}\right]\\
    & \quad \quad \quad \quad + \frac{\mathds{1}(A_{t - 1} = d_{t - 1}(H_{t - 1}))}{\widehat{\pi}_{t - 1}(A_{t - 1}; H_{t - 1})}\left[\widehat{b}_1^{d_t} - b_1^{d_t}\right]\left[1 - \frac{\pi_t(d_t(H_t); H_t)}{\widehat{\pi}_{t}(d_t(H_t); H_{t})}\right]\bigg\}\bigg) \\
&= O_{\mathbb{P}}\left(\sum_{t = 2}^T \, \, \, \, \sum_{\boldsymbol{d}_{2, t} \in \overline{\mathcal{D}}_{2, t}} \bigg\{\lVert \widehat{\pi}_{t - 1} - \pi_{t - 1}\rVert\cdot \lVert \widehat{b}_2^{d_{t-1}, d_t} - b_2^{d_{t - 1}, d_t}\rVert 
    + \lVert \widehat{\pi}_{t} - \pi_{t}\rVert\cdot \lVert \widehat{b}_1^{ d_t} - b_1^{d_t}\rVert  \bigg\}\right),
\end{align*}
}
under our assumptions, by Cauchy-Schwarz. Since the term in the $O_{\mathbb{P}}$ expression is $R_n$, we are done.
\end{proof}

\section{Properties of Availability-Respecting Estimands} \label{app:sharp-null}

\begin{proof}[Proof of Proposition~\ref{prop:sharp-null}]
    This is a straightforward consequence of consistency and the sharp null: for any $\boldsymbol{a}_{\Gamma, t} \in \{0,1\}^{\Gamma}$, $Y_t(\boldsymbol{a}_{\Gamma, t}) = Y_t(\boldsymbol{A}_{\Gamma, t}) \equiv Y_t$. Concretely, for any $\boldsymbol{d}_{\Gamma, t} \in \overline{\mathcal{D}}_{\Gamma, t}$,
    \[Y_t(\boldsymbol{d}_{\Gamma, t}) = Y_t(a_{t - \Gamma + 1} = d_{t - \Gamma + 1}(H_{t - \Gamma + 1}), \ldots, a_t = d_t(H_t(\boldsymbol{d}_{\Gamma - 1, t - 1}))) = Y_t,\]
    which implies the result.
\end{proof}

To provide insight into sequential excursion effects, we express the proposed contrasts in terms of observed variables.

\subsection{Special Case: (1,0) vs. (0,0)}
We start with the contrast (1,0) vs. (0,0): a contrast of (a) the deterministic sequence of a treatment opportunity on timepoint $t-1$ and no treatment opportunity on timepoint $t$ vs. (b) a treatment sequence with no treatment opportunity on either timepoint.
\begin{align*}
    &\mathbb{E} \left( Y_t(d_{t-1}^{(1)}, d_t^{(0)})  - Y_t(d_{t-1}^{(0)}, d_t^{(0)})  \right ) \\
    &= \mathbb{E} \left[ \mathbb{E} \left( Y_t(d_{t-1}^{(1)}, d_t^{(0)}) - Y_t(d_{t-1}^{(0)}, d_t^{(0)}) \mid I_{t-1} \right )  \right ] \\
    &= \sum_{i_{t-1}=0}^1 \mathbb{E} \left( Y_t(d_{t-1}^{(1)}(i_{t-1}), d_t^{(0)}) - Y_t(d_{t-1}^{(0)}(i_{t-1}), d_t^{(0)}) \mid I_{t-1} \right ) \mathbb{P} \left(I_{t-1} = i_{t-1} \right) \\
    &= \mathbb{E} \left( Y_t(1, d_t^{(0)}) - Y_t(0, d_t^{(0)}) \mid I_{t-1} =1\right ) \mathbb{P} \left(I_{t-1} = 1 \right) \\
    &=\mathbb{P} \left(I_{t-1} = 1 \right) \left[ \mathbb{E} \left( Y_t(d_t^{(0)}) \mid I_{t-1}=1, A_{t-1}=1 \right ) - \mathbb{E} \left( Y_t(d_t^{(0)}) \mid I_{t-1}=1, A_{t-1}=0 \right ) \right ] \tag{1}\\
    &=\mathbb{P} \left(I_{t-1} = 1 \right) \big[  \mathbb{E} \left\{ \mathbb{E} \left( Y_t \mid I_{t-1}=1, A_{t-1}=1, I_t, A_t=0 \right ) \mid  I_{t-1}=1, A_{t-1}=1 \right \} - \\
    & \quad \quad \quad \quad \quad \quad \quad \quad \mathbb{E} \left\{ \mathbb{E} \left( Y_t \mid I_{t-1}=1, A_{t-1}=0, I_t, A_t=0 \right ) \mid  I_{t-1}=1, A_{t-1}=0 \right \} \big ] \tag{2}
\end{align*}

The above derivation provides a few key insights into how availability-respecting estimands relate to both the effect of treatment on availability in subsequent timepoints $I_t$, and the outcome $Y_t$. Expression (1) shows how the estimand is designed to capture the deterministic regime that we ``wish we could compare'' if there were no positivity violations: the contrast compares a treated vs. untreated population on timepoint $t-1$, conditional on being available on timepoint $t-1$. Expression (2) further shows how treatment confounder feedback manifests in this estimand: the inner expectation marginalizes over the effect of $A_{t-1}$ (i.e., treatment on timepoint $t-1$) on availability at $t$, $I_t$. That is, the estimand captures the effect of treatment at timepoint $t-1$ on \textit{both} $Y_t$ and $I_t$. It is easy to see in expression (1) how this contrast would be 0 under the sharp null of no effect of treatment, itself (not just treatment opportunities).

\subsection{Special Case: (0,1) vs. (0,0)}
We now repeat the same steps for the contrast (0,1) vs. (0,0): a contrast of (a) the deterministic sequence of no treatment opportunity on timepoint $t-1$ and a treatment opportunity on timepoint $t$ vs. (b) a treatment sequence with no treatment opportunity on either timepoint.
\begin{align*}
    &\mathbb{E} \left( Y_t(d_{t-1}^{(0)}, d_t^{(1)})  - Y_t(d_{t-1}^{(0)}, d_t^{(0)})  \right ) \\
    &= \mathbb{E} \left[ \mathbb{E} \left( Y_t(d_{t-1}^{(0)}, d_t^{(1)}) - Y_t(d_{t-1}^{(0)}, d_t^{(0)}) \mid I_{t-1} \right )  \right ] \\
    &= \sum_{i_{t-1}=0}^1 \mathbb{E} \left( Y_t(d_{t-1}^{(0)}(i_{t-1}), d_t^{(1)}) - Y_t(d_{t-1}^{(0)}(i_{t-1}), d_t^{(0)}) \mid I_{t-1} \right ) \mathbb{P} \left(I_{t-1} = i_{t-1} \right) \\
    &= \mathbb{P} \left(I_{t-1} = 0 \right) \mathbb{E} \left[ Y_t(a_{t}=1) - Y_t(a_t=0) \mid I_t = 1, I_{t-1} =0, A_{t-1}=0\right ] \mathbb{P} \left(I_{t} = 1 \mid I_{t-1}=0, A_{t-1}=0 \right) + \\
    & \quad ~ \mathbb{P} \left(I_{t-1} = 1 \right) \mathbb{E} \left[ Y_t(a_{t}=1) - Y_t(a_t=0) \mid I_t = 1, I_{t-1} =1, A_{t-1}=0\right ]\mathbb{P} \left(I_{t} = 1 \mid I_{t-1}=1, A_{t-1}=0 \right )
\end{align*}
It is clear that each expectation in the final expression will be zero when the sharp null holds of no causal effect in any individual (i.e., $Y_t(a_{t}=1) = Y_t(a_t=0)$). Moreover, this illustrates how the estimand is a weighted sum of causal contrasts of the treatment itself. Under our assumptions, we can further express the expectations in terms of observed variables.
\begin{align*}
    &\mathbb{E} \left[ Y_t(a_{t}=1) - Y_t(a_t=0) \mid I_t = 1, I_{t-1} =0, A_{t-1}=0\right ]  \\
    &=\mathbb{E} \left[Y_t \mid I_{t-1}=0, A_{t-1}=0, I_t=1, A_t=1 \right] - \mathbb{E} \left[Y_t \mid I_{t-1}=0, A_{t-1}=0, I_t=1, A_t=0 \right]
\end{align*}
and 
\begin{align*}
    &\mathbb{E} \left[ Y_t(a_{t}=1) - Y_t(a_t=0) \mid I_t = 1, I_{t-1} =1, A_{t-1}=0\right ] \\
    &=\mathbb{E} \left[Y_t \mid I_{t-1}=1, A_{t-1}=0, I_t=1, A_t=1 \right] - \mathbb{E} \left[Y_t \mid I_{t-1}=1, A_{t-1}=0, I_t=1, A_t=0 \right]
\end{align*}


\section{Method Implementation}\label{app:implement}

We provide an implementation that builds the necessary dataset (with each observation copied once for every regime in $\overline{D}_{\Gamma, t}$), calculates the corresponding IP weights, and estimates the HR-MSM parameters $\boldsymbol{\beta}$ by solving the estimating equation in expression \ref{eq:msm-proj} using the \texttt{rootSolve} R package \citep{rootSolve2}. Our implementation provides variance estimates of the HR-MSM parameters using the sandwich estimator we derived, and/or the sample size-adjusted \texttt{HC} sandwich estimators (using the \texttt{sandwich} package in \texttt{R} \citep{sandwich_package}). The entire process takes roughly 10 seconds on a standard laptop, for $>100,000$ total (pre-copy) timepoints. This code is available at anonymous GitHub Repo: \url{https://anonymous.4open.science/r/causal_opto-52CD/README.md}.

\section{Application Details and Additional Results} \label{sec:app-secondary}
{\color{black} We fit the HR-MSM models below using a logit link. For binary data, \citet{Qian2021} instead target marginal relative risk parameters. These two approaches have different trade-offs: odds ratios are non-collapsible, while relative risks do not constrain the conditional probability of the outcome to lie in the range $[0,1]$. Collapsibility would be a problem in our application if (1) we believed the HR-MSM to be correct, and logistic, and (2) we also assumed some conditional outcome model (e.g., nuisance function) was logistic. In this case, these would be incompatible due to non-collapsibility. Neither of these apply to us. For (1), we are taking a projection parameter approach and thus we are not assuming the marginal structural model is correctly specified---see the detailed discussion at the end of Section~\ref{sec:hr-msm}. For (2), our proposed strategy either uses an IPW (no conditional outcome model) or a flexible doubly robust estimator where the outcome model need not be logistic. Nevertheless, some authors prefer the interpretation of relative risk parameters, and argue that it yields greater modeling ease. For this reason, we recommend considering both strategies when modeling optogenetics data.}

\subsection{Effects of a Single Dose $r$ timepoints Prior}\label{sec:lag-main-effects}

In Appendix Figure~\ref{fig:lag-main-effects}, we show estimates of the main effects of a single stimulation opportunity $r=1,2, \text{ or } 3$ timepoints prior to $Y_t$ from main text model \eqref{dose_group_int} (reshown below as model \eqref{eq:dose_group_int_trt}) in each treatment arm. These effects, $\widehat{\beta}_r$ with $r \in [3]$, are estimates of the log odds ratio comparing the mean counterfactual of $Y_t$ under a treatment sequence with a single dose (on $r = 1,2$, or $3$ timepoints prior)
vs. a treatment sequence with zero dose, within a given treatment arm. 

\begin{equation} \label{eq:dose_group_int_trt}
    \mbox{logit}\left ( \mathbb{E} [Y_t(\boldsymbol{d}_{\Gamma, t}) \mid  G = 0] \right ) = \beta_0 + \sum_{r=0}^2 \beta_{r+1}  \sigma_{t - r}(d_{t - r}).
\end{equation}

We fit an analogous model (shown in expression \eqref{eq:dose_group_int_ctrl}) to yield the same estimates among the optogenetics group.

\begin{equation} \label{eq:dose_group_int_ctrl}
    \mbox{logit}\left ( \mathbb{E} [Y_t(\boldsymbol{d}_{\Gamma, t}) \mid  G = 1] \right ) = \beta'_0 + \sum_{r=0}^2 \beta'_{r+1}  \sigma_{t - r}(d_{t - r}).
\end{equation}

\begin{figure}[H]
	\centering
\includegraphics[width=0.75\linewidth]{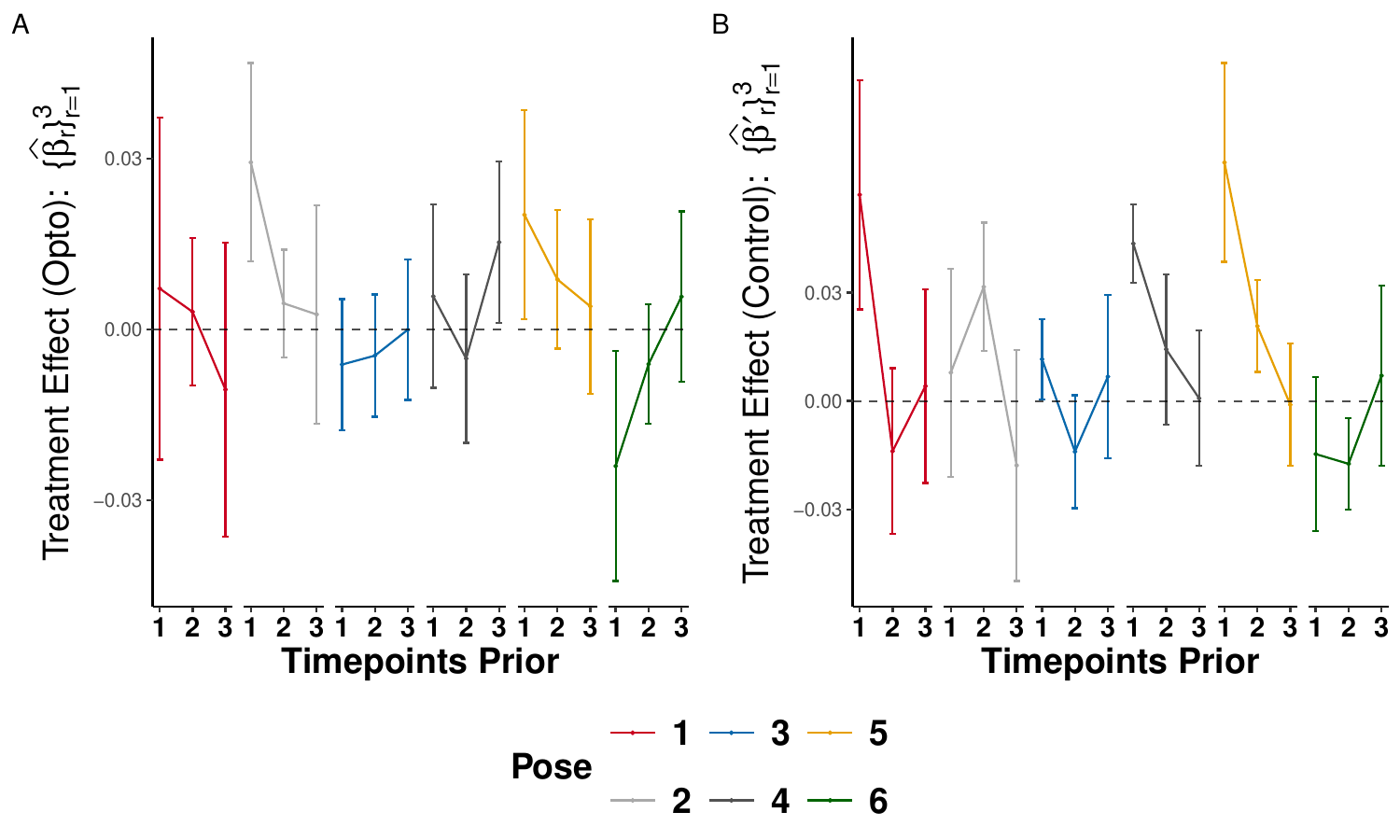}
\caption{\footnotesize\textbf{Our method enables estimation of effect dissipation.} Plots show coefficient estimates (error bars show 95\% CIs). [A] Main effects in the optogenetics group. $\widehat{\beta}_r$ with $r \in [3]$ from HR-MSM \eqref{dose_group_int} (shown above as expression \eqref{eq:dose_group_int_trt}). [B] Main effects in the optogenetics group. $\widehat{\beta}'_r$ with $r \in [3]$ from HR-MSM \eqref{eq:dose_group_int_ctrl}. } 		\label{fig:lag-main-effects}
\end{figure}

\subsection{Dose-Response Effects in Optogenetics Group}\label{app:dose-resp-opto}
Here we present results
from our analyses described in Section~\ref{sec:dose-resp}. In the left panel of Figure~\ref{fig:dose_optoDA}, we show results from the following HR-MSM:
\begin{equation} \label{eq:dose_trt_supp}
    \mbox{logit}\left ( \mathbb{E} [Y_t(\boldsymbol{d}_{\Gamma, t}) \mid  G=1] \right ) = \beta_0 + \sum_{r=1}^3 \beta_r \mathds{1} \bigg( \sum_{j = t - \Gamma + 1}^ t \sigma_j(d_j) = r \bigg),
    \end{equation}

\paragraph{Conditional Excursion Effect}
We next estimate an availability-conditional estimand \citep{Boruvka2018}, to determine if existing excursion effect methods have the capacity to reveal the effects identified with our method. We estimate this in the MSM

\begin{equation} \label{eq:dose_cond_ctrl}
    \mbox{logit}\left ( \mathbb{E} [Y_t(a_t) \mid I_{t} = 1, G=1] \right ) = \alpha_0 + \alpha_1 a_t.
    \end{equation}

Results are shown in the right panel of Figure~\ref{fig:dose_optoDA}.

\begin{figure}[H]
	\centering
\includegraphics[width=0.65\linewidth]{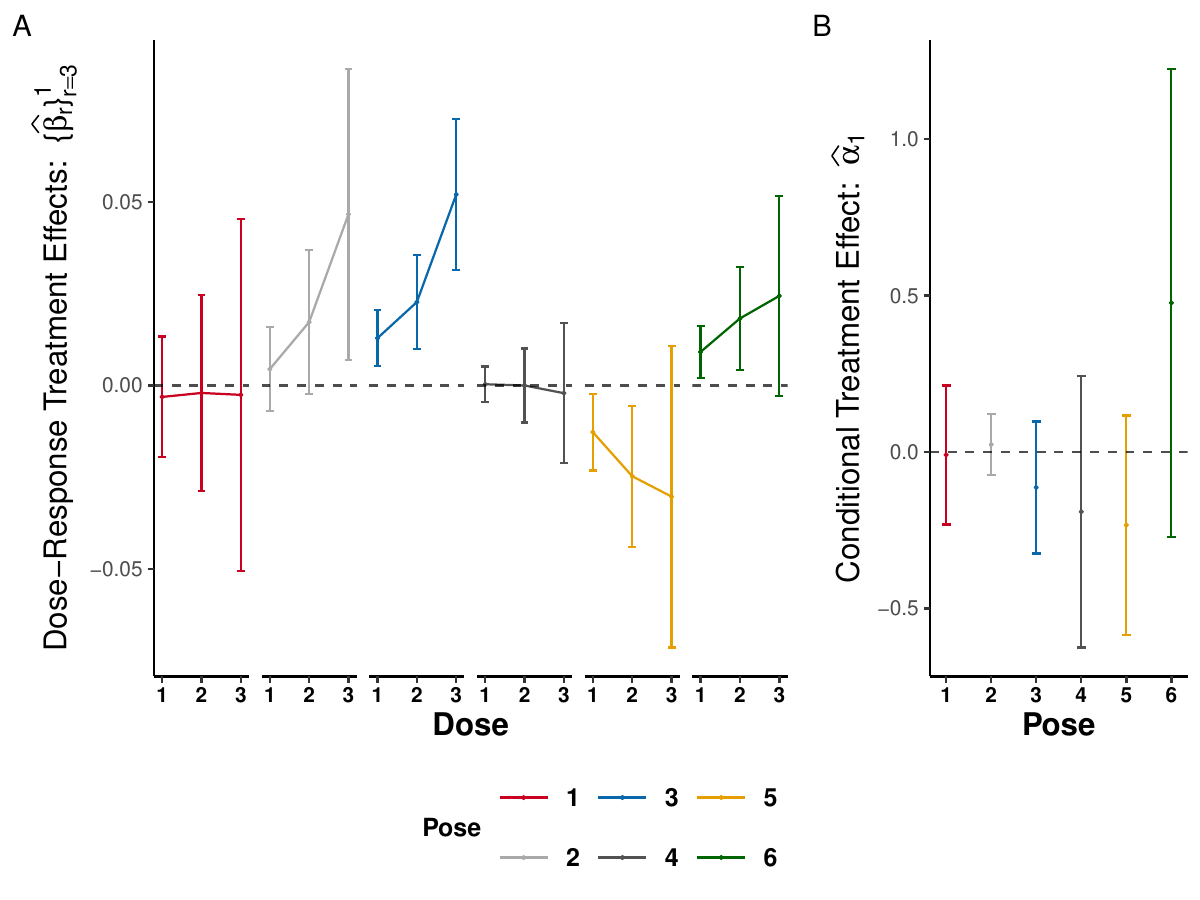}
\caption{\footnotesize\textbf{Our method enables estimation of dose effects in optogenetics arm.} Plots show coefficient estimates (error bars show 95\% CIs) as a function of dose. Columns and colors indicate the dose. [Left] Main effects of stimulation opportunity from HR-MSM \eqref{eq:dose_trt}. [Right] Availability-conditional effects of treatment estimated in MSM \eqref{eq:dose_cond}. } 		\label{fig:dose_optoDA}
\end{figure}

\subsection{Dose-Response Effects in Control Group}\label{sec:dose-resp-ctrl}
In the left panel of Figure~\ref{fig:dose_optoDA_ctrl}, we show results of the same analyses as those presented in Appendix~\ref{app:dose-resp-opto} but conducted in the control group ($G=0$). We fit an HR-MSM to estimate the causal effect of ``dose'', the number of treatment opportunities in the previous $\Gamma = 5$ timepoints:


\begin{equation} \label{eq:dose_ctrl}
    \mbox{logit}\left ( \mathbb{E} [Y_t(\boldsymbol{d}_{\Gamma, t}) \mid  G=0] \right ) = \beta_0 + \sum_{r=1}^3 \beta_r \mathds{1} \left ( \sum_{j = t - \Gamma + 1}^ t \sigma_j(d_j) = r \right ),
    \end{equation}

where $\sigma_j(d_j) = \mathds{1}(d_j = d_j^{(1)})$. The coefficient $\widehat{\beta}_r$ is an estimate of the log odds ratio comparing the mean counterfactual of $Y_t$ for a treatment sequence of dose $r \in [3]$ compared to a sequence of dose zero (see Figure \ref{fig:estimands}C for an illustration of the static regime analogue). A dose of three is the maximum feasible dose for $\Gamma = 5$ since the same pose cannot occur on two consecutive timepoints.

\paragraph{Conditional Excursion Effect}
We next estimate an availability-conditional estimand \citep{Boruvka2018}, to determine if existing excursion effect methods have the capacity to reveal the effects identified with our method. We estimate this in the MSM

\begin{equation} \label{eq:dose_cond_ctrl}
    \mbox{logit}\left ( \mathbb{E} [Y_t(a_t) \mid I_{t} = 1, G=0] \right ) = \alpha_0 + \alpha_1 a_t.
    \end{equation}

Results are shown in the right panel of Figure~\ref{fig:dose_optoDA_ctrl}.

\begin{figure}[H]
	\centering
\includegraphics[width=0.65\linewidth]{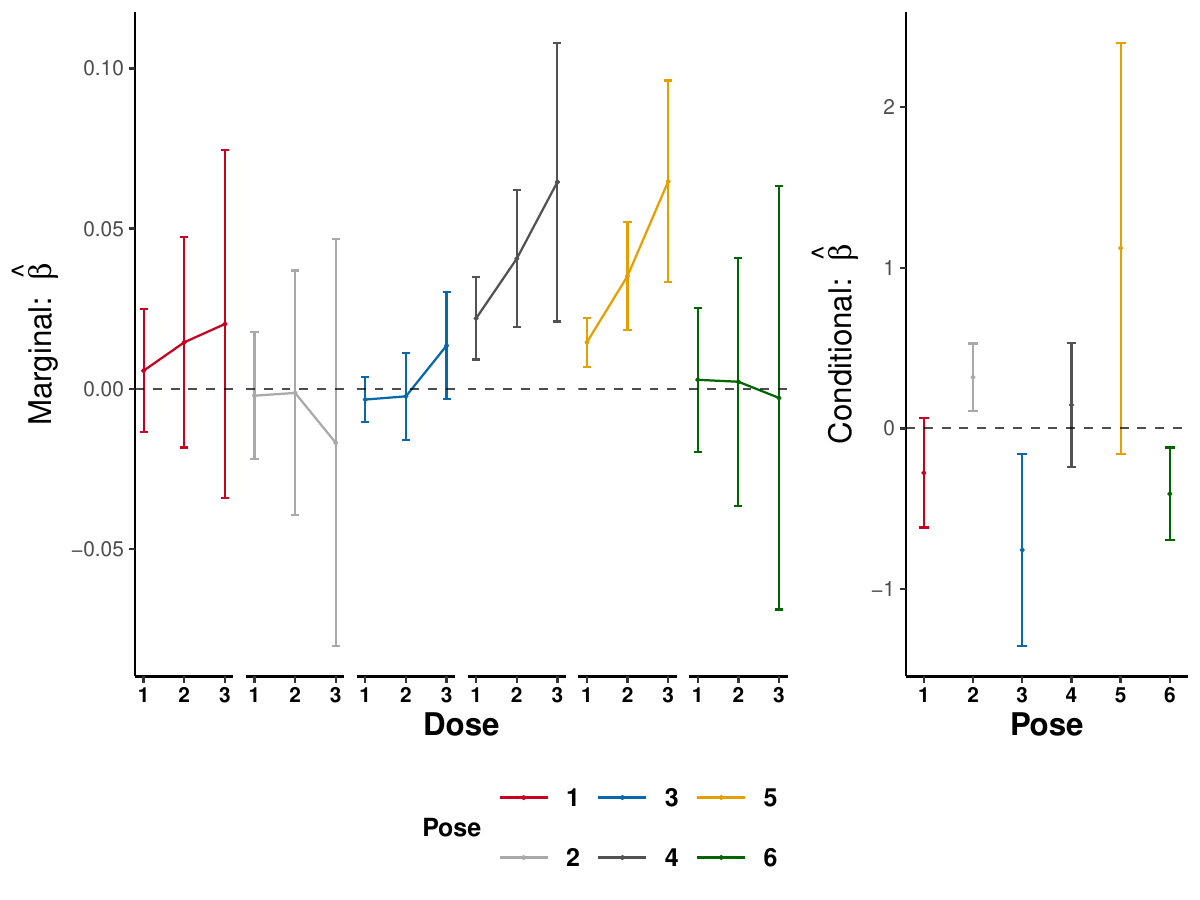}
\caption{\footnotesize\textbf{Our method enables estimation of dose effects in control arm.} Plots show coefficient estimates (error bars show 95\% CIs) as a function of dose. Columns and colors indicate the dose. [Left] Main effects of stimulation opportunity from HR-MSM \eqref{eq:dose_ctrl}. [Right] Availability-conditional effects of treatment estimated in MSM \eqref{eq:dose_cond_ctrl}. } 		\label{fig:dose_optoDA_ctrl}
\end{figure}
\subsection{Dose-Response Effects Over Timepoints}\label{sec:dose-resp-time}
We next show results from a follow-up to the analysis presented in Section~\ref{sec:dose-resp}. We fit the same model with the addition of an interaction with timepoint $t$ to test whether the causal effect of ``dose'' varied (linearly) across timepoints. Dose is defined as the number of treatment opportunities in the previous $\Gamma = 5$ timepoints:


\begin{equation*} \label{eq:dose_time}
\resizebox{\textwidth}{!}{
    $\mbox{logit}\left ( \mathbb{E} [Y_t(\boldsymbol{d}_{\Gamma, t}) \mid  G=0] \right ) = \beta_0 + \sum_{r=1}^3 \beta_r \mathds{1} \left ( \sum_{j = t - \Gamma + 1}^ t \sigma_j(d_j) = r \right ) + \beta_4 t + \sum_{r=1}^3 \beta_{r+4} \mathds{1} \left ( \sum_{j = t - \Gamma + 1}^ t \sigma_j(d_j) = r \right ) \times  t$},
    \end{equation*}

where $\sigma_j(d_j) = \mathds{1}(d_j = d_j^{(1)})$. The coefficient $\{\widehat{\beta}_r\}_{r=1}^3$ is an estimate of the log odds ratio comparing the mean counterfactual of $Y_t$ for a treatment sequence of dose $r \in [3]$ compared to a sequence of dose zero. Similarly, the coefficient estimates $\{\widehat{\beta}_r\}_{r=5}^7$ are the interaction terms of those doses with time $t$. A dose of three is the maximum feasible dose for $\Gamma = 5$ since the same pose cannot occur on two consecutive timepoints. Appendix Figure~\ref{fig:dose_time} shows that for some poses, the effects of dose do change (linearly) across timepoints. None of these are statistically significant however.

\begin{figure}[H]
	\centering
\includegraphics[width=0.65\linewidth]{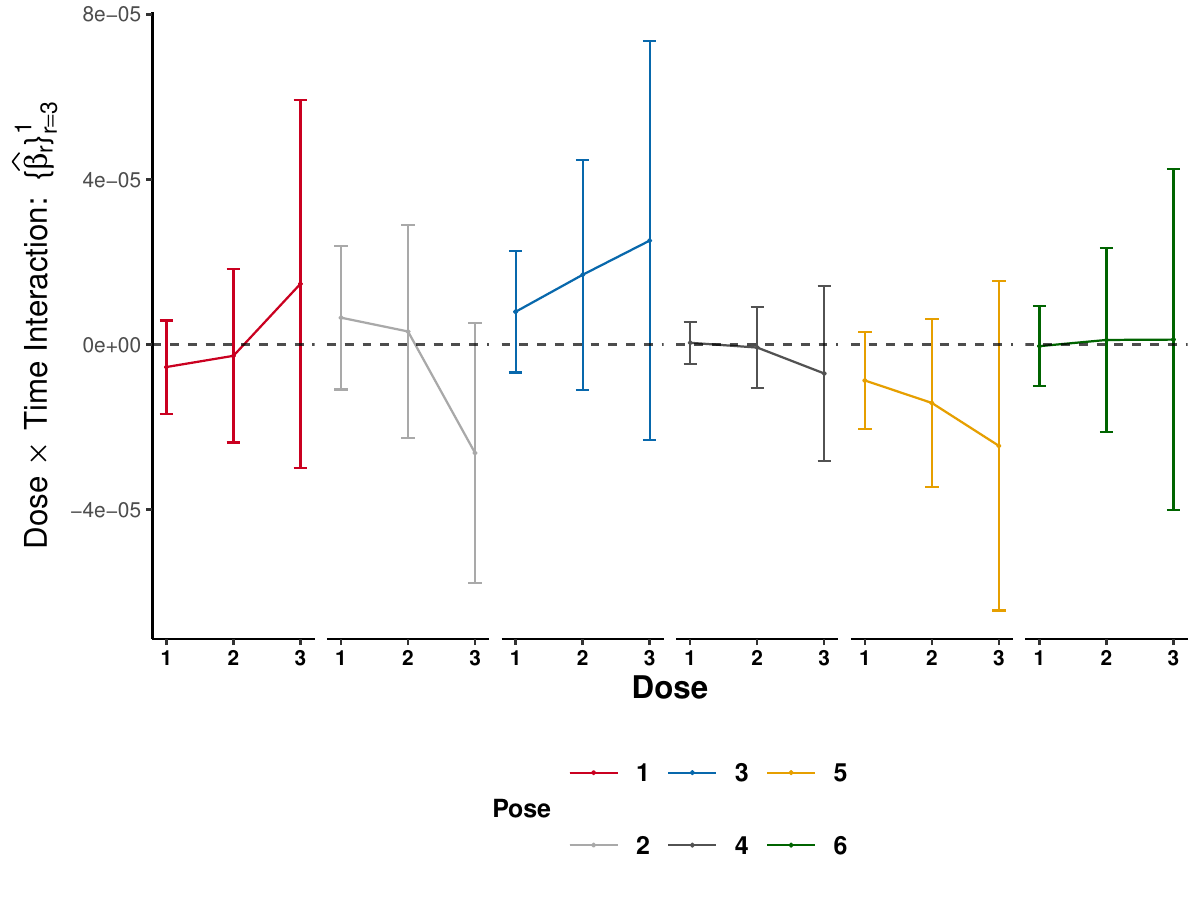}
\caption{\footnotesize\textbf{Effect of dose across timepoints.} Plots show coefficient estimates (error bars show 95\% CIs) as a function of dose. Columns and colors indicate the dose. Coefficients for interaction between dose and timepoints. } 		\label{fig:dose_time}
\end{figure}
\subsection{Sensitivity of Results to Choice of $\Gamma$: Effects of a Single Dose $r$ Timepoints Prior}\label{sec:gamma-sensitivity}

In Appendix Figure~\ref{fig:gamma-sensitive}, we show the results of a sensitivity analysis for the effect of a single stimulation opportunity $r=1,2, \ldots \text{ or } \Gamma$ timepoints prior to $Y_t$, for $\Gamma \in \{3,5,7\}$. We fit HR-MSMs with an analogous specification as model \eqref{dose_group_int} (reshown as model \eqref{eq:dose_group_int_trt} above), except with different $\Gamma$ values. We show the coefficient estimates corresponding to the interaction terms between treatment arm and the dissipation effects for each $r \in [\Gamma]$ at different $\Gamma$ values. The dissipation effect is the log odds ratio comparing the mean counterfactual of $Y_t$ under a treatment sequence with a single dose (on $r = 1,2,\ldots $, or $3$ timepoints prior)
vs. a treatment sequence with zero dose. Appendix Figure~\ref{fig:gamma-sensitive} shows that the corresponding effect estimates have similar values across different $\Gamma$ values demonstrating the results are relatively stable across different $\Gamma$ values. As expected, the $95\%$ CIs grow with $\Gamma$, since there are fewer observed sequences of timepoints that follow the corresponding policy as $\Gamma$ grows.

\begin{figure*}[t] 
	\centering
	\begin{subfigure}
		\centering
\includegraphics[width=0.45\linewidth]{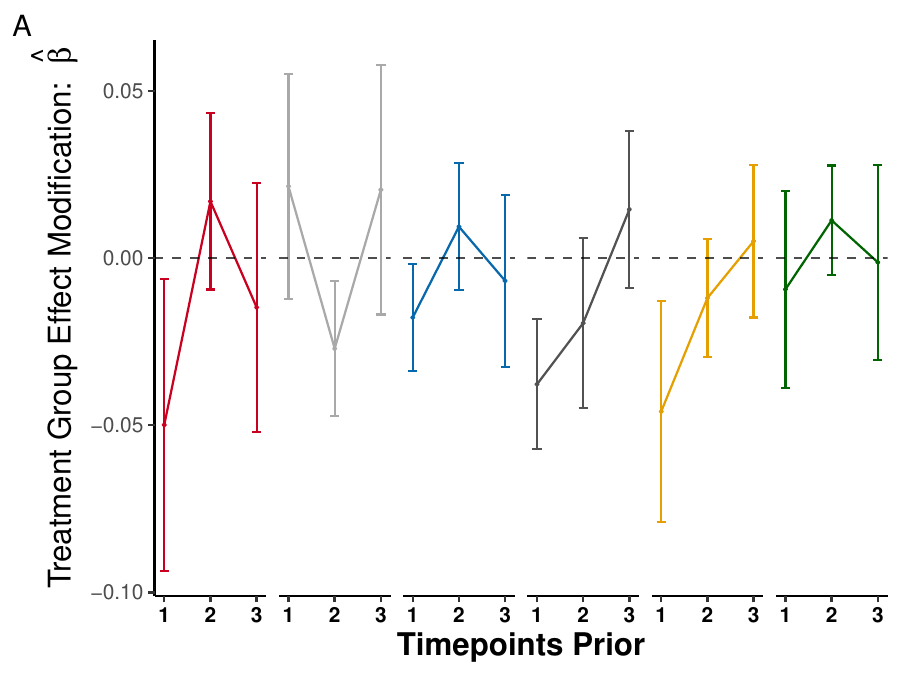}
	\end{subfigure}
    \begin{subfigure}
		\centering
\includegraphics[width=0.55\linewidth]{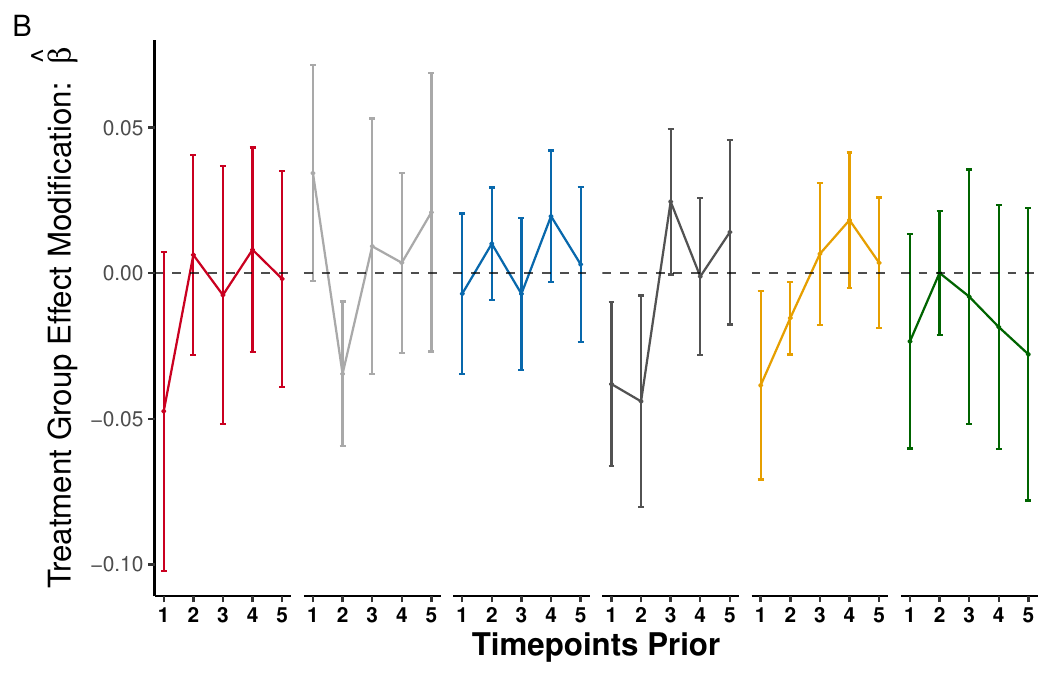}
	\end{subfigure}
    	\begin{subfigure}
		\centering
\includegraphics[width=0.7\linewidth]{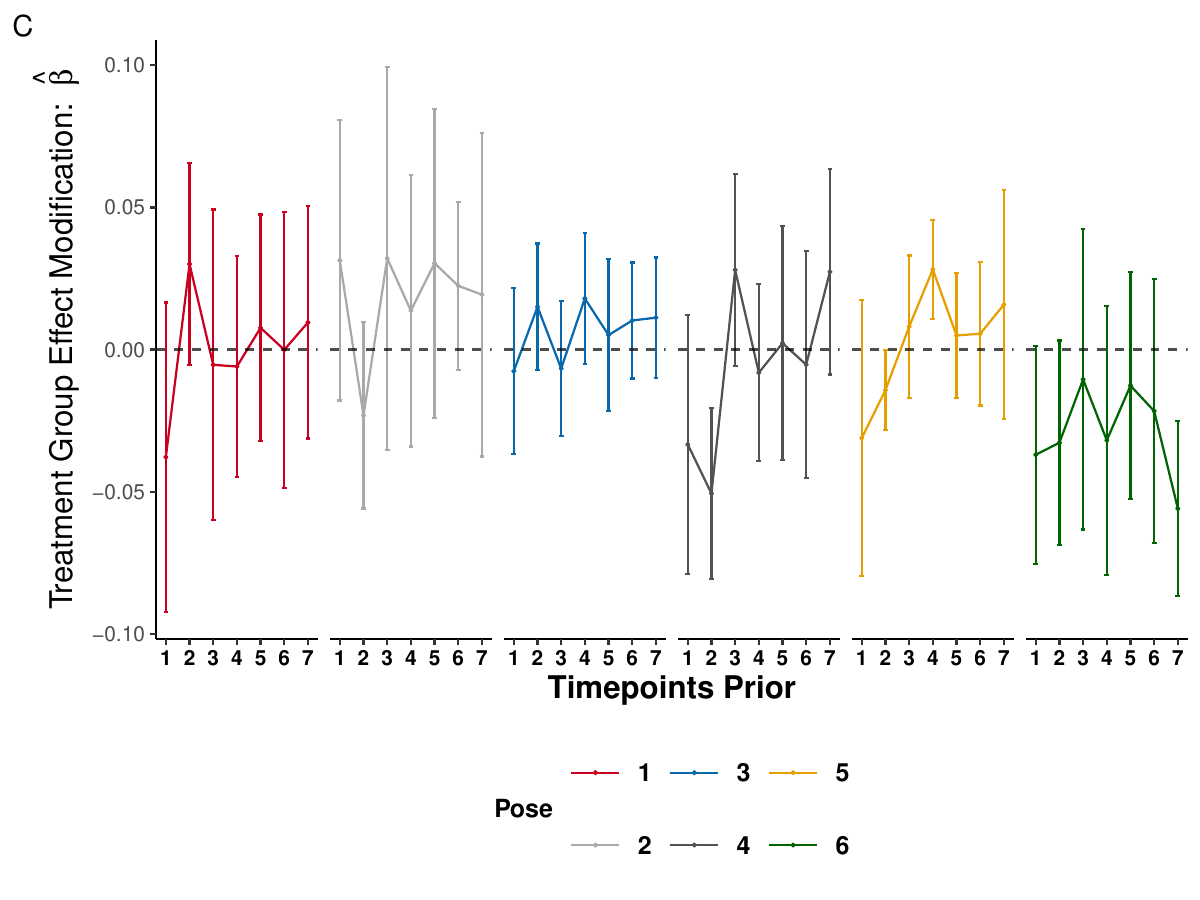}
	\end{subfigure}
	\caption{\footnotesize\textbf{Results are stable across $\Gamma$ values.} Plots show coefficient estimates (error bars show 95\% CIs) as a function of dose. Columns and colors indicate the dose. Coefficient estimates and $95\%$ CIs for interaction terms between treatment arm and the effect of a single stimulation on timepoint $t-r+1$ (for $r\in [\Gamma]$ doses prior)  for [A] $\Gamma=3$, [B] $\Gamma=5$, and [C] $\Gamma=7$.  } 	 
\label{fig:gamma-sensitive}
\end{figure*}
\clearpage
\subsection{Lagged}\label{lag_appendix}
We presented results in the main text for $\Delta = 1$, but show results for $\Delta = 5$ and $\Delta = 10$ here to illustrate that our framework can easily be extended for lagged variables or functions of outcome sequences for general $Y_t = f(Y_t, Y_{t+1},...,Y_{t+\Delta})$. In this application, we define $\tilde{Y}^{(\Delta)}_t = \max(Y_t, Y_{t+1},...,Y_{t+\Delta})$ and fit an MSM for each $\Delta$ with mean model

\begin{equation} \label{dose_trt_appendix}
    \mbox{logit}\left ( \mathbb{E} [\tilde{Y}^{(\Delta)}_t(\boldsymbol{d}_{\Gamma, t}) \mid  G=1] \right ) = \beta_0 + \sum_{r=1}^3 \beta_r \mathds{1} \left ( \sum_{j = t - \Gamma + 1}^ t \sigma_j(d_j) = r \right ).
    \end{equation}

Results are presented in Appendix Figure \ref{fig:dose_optoDA_full}. The results are qualitatively similar across values of $\Delta$, but some coefficient estimates change in magnitude and/or statistical significance. Together these results illustrate that be redefining the outcome, one can easily analyze lagged outcomes with our framework.

\begin{figure}[H]
	\centering
\includegraphics[width=0.65\linewidth]{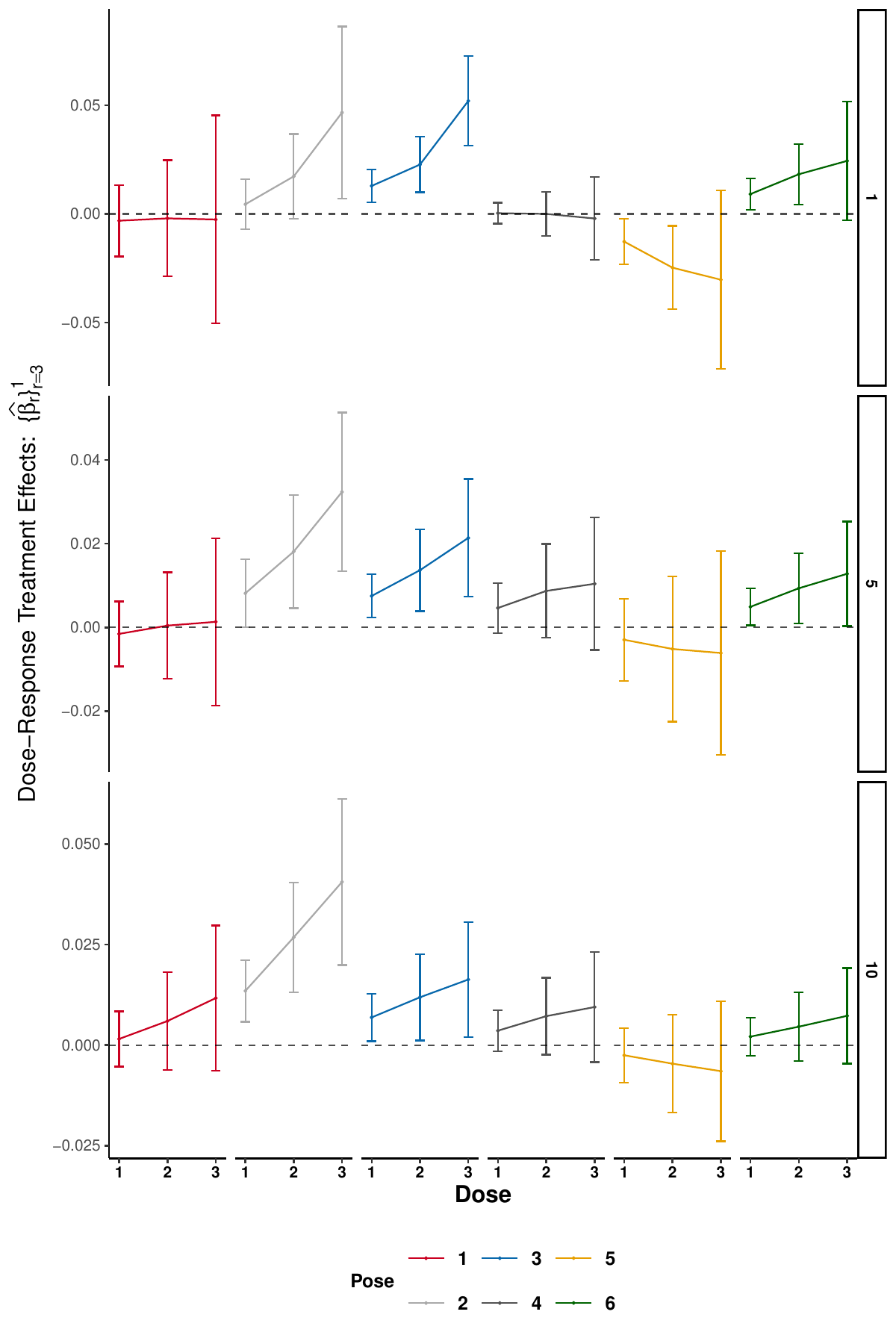}
\caption{\footnotesize\textbf{Our method enables estimation of dose effects.} Plots show coefficient estimates (error bars show 95\% CIs) as a function of dose. Columns and colors indicate the dose. Rows indicate the lag values, $\Delta$, in MSM \refeq{dose_trt_appendix}. Coefficient estimates represent main effects of stimulation with marginal HR-MSM (our approach), $\Gamma = 5$.
} 		\label{fig:dose_optoDA_full}
\end{figure}

\subsection{Replication of Original Author Analysis} \label{appendix_opto_replication}
Finally, Figure \ref{fig:lag_int}D shows that the ``standard'' macro summaries of individual poses exhibit no significant changes, emphasizing how estimands that marginalize over the stochastic dynamic (closed-loop) policies can obscure effects. We visualize the within-subject differences in target pose counts between treatment (opto) and baseline sessions in Figure~\ref{fig:raw_count}: $\sum_{t=1}^T Y_t - \sum_{t=1}^{T_0} Y_t^0$, where $Y_t, Y_t^0 \in \{0,1\}$ are indicators that the animal exhibited the target pose on timepoint $t$ of the treatment and baseline sessions, respectively; $T, T_0 \in \mathbb{Z}$ denote the total number of timepoints in treatment and baseline sessions, respectively. Because of the timepoint definition, the total number of timepoints usually differed within-subject between treatment and baseline sessions (i.e., $T \neq T_0$), but we found comparable analyses of $\frac{1}{T} \sum_{t=1}^T Y_t$ and $\frac{1}{T_0} \sum_{t=1}^{T_0} Y^0_t$ yielded qualitatively similar results to analyses of total counts $\sum_{t=1}^T Y_t - \sum_{t=1}^{T_0} Y_t^0$. For that reason, we present results in terms of total outcome counts to be consistent with the analyses presented in \cite{spont_da}. 

\begin{figure}[H]
	\centering
\includegraphics[width=0.6\linewidth]{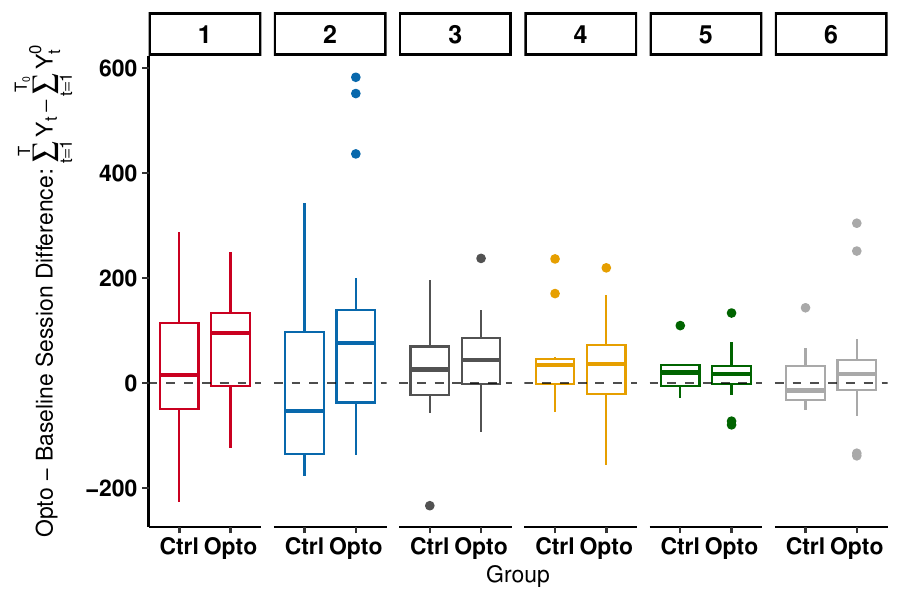} 
\caption{\footnotesize\textbf{Difference between target pose counts within-subject between baseline and treatment (opto) sessions.} Each point in the boxplot shows $\sum_{t=1}^T Y_t - \sum_{t=1}^{T_0} Y_t^0$, where $Y_t, Y_t^0 \in \{0,1\}$ are indicators that the animal exhibited the target pose on timepoint $t$ of the treatment and baseline sessions, respectively. $T, T_0 \in \mathbb{Z}$ were the total number of timepoints in treatment and baseline sessions, respectively. Columns and colors indicate target pose. \textit{Ctrl} and \textit{Opto} indicates negative-control and treatment group subjects, respectively.} 		
\label{fig:raw_count}
\end{figure}



To ensure the differences in results from our analyses and those of the \cite{spont_da} were not a result of the binning strategy they applied, we used the publicly available data, binned and pre-processed by the authors. We fit the following GEE model
\begin{equation} \label{macro_long_binned}
    \mbox{log} \left ( \mathbb{E} [\bar{\tilde{Y}}^s \mid G = g, S = s ] \right ) = \beta_0 + \beta_1 g + \beta_2 s + \beta_3 g \times s,
    \end{equation}

where $\bar{\tilde{Y}}^s = \sum_{b=1}^B \tilde{Y}^s_b$, and $\tilde{Y}^s_b$ is a count of the number of target poses the animal engaged in on bin $b$ of session $s$, and $B$ is the total number of timebins in the session. $S=1$ indicates optogenetics sessions, and $S=0$ indicates baseline sessions. We conducted inference based on the same sandwich variance estimator (``HC0'') we applied in our MSMs to be consistent. We adopted a Poisson likelihood working model for the GEE given that the outcome were counts. The estimate $\widehat{\beta}_3$ thus provides a hypothesis test of the opsin treatment effect. Figure \ref{fig:prePost_bin_fig} shows results similar to the unbinned analysis presented in the main text. This further confirms that the ``standard'' macro summary estimand obscures results.

\begin{figure}[H] 
	\centering
\includegraphics[width=0.50\linewidth]{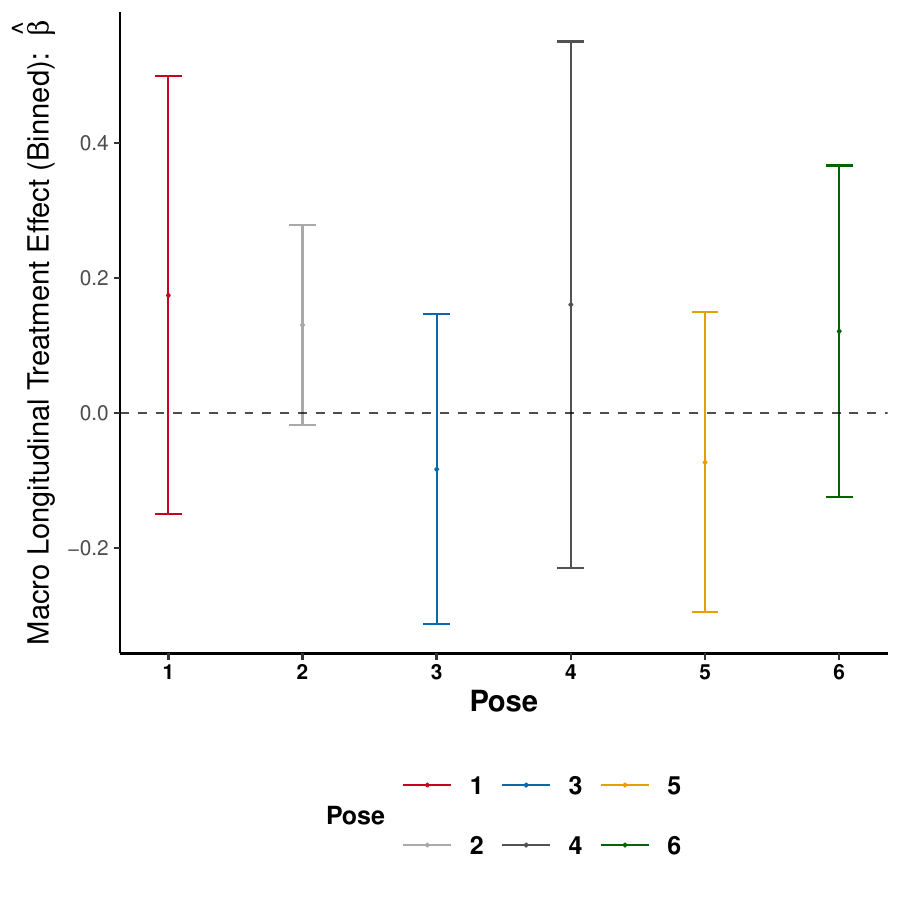}
	\caption{\footnotesize \textbf{Replication of authors results.} Plots show coefficient estimates (error bars show 95\% CIs) as a function of dose. Columns and colors indicate the pose. Analysis was conducted on the binned data provided by the \cite{spont_da}.} 
\label{fig:prePost_bin_fig}
\end{figure}


\subsection{Pre-processing} \label{sec:app-preprocess}
We downloaded the open-source application dataset from \cite{spont_da} from \url{https://zenodo.org/records/7274803}. We used the open-source pre-processing code provided by the authors on Github repo: \\ \url{https://github.com/dattalab/dopamine-reinforces-spontaneous-behavior}. Our analysis code is provided on anonymous GitHub Repo: \url{https://anonymous.4open.science/r/causal_opto-52CD/README.md}. We constructed timepoints as described in \cite{spont_da}. That is, we defined timepoints as consecutive timepoints when the animal was classified to be in a given pose. For our MSM analyses of the treatment (optogenetics)  sessions, we classified ``target pose'' timepoints only if they met the criteria of \cite{spont_da}, which required that the HMM predictions had sufficiently high forward algorithm probabilities of the latent states. This indicator was provided in the opto session dataset provided by the authors. The baseline session data did not, however, include this indicator since no optogenetic stimulation was applied. Thus when recreating the ``standard'' between-group (macro summary) analyses that compared baseline and optogenetics session data, we did not classify target pose timepoints based on whether it met this criteria: we classified the pose based on the most likely latent state prediction but did not require the forward algorithm probabilities met the threshold set by the authors (for either baseline or opto sessions to be consistent). We corresponded closely with the original authors to ensure we were pre-processing data correctly. 

There was a small percentage of timepoints that the authors described eliminating because they were deemed too short. We did not eliminate these timepoints because this created inconsistencies in the pattern of timepoints: it allowed two consecutive timepoints to be of the same timepoint type which broke with the pattern in the remainder of the dataset. This was a very small percentage of the dataset and appeared to have negligible effects on analyses.

To the best of our understanding, the original authors' hypothesis tests were conducted on further processed version of the data that first calculating the number of target pose occurrences in each 30 second bin of the experiment (period). From our understanding, this was done to provide a smoothed time-series of outcome frequency across the course of the opto sessions. We conducted similar analyses to make sure our pre-processing yielded comparable results, but we did not use these pre-processing steps in our MSM analyses or replication of the ``standard'' between-group (macro summary) analyses as it appeared to ``double-count'' target pose occurences that began before the end of one 30 second bin and ended after the start of the subsequent bin. 

As described in \citep{spont_da}, the experiment included two 30 minute replicates of both opto and baseline sessions. We constructed timepoints on each replicate separately (to account for the discontinuity in time between replicates) and then pooled the replicate datasets together to be consistent with the analysis procedures in \cite{spont_da}. We accounted for the longitudinal structure by using sandwich variance estimators in all of our hypothesis tests. 

Finally, we analyzed data from all animals available in the online repository. This included data from two groups of animals, each of which received a virus that causes expression of a different stimulating optogenetic protein. In personal communications with \cite{spont_da}, they felt this was entirely reasonable since both proteins cause neuronal excitation of the dopamine neurons targeted in their experiments (although some properties of the proteins may differ). Their analyses were, however, conducted on only one group of subjects. To avoid biasing our analyses, we completed all of our analyses without examining results on the subset of subjects that \cite{spont_da} analyzed.



\subsection{Treatment–Counfounder Feedback}\label{trt_conf_feedback}

Our reanalysis of the data from \cite{spont_da} provides an illustration of how causal estimands of the mean counterfactuals $\mathbb{E}[Y_t(\bar{A}_t)]$ under the \textit{observed} \textit{stochastic dynamic} regimes, $\bar{A}_t$, can obscure or exaggerate effects (when comparing to causal contrasts of mean counterfactuals under \textit{deterministic} regimes) because of ``treatment–counfounder feedback'' \citep{hernan_causal_2023}. 
This occurs, at least in part, because the intervention influences both the outcome $Y_t$ but also availability, $I_t$, which among other things, influences the total ``dose'' adminstered throughout an experiment (e.g., $\sum_{t=1}^T A_t$). For example, animals that exhibit a naturally high baseline frequency of the target pose, will be available for treatment more often, and thus will receive a higher ``total dose'' throughout the experiment on average. Suppose laser stimulation caused a \textit{reduction} in target pose frequency. Then the laser will cause a larger cumulative reduction in outcome probability among these ``high dose-receiving'' subjects. This could in turn dilute the differences between a treatment and control group if one analyzes \textit{mean} responding across an entire experiment, as done by \cite{spont_da}. 
Conversely, if the laser caused an \textit{increase} in target pose frequency, animals with high target pose frequency responding will tend to receive higher total doses, thereby exaggerating the effects. That is, standard analyses of the ``macro'' longitudinal effects, can yield diluted or exaggerated effects because they marginalize across a \textit{stochastic} dynamic policy. In contrast, sequential excursion effects based on \textit{deterministic} policies can undercover a rich set of causal effects partially because they are estimated by properly accounting for treatment-counfounder feedback.

\section{Further Simulation Details and Results}\label{sec:app_sims}

\subsection{HR-MSM for Simulation Data-Generating Mechanism}\label{app:hr_msm_sim}

In this section, we derive the form of the HR-MSM in~\eqref{eq:msm-sim}, and show that it is implied by the data-generating mechanism of the simulation study. First, observe that for $t \geq 2$,
\begin{align*}
    Y_t(d_{t - 1}, d_t) = \alpha_1 X_{t - 1} + \alpha_2 d_{t - 1}(X_{t - 1}) + \alpha_3 X_t(d_{t - 1}) + \alpha_4 d_{t}(X_t(d_{t - 1})) + \epsilon_t,
\end{align*}
for some exogenous $\epsilon_t \sim \mathcal{N}(0, \sigma_t^2)$, where $X_t(d_{t - 1})$ is the potential $X_t$ value under the intervention setting $A_{t -1}$ to $d_{t - 1}(X_{t - 1})$. Note that $d_{t - 1}(X_{t - 1}) = J_{t - 1}X_{t - 1}$, and by our structural equations, $X_t(d_{t - 1}) \sim \mathrm{Bernoulli}(0.4 + 0.4d_{t - 1}(X_{t - 1}))$, so that $\mathbb{E}(X_t(d_{t - 1})) = 0.4 + 0.2 J_{t - 1}$, recalling that $\mathbb{E}(X_{t - 1}) = 0.5$. Finally, $d_t(X_t(d_{t - 1})) = J_t \cdot \mathrm{Bernoulli}(0.4 + 0.4d_{t - 1}(X_{t - 1}))$, which gives $\mathbb{E}(d_t(X_t(d_{t - 1}))) = \{0.4 + 0.2 J_{t - 1}\}J_t$. Putting everything together, we obtain
\begin{align*}
    \mathbb{E}(Y_t(d_{t - 1}, d_t)) 
    &= 0.5 \alpha_1  + 0.5 J_{t - 1} \alpha_2 + \{0.4 + 0.2 J_{t - 1}\} \{\alpha_3 + J_t \alpha_4\} \\
    &= \{0.5 \alpha_1 + 0.4 \alpha_3\} + \{0.5 \alpha_2 + 0.2 \alpha_3\}J_{t - 1} + 0.4 \alpha_4\,J_{t} + 0.2 \alpha_4 \, J_{t - 1}J_t \\
    & \equiv \beta_{0} + \beta_1 J_{t - 1} + \beta_{2} J_t + \beta_3 J_{t - 1} J_t,
\end{align*}
as claimed.

\subsection{Nuisance Function Estimation for Multiply Robust Estimator}\label{app:nuisance-deriv}

Recall that, for any $(d_{t - 1}, d_t) \in \overline{\mathcal{D}}_{2, t}$, $b_1^{d_t}(H_t) = \mathbb{E}(Y_t \mid H_t, A_t = d_t(H_t))$ and $b_2^{d_{t-1}, d_t}(H_{t - 1}) = \mathbb{E}(b_1^{d_t}(H_t) \mid H_{t - 1}, A_{t - 1} = d_{t - 1}(H_{t - 1}))$. Observe that, as the data-generating mechanism defines $Y_t = \alpha_1 X_{t - 1} + \alpha_2 A_{t - 1}  + \alpha_3 X_t + \alpha_4 A_t + \epsilon_t$, where $\epsilon_t \sim \mathcal{N}(0, \sigma_t^2)$ is some independent auxiliary noise variable, we have
\begin{align*}
    b_1^{d_t}(H_t) &= \mathbb{E}(Y_t \mid H_t, A_t = d_t(H_t)) \\
    &= \mathbb{E}(Y_t \mid X_{t - 1}, A_{t - 1}, X_t, A_t = J_t \, X_t) \\
    &= \alpha_1 X_{t - 1} + \alpha_2 A_{t - 1} + \alpha_3 X_t + \alpha_4 J_t \, X_t.
\end{align*}

Further,
\begin{align*}
\mathbb{E}(b_1^{d_t}(H_t) \mid H_{t - 1}, A_{t - 1})
    &= \mathbb{E}(\alpha_1 X_{t - 1} + \alpha_2 A_{t - 1} + \alpha_3 X_t + \alpha_4 J_t \, X_t \mid X_{t - 2}, A_{t - 2}, X_{t - 1}, A_{t - 1}) \\
    &= \alpha_1 X_{t - 1} + \alpha_2 A_{t - 1} + (\alpha_3 + \alpha_4 J_t) \mathbb{E}(X_t \mid A_{t - 1}) \\
    &= \alpha_1 X_{t - 1} + \alpha_2 A_{t - 1} + (\alpha_3 + \alpha_4 J_t) \{0.4 + 0.4 A_{t - 1}\} \\
    &= 0.4 \alpha_3 + 0.4 \alpha_4 J_t + \alpha_1 X_{t - 1} + \left\{\alpha_2 + 0.4 \alpha_3 + 0.4 \alpha_4 J_t\right\}A_{t - 1}.
\end{align*}
As a consequence,
\begin{align*}
    b_2^{d_{t-1}, d_t}(H_{t - 1}) &= \mathbb{E}(b_1^{d_t}(H_t) \mid H_{t - 1}, A_{t - 1} = d_{t - 1}(H_{t - 1})) \\
    &= 0.4 \alpha_3 + 0.4 \alpha_4 J_t + \alpha_1 X_{t - 1} + \left\{\alpha_2 + 0.4 \alpha_3 + 0.4 \alpha_4 J_t\right\}J_{t - 1} \, X_{t - 1}.
\end{align*}
To consistently model these nuisance functions in our simulations, we use the following multi-step approach: (1) fit a linear model of $Y_t$ on $(X_{t - 1}, A_{t - 1}, X_{t}, A_{t})$ to obtain unbiased estimates $(\widehat{\alpha}_1, \widehat{\alpha}_2, \widehat{\alpha}_3, \widehat{\alpha}_4)$; (2) for $j \in \{0,1\}$, compute $\widehat{b}_1^{d_{t}^{(j)}}(H_t) = \widehat{\alpha}_1 X_{t - 1} + \widehat{\alpha}_2 A_{t - 1} + \widehat{\alpha}_3 X_t + \widehat{\alpha}_4 j \, X_t$, and regress this with a linear model on an intercept, $X_{t - 1}$, and $A_{t - 1}$ to obtain corresponding coefficients $(\widehat{\gamma}_{0,j}, \widehat{\gamma}_{1,j}, \widehat{\gamma}_{2,j})$; (3) for $j_1, j_2 \in \{0,1\}$, compute $\widehat{b}_2^{d_{t - 1}^{(j_1)}, d_{t}^{(j_2)}}(H_{t - 1}) = \widehat{\gamma}_{0,j_2} + \widehat{\gamma}_{1,j_2}X_{t - 1} + \widehat{\gamma}_{2,j_2} j_1 X_{t - 1}$.

\subsection{IPW Simulation Results}
In Appendix Figure~\ref{fig:sim_results_appendix1} we present the same IPW estimator results as in Figure~\ref{fig:sim_results} but with more sample sizes, $n$ and timepoints, $T$. We also present these same simulation results in terms of the $\boldsymbol{\beta}$ coefficients of HR-MSM \refeq{eq:msm-sim}. In the main text we presented results in terms of sequential excursion effect parameters, which are linear combinations of these HR-MSM regression coefficients. 

\begin{figure*}[t] 
	\centering
	\begin{subfigure} 
		\centering
\includegraphics[width=0.99\linewidth]{Figures/dtr_msm_DR_combined.pdf}
	\end{subfigure}
	\caption{\footnotesize \textbf{IPW Simulation Results} Panel columns indicate sample sizes, $n$, and panel rows indicate number of timepoints, $T$ (cluster sizes). [Left] Relative bias associated with each sequential excursion effect. These results show that our estimator is consistent for the target parameters. [Right] 95\% Confidence interval (CI) coverage for the the sequential excursion effects. The coverage of 95\% CIs constructed using one of three established robust variance estimators and our robust variance estimator. The nominal coverage is reached for either large $n$ or large $t$ for all estimators.} 
\label{fig:sim_results_appendix1}
\end{figure*}

\begin{figure*}[t] 
	\centering
		\begin{subfigure} 
		\centering
\includegraphics[width=0.99\linewidth]{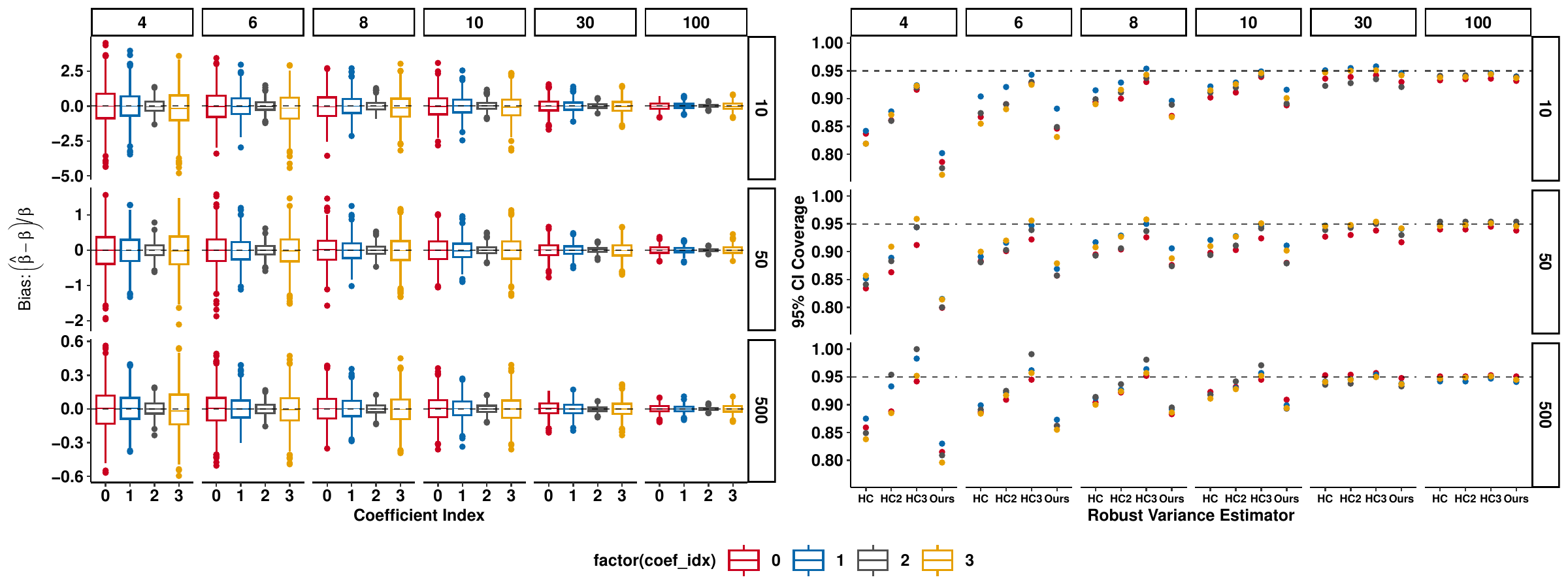}
	\end{subfigure}
	\caption{\footnotesize \textbf{IPW Simulation Results in Terms of MSM Coefficients} Relative bias and 95\% Confidence Interval (CI) coverage of regression coefficients of history-restricted marginal structural model (MSM) \refeq{eq:msm-sim}. Panel columns indicate sample sizes, $n$, and panel rows indicate number of timepoints, $T$ (cluster sizes). [Left] Relative bias associated with each HR-MSM regression coefficient. These results show that our estimator is consistent for the target parameters. [Right] 95\% CI coverage for the MSM coefficients. The coverage of 95\% CIs constructed using one of three established robust variance estimators and our robust variance estimator. The nominal coverage is reached for either large $n$ or large $t$ for all estimators.} 	\label{fig:sim_results_appendix2}
\end{figure*}

\subsection{Multiple Robust Simulation Results}\label{app:MR_sims}

In Appendix Figure~\ref{fig:sim_results_appendix1_mr} we present Multiply Robust estimator results from the simulation experiments presented in Figure~\ref{fig:sim_results}. 
We also present these same simulation results in terms of the $\boldsymbol{\beta}$ coefficients of HR-MSM \refeq{eq:msm-sim}. In the main text we presented results in terms of sequential excursion effect parameters, which are linear combinations of these HR-MSM regression coefficients. 

\begin{figure*}[t] 
	\centering
		\begin{subfigure} 
		\centering
\includegraphics[width=0.99\linewidth]{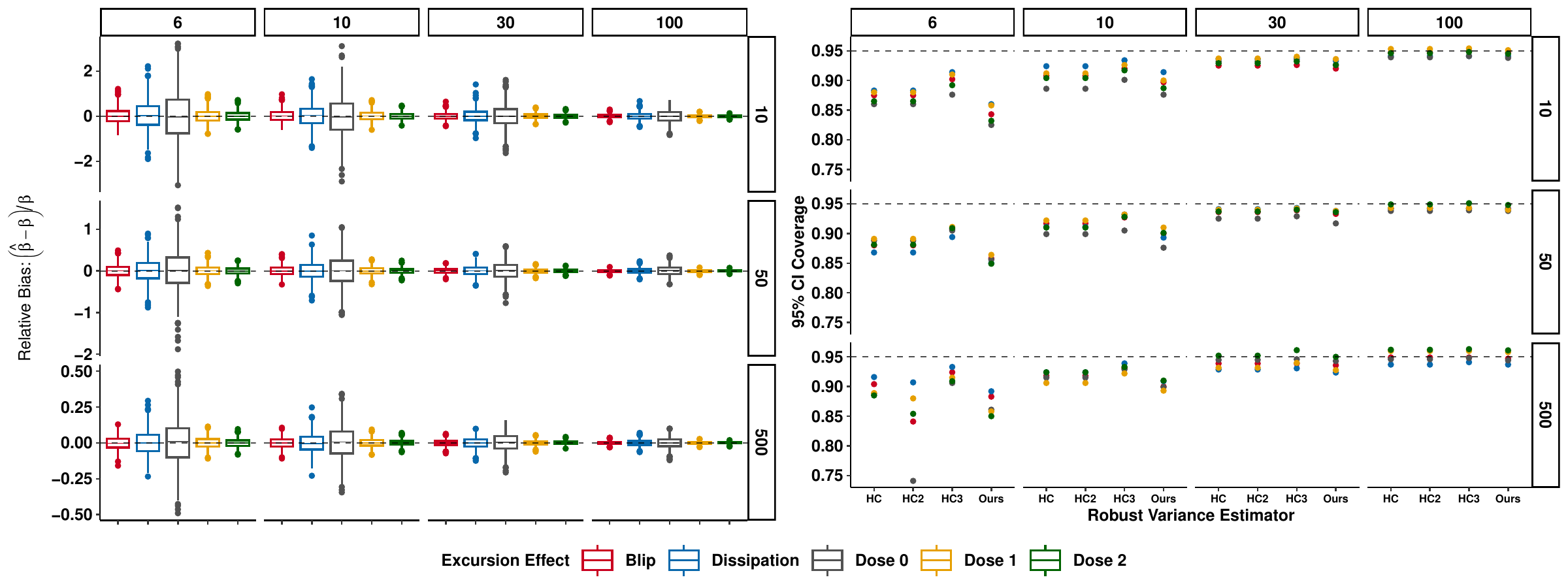}
	\end{subfigure}
	\caption{\footnotesize \textbf{Multiply Robust Simulation Results} Panel columns indicate sample sizes, $n$, and panel rows indicate number of timepoints, $T$ (cluster sizes). [Left] Relative bias associated with each sequential excursion effect. These results show that our estimator is consistent for the target parameters. [Right] 95\% Confidence interval (CI) coverage for the the sequential excursion effects. The coverage of 95\% CIs constructed using one of three established robust variance estimators and our robust variance estimator. The nominal coverage is reached for either large $n$ or large $t$ for all estimators.} 
\label{fig:sim_results_appendix1_mr}
\end{figure*}

\begin{figure*}[t] 
	\centering
		\begin{subfigure} 
		\centering
\includegraphics[width=0.99\linewidth]{Figures/dtr_msm_DR_combined_coefs.pdf}
	\end{subfigure}
	\caption{\footnotesize \textbf{Multiply Robust Simulation Results in Terms of MSM Coefficients} Relative bias and 95\% Confidence Interval (CI) coverage of regression coefficients of history-restricted marginal structural model (MSM) \refeq{eq:msm-sim}. Panel columns indicate sample sizes, $n$, and panel rows indicate number of timepoints, $T$ (cluster sizes). [Left] Relative bias associated with each HR-MSM regression coefficient. These results show that our estimator is consistent for the target parameters. [Right] 95\% CI coverage for the MSM coefficients. The coverage of 95\% CIs constructed using one of three established robust variance estimators and our robust variance estimator. The nominal coverage is reached for either large $n$ or large $t$ for all estimators.} 	\label{fig:sim_results_appendix2_mr}
\end{figure*}

\subsection{IPW and Multiply Robust Comparisons}
We next provide results from the same simulations reformatted to enable easy comparison of IPW and Multiply Robust estimator performance.

\begin{figure*} 
	\centering
		\begin{subfigure}
		\centering
\includegraphics[width=0.95\linewidth]{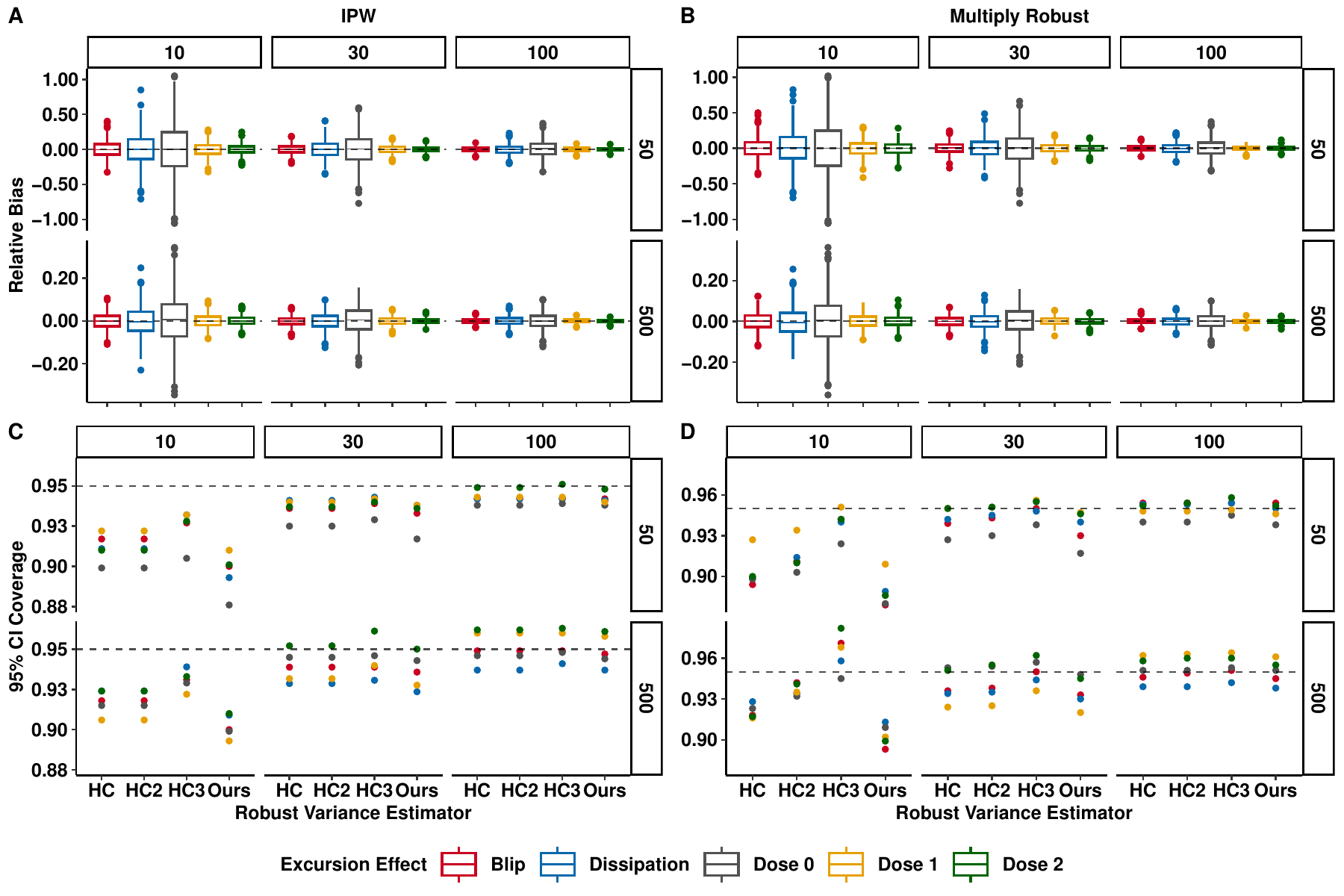}
	\end{subfigure}	\caption{\footnotesize \textbf{Simulation Results} Panel columns indicate sample sizes, $n$ (10, 30, or 100), and panel rows indicate number of timepoints, $T$ (cluster sizes, 50 or 500). [Left] IPW. [Right] Multiple Robust. [A]-[B] Relative bias, defined as $(\widehat{\beta} - \beta) / \beta$, associated with each sequential excursion effect. These results show that our estimator is consistent for the target parameters. [C]-[D] 95\% Confidence interval (CI) coverage for the sequential excursion effects. The coverage of 95\% CIs constructed using one of three established robust variance estimators and our robust variance estimator. The nominal coverage is reached for either large $n$ or large $t$ for all estimators.} 
		\label{fig:sim_results_joined}
\end{figure*}

\clearpage
\section{Sequential Excursion Effect Interpretation and Modeling Choices}\label{app:model_interp}

We hope this work encourages discussion about which causal estimands to target in optogenetics designs, given the drawbacks of analysis conventions that compare summaries of the outcome across groups. 
As we discussed, availability status is both a mediator of past exposure effects, and a confounder of the subsequent exposure-outcome relationship. As a result, estimating certain treatment effects with standard methods, like a regression (with or without including $I_t$ as a covariate), will be biased \citep{robins1986}. 

A natural question is, when is ``standard regression using treatment as a covariate'' a valid approach (e.g., see mixed model applied in \cite{sex_diff}). In cases when analysts fit regressions on the subset of timepoints when optogenetic stimulation is available (i.e., timepoints $\{t:~I_t = 1\}$), this is analogous to availability conditional excursion effects ($\Gamma = 1$) proposed in \cite{Boruvka2018}, albeit the regression coefficient point and variance estimators may differ. As emphasized above, such an analysis in closed-loop studies would yield biased estimates and would not have a causal interpretation in analyses of treatment sequences longer than one timepoint (i.e., $\Gamma > 1$).
However, in open-loop designs, treatment probabilities are marginally randomized and so standard unweighted regression 
(e.g., a GEE, possibly with a non-independence working correlation structure) of outcomes on the model $m$ yields valid estimates of the causal HR-MSM parameters, as discussed in Section~\ref{sec:ident-est}. Since mixed models are frequently employed in neuroscience, it is worth pointing out that these provide both \textit{marginal} and \textit{conditional} (on random effects) estimates of the same parameters in the case of a linear model, but may be biased for the marginal HR-MSM parameters when a non-linear link is used, due to non-collapsibility.\footnote{For an intuitive introduction to this equivalence between conditional and marginal interpretation of regression coefficients of linear mixed models, we recommend sections 2.2 and 7.4 of \cite{fitz_longitudinal}.} We caution, however, against using linear mixed models in closed-loop designs, as it may yield biased estimates for the marginal HR-MSM parameters due to the inherent estimation of the correlation structure. As mentioned in the main text, past work recommends that working independence correlation structures~\citep{liang1986} be used in MSMs, as alternative correlation structures may lead to biased MSM parameter estimates~\citep{tchetgen2012}. 
As discussed in the Section~\ref{sec:recs-opto},
one should never include time-varying variables occurring after timepoint $t-\Gamma+1$ as an effect modifier in the HR-MSM. For example, it would be invalid to include interactions between behavioral variables, measured on timepoints $t \in \{t-\Gamma + 1,...,t\}$, and treatment. Intuitively, this would condition on a mediator, potentially inducing bias. One could, however, include \textit{any} data (e.g., outcome, treatments, time-varying confounders), for all timepoints $t \in [t-\Gamma]$. 

\end{appendices}

\clearpage

\singlespacing

\singlespacing
\bibliographystyle{asa}
{\small \bibliography{refs}}

\end{document}